\documentclass[11pt,a4paper]{article}

\usepackage[disable]{todonotes}
\usepackage{amsmath,amssymb}
\usepackage{amsthm}
\usepackage{graphicx}
\usepackage{tikz}
\usepackage{enumitem}
\usepackage{complexity}
\usepackage{mathrsfs}
\usepackage{comment}
\usepackage{hyperref}
\usepackage[ruled]{algorithm}
\usepackage{algpseudocodex}
\theoremstyle{plain}
\newtheorem{theorem}{Theorem}
\newtheorem{lemma}[theorem]{Lemma}
\newtheorem{proposition}[theorem]{Proposition}
\newtheorem{fact}[theorem]{Fact}
\newtheorem{claim}[theorem]{Claim}
\newtheorem{definition}[theorem]{Definition}
\theoremstyle{remark}
\newtheorem*{claimproof}{Proof of claim}
\theoremstyle{plain}
\newtheorem{problem}{Problem}

\newlist{alphaenumerate}{enumerate}{1}
\setlist[alphaenumerate,1]{label=\alph*.,ref=\alph*,leftmargin=2em}

\let\nolinenumbers\relax % was provided by the LIPIcs class

\newcommand{\MINSP}{Min Pebbling}
\newcommand{\MINWP}{Min Saturated Pebbling}

\newcommand{\inv}{^{\raisebox{.2ex}{$\scriptscriptstyle-1$}}}

\DeclareMathOperator{\out}{out}
\DeclareMathOperator{\inn}{in}
\DeclareMathOperator{\col}{col}

\DeclareMathOperator{\be}{beg}
\DeclareMathOperator{\en}{end}

\DeclareMathOperator{\rank}{rank}
\DeclareMathOperator{\select}{select}

\newcommand{\mt}{\mathcal}
\newcommand{\ms}{\mathscr}
\newcommand{\mf}{\mathfrak}
\newcommand{\mr}{\mathrm}

\DeclareMathOperator{\ESUCC}{succ}

\DeclareMathOperator{\EPREV}{prev}

\DeclareMathOperator{\dd}{.\!.}

\newcommand{\pb}[1]{{\color{blue!70!black}\textbf{PB:} #1} } 
\newcommand{\bri}[1]{{\color{orange!70!black}\textbf{BRI:} #1} } 
\newcommand{\gdv}[1]{{\color{yellow!70!black}\textbf{GDV:} #1} } 
\newcommand{\gyn}[1]{{\color{green!70!black}\textbf{Y.G.:} #1} } 
\renewcommand{\pb}[1]{} 
\renewcommand{\bri}[1]{} 
\renewcommand{\gdv}[1]{} 
\renewcommand{\gyn}[1]{} 

\newcommand{\edgeRank}{\mt R}
\newcommand{\guardianCollection}{\ms G}
\newcommand{\componentSize}{\kappa}
\newcommand{\cutNum}{\partial}
\newcommand{\pathSet}{\mt H}

\title{Compact Path Representation in DAGs via Colored Edge Pebbling}

\author{Paola Bonizzoni$^{1}$ \and Alessio Conte$^{2}$ \and Gianluca Della Vedova$^{1}$ \and Younan Gao$^{1}$ \and Roberto Grossi$^{2}$ \and Brian Riccardi$^{1}$}

\date{%
$^{1}$Department of Computer Science, University of Milano-Bicocca, Italy\\
$^{2}$Department of Computer Science, University of Pisa, Italy\\
\emph{Emails:} paola.bonizzoni@unimib.it, alessio.conte@unipi.it, gianluca.dellavedova@unimib.it, younan.gao@unimib.it, roberto.grossi@unipi.it, brian.riccardi@unimib.it}

\begin{document}

\maketitle

\begin{abstract}
Compactly representing a variation graph is a core problem in computational pangenomics that is usually attacked with techniques  that have been originated on texts and  adapted to graphs. In this paper we propose a new framework that takes a topology-centric perspective instead. A variation  graph is modeled as a directed acyclic graph (DAG) together with a set of distinguished paths, where each path is assigned a distinct color. Our compact representation is centered on \emph{pebbling} the graph, i.e. placing colored pebbles on edges so that every predefined path can be univocally reconstructed from the pebbled edges. In particular, a saturated pebbling marks each chosen edge with every path
(color) traversing it.

We first propose a data structure to represent and query a variation graph  with  storage space depending  on the size of the pebbling.
The supported queries are: (i) \emph{path query}, which recovers a path given its
color, and (ii) \emph{edge query}, which reports the colors of paths traversing
a given edge. 
We then prove that the problem of finding a pebbling of minimum size is solvable 
in polynomial time. On the contrary, we prove that finding a \emph{saturated} 
pebbling of minimum size is \NP-hard, but can be reduced to the minimum-weight set cover problem, allowing us to leverage integer linear programming (ILP) solvers. We show how to exploit saturated pebblings to achieve faster queries times than minimum size pebbling. 

Our framework opens a new algorithmic viewpoint on developing more efficient variation graph representations rooted on the study of the topology of those graphs.
%\bri{TODO: decide which other contributions to add to abstract and paper (e.g. ILP).}
\end{abstract}

\newpage
\section{Introduction}
Computational pangenomics is an emerging field in algorithmic bioinformatics that addresses the limitations of a single linear reference genome by moving towards graph-based representations capable of capturing genomic variation across entire populations~\cite{baaijens2022computational,computational2018computational,liao2023draft}. 
In these models, genome sequences are represented as paths in a graph whose vertices are labeled with subsequences~\cite{baaijens2022computational}. 
Such \emph{pangenome graphs} allow many genomes to be represented within a single structure, but they also introduce algorithmic challenges in representing, indexing, and querying these graphs efficiently. 

A central goal is to obtain compact representations of a collection of genomic sequences that still support useful query operations. In particular, a pangenome graph exhibits   local topological structures, known as \emph{bubbles}  \cite{onodera2013detecting},   where multiple distinct sequence paths diverge from a common starting node and  converge at a common ending node, representing alternative genetic variants. In colored pangenome graphs, called  variation graphs,   each genome sequence is distinguished by assigning a color to the path identifying the sequence. Then, the colored  bubbles in variation graphs capture at a topological level the main variations that differentiate one genomic sequence from another. Capturing bubbles that distinguish a colored path from another is particularly relevant in applications such as sequence to graph alignment to pangenome graphs —where the goal is to efficiently locate the subpath that best matches an input sequence— or path inference from a collection of strings, which involves identifying a path whose sequence-labeled vertices match most strings in the collection \cite{Dentiet-al2026}. \gyn{\cite{Dentiet-al2026} is claimed to be published on WABI2026, but note that its notification day is in August.} 
Although the topological structure of the graph is relevant in these applications, most existing work focuses on indexing genome sequences for graph querying and identifying sequence-labeled paths, whereas comparatively little attention has been devoted to exploiting the graph's topology. Given this graph representation, a natural  question is how to compactly represent   the specific  topology of the graph, combined with the sequence-labeled vertices, to uniquely identify and differentiate the paths underlying the genome sequences. 

Motivated by  this question from a theoretical perspective, in this work, we abstract away sequence information and study compact graph representations based on the graph's topology that can be used to identify edges that differentiate a predefined set of paths. In doing so,   we   use   edge colors of bubbles in the graph to unequivocally identify each colored path in the graph.
Our main objective is to explore  novel algorithmic approaches to indexing and querying variation graphs that  complement  path labeling with the topological structure, therefore leveraging the later for graph-based applications in pangenomics.
Under this aim, we 
formalize  the notion of \emph{colored edge pebbles}. 
Intuitively, a pebble of color $i$ placed on an edge indicates that the edge belongs to the $i$-th predefined path. Then we investigate the following general question on colored graphs: How to place colored pebbles only on few edges of predefined paths while still enabling unambiguosly the full recovery of the path? 
We address this question by  studying the  following computational problem: given a DAG $G$, which we call variation graph,  defined as the triple $(V, E, \mathcal{H})$, where $\mathcal{H}$ is a set of paths such that each edge in $G$ is traversed by at least one path,  each identified by an integer (or \emph{color}) in $\{1, \cdots,|\mathcal{H}|\}$, we want to find  the minimum number of edges in each path $h_i \in \mathcal{H}$ that must be colored with $i$ so that the entire path of that color can be reconstructed without ambiguities. While path reconstruction may follow different criteria, in this paper we adopt a natural one: pebbling under the \emph{unique path} condition, that is, placing pebbles so that each pebbled edge can be reached from the previously pebbled edge of the same color through a unique path. 

%Instead of labeling every edge of every path, we place pebbles on only a subset of edges while still allowing each predefined path to be reconstructed unambiguously. 
Figure~\ref{fig:guardian-lists} illustrates the idea.

\begin{figure}[!h]
    \centering
    \includegraphics[scale=1]{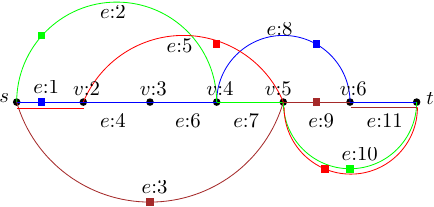}
    \caption{Example of pebbling and guardians.
        The set $\mt H$ consists of four paths (sequence of edges): $h_{\mr{green}}=(e_2,e_7,e_{10})$, $h_{\mr{red}}=(e_1,e_5,e_{10})$, 
        $h_{\mr{blue}}=(e_1, e_4, e_6, e_8, e_{11})$, and $h_{\mr{brown}}=(e_3, e_9, e_{11})$.
        The elements of the chosen pebbling $\mt P$ are pictured as small colored squares over the edges. For example, 
        $(e_5, \mr{red})$, $(e_{10}, \mr{green})$ and $(e_8, \mr{blue})$ are some of the elements of $\mt P$. From $\mt P$ we derive the following
        guardians, one for each path: $\mt G_\mr{green}=(e_2,e_{10})$, $\mt G_\mr{red}=(e_5,e_{10})$, $\mt G_\mr{blue}=(e_1,e_8)$, 
        and $\mt G_\mr{brown}=(e_3,e_9)$.  
    }\label{fig:guardian-lists}
        % $\mt P = \left\{(e_2, 1), (e_7,1), (e_{10},1), (e_1, 2), (e_5, 2), (e_{10}, 2), (e_1, 3), (e_4, 3), (e_6, 3), (e_8, 3), (e_{11}, 3)
        % (e_3, 4), (e_9, 4), (e_{11}, 4)\right\}$
\end{figure}

%The graphs considered in this paper are directed acyclic graphs (DAGs), defined as $G=(V,E,\mt H)$, where $\mt H$ is a set of paths in $G$ such that every edge of $G$ is traversed by at least one path. 
%This assumption is natural in variation graphs, where each edge represents a genomic segment observed in at least one haplotype. 
%Each path is identified by an integer (or \emph{color}) in $\{\,1,\ldots,|\mt H|\,\}$. 
We refer to such graphs as \emph{variation graphs}, and to the process of assigning colored pebbles to selected edges as \emph{pebbling}. 

To ensure that paths can be reconstructed from the pebbled edges, we introduce a structural constraint called the \emph{unique path condition}. 
Intuitively, pebbles are placed so that each pebbled edge of a given color can be reached from the previously pebbled edge of the same color via a unique path in the graph, enabling unambiguous reconstruction of the path. We observe that the study of pebbling problem in digraphs is of more general theoretical interest and it is not limited to pangenomics.  However,   it opens new perspectives that may impact the field of graph pangenomics with practical  topological-oriented  approaches to address  main fundamental questions  on variation-graphs, such as the path-inference problem that asks for distinguishing paths in the graph. While this problem is currently solved at the sequence level by mainly considering  sequence-labeled vertices, the compact graph representation based on pebbling that we propose in the paper,  combined with  sequence-labeling of paths,  could provide   alternative algorithmic solutions to path detection.

\paragraph*{Our contributions.}

We introduce two new computational problems defined on a variation graph $G=(V,E,\mt H)$. 
For a path $h_i \in \mt H$, we call a sequence of edges satisfying the unique path condition a \emph{guardian} for $h_i$. 
In the \MINSP{} problem, the goal is to compute a pebbling for $G$ of minimum cardinality (size). Another natural variant of \MINSP{} arises when we assume that whenever an edge is pebbled by some color, then it is pebbled with every color of a path traversing it. We call this property \emph{saturation} of an edge and \emph{saturated} pebbling assume each edge to satisfy the property. Intuitively, we add more information on pebbled edges and thus we expect to simplify the reconstruction and queries step of paths.  Thus, in the \MINWP{} problem, the goal is to construct a minimum-size saturated pebbling.

The two problems differ fundamentally in how they treat overlaps between paths. 
The \MINSP{} problem follows a \emph{local} optimization principle: each path can be processed independently to obtain a pebbling of minimum size. 
In contrast, \MINWP{} follows a \emph{global} optimization principle: if an edge is selected for pebbling, it is effectively pebbled for every path that traverses it. In other words, the new optimization criteria 
induces dependencies between paths,  while in the \MINSP{} we can assume a  path-wise decomposition.

For \MINSP{}, we present two algorithms with running times $\mt O(|\mt H|\cdot|E|)$ and $\mt O(|E|\cdot|V| + \sum_i |h_i|)$, respectively. 
In contrast, we prove that \MINWP{} is \NP-hard but can be reduced to the minimum-weight set cover problem, allowing us to leverage integer 
linear programming (ILP) solvers.
Moreover, our pebbling framework naturally supports two fundamental queries:
\begin{itemize}
\item \emph{Edge query:} given an edge $e$, return the set $\col(e)=\{\, i \in \{\,1, \dots, |\mt H|\,\} : e \in h_i \,\}$.
\item \emph{Path query:} given a color $i$, sequentially return the edges of the path $h_i$.
\end{itemize}

We design compact data structures for representing a pebbling $\mt P$, whose space usage
depends on the size of $\mt P$, and for supporting edge and path queries. Specifically, the data structure consumes 
$\mt O(|E|+|\mt P|\log|\mt H|)$ additional bits of space on top of $G$, and supports edge queries in $\tilde{O}(|E|+|\pathSet|+|\mt P|)$ 
time\footnote{The notation $\tilde{O}(f(n))$ suppresses polylogarithmic factors.} and path queries in $\mt O(|E|)$ time. In case $\mt P$
is a minimum size pebbling, we improve edge queries to $\tilde{\mt O}(|E|+|\pathSet|)$ time.
We then show that if $\mt P$ is saturated (not necessarily of minimum size), an edge query for $e \in E$ can be answered in $\tilde{\mt O}(|\componentSize(e)| + |\cutNum(e)| + |\mt H|)$ time,
where $\componentSize(e)$ is the set of edges of the connected component of $G$ containing $e$ obtained after removing all pebbled edges, 
and $\cutNum(e)$ are the ``border'' pebbled edges, that is, those pebbled edges incident on vertices of $\componentSize(e)$ from the outside.

\paragraph*{Related work.}
As already observed, the computational problems we formalize in this work are of general theoretical interest, beyond graph pangenomics, which mainly motivated them. Indeed, we recently discovered an independent work \cite{Milani2024} in which a notion similar to guardians has been investigated to define a new parameter for digraphs called ear anonymity. 
Compact topological representations of a pangenome are not common in the literature even though previous works focused on the topology of a variation graph for solving other problems, such as identifying \emph{safe} walks or complex structures in variation-graphs or a minimum path-cover in digraphs \cite{cairo2022safety, caceres2022sparsifying}. Relevant complex structures that are captured by the edges of a guardian list for a colored path, as already stated above are bubbles \cite{onodera2013detecting}. 
%Bubbles represent vertex-disjoint paths between a pair of vertices that are the source and the sink of the bubble. 
% In our work, we formalize the ILP for the \MINWP{} in terms of set-cover for bubbles, showing a connection between the problem of identifying bubbles and colored pebbling. 
In our work, we formulate \MINWP{} as an ILP based on a set-cover representation of bubbles, thereby establishing a connection between bubble identification and colored pebbling.

Compact representations of graphs have been extensively studied in the context of colored de Bruijn graphs~\cite{alanko2023themisto,fan2024fulgor}, where the objective is to construct space-efficient representations of k-mer sets that support matching queries.
Another established line of research focuses on indexing graphs for pattern matching~\cite{equi2023complexity}, including extensions to colored variation graphs~\cite{siren2014indexing}.
A prominent approach for indexing predefined paths in variation graphs is the Graph Burrows--Wheeler Transform (GBWT)~\cite{siren2017indexing,siren2020haplotype}. 

In the GBWT, vertices are treated as characters and each path is interpreted as a string over the alphabet $\{\, 1,\dots,|V|\, \}$, terminated by a distinct end-marker.
The structure is akin to a generalized FM-index~\cite{fm-index-2000} built over the concatenation of the paths. 
Let $r$ denote the number of runs in the Burrows--Wheeler Transform (BWT)~\cite{BurrowsWheeler1994}. 
The GBWT occupies $\mt O(r+|E|+|\mt H|)$ words of space and supports path extraction in $\mt O(|h_i|\cdot t_\mr{LF})$ time, where $t_\mr{LF}$ denotes the time required for LF-mapping queries on the BWT.
%By incorporating the $r$-index \cite{r-index-2020}, the structure can support edge queries in $\mt O(|\col(e)|\cdot t_{LF})$ time, preserving the space complexity.
Edge queries require auxiliary sampling structures, incurring in a space overhead of $\mathcal{O}(\|\mt H\|/d)$ and a query time of $\mathcal{O}(|\col(e)| \cdot d \cdot t_\mr{LF})$. Here, $\|\mt H\|$ denotes the cumulative of all paths in $\mt H$ and $d$ is a sampling parameter.

While the GBWT provides a (multi-)string-based indexing solution, our approach shifts from this sequence-centric indexing paradigm towards a topology-centric representation. By exploiting the \emph{unique path condition}, we identify the minimal structural information required to reconstruct predefined paths. 
Leveraging this condition as a guiding principle for compression constitutes the main conceptual contribution of this work.

\section{Preliminaries}
\label{sect-prelim}
We work in the word-RAM model with word size
$w=\Omega(\log N)$, where $N$ denotes the size of the input instance;
thus, vertex identifiers, edge ranks, color identifiers, and memory
addresses fit in $O(1)$ words, and standard operations on words take
$O(1)$ time.

\paragraph*{Rank and select queries.}  
Given an array $A[1 \dd n]$ over the alphabet $\{\,1, \dots, \sigma\,\}$, the operation $\rank_c(A, j)$ returns the 
number of occurrences of $c$ in $A[1 \dd j]$, for $c \in \{\,1, \dots, \sigma\,\}$.
The operation $\select_c(A, j)$ returns the position of the $j$-th occurrence of $c$ in $A$, if it exists, 
and returns $n+1$ otherwise (\emph{i.e.}, when $A$ contains less than $j$ occurrences of $c$).

\begin{lemma}[{\cite[Theorem 2.2]{golynski2006rank}}]\label{lem-rank-select}
%Given an array $A[1 \dd n]$,\gdv{revised statement, please check} there exists a data structure using $n \log \sigma + o(n \log \sigma)$ bits  that supports $\rank$ queries in $\mt O(\log \log \sigma)$ time and $\select$ queries in $\mt O(1)$ time.
For an array $A[1 \dots n]$ over an alphabet of size $\sigma$, there exists a 
data structure using $n \log \sigma + o(n \log \sigma)$ bits of space that supports 
$\rank$ queries in $O(\log \log \sigma)$ time and $\select$ queries in $O(1)$ time.
\end{lemma}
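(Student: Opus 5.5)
This lemma is cited from Golynski, Munro and Rao (Theorem 2.2 there), so the paper needs no proof of its own. If I were to reconstruct one, I would follow their \emph{chunking} approach. Split $A$ into chunks of $\sigma$ consecutive positions, and handle each chunk independently with a local structure. Then add global bitvectors that let a query find the right chunk.

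\textbf{Global bitvectors.} For each symbol $c$, write the occurrence counts of $c$ in successive chunks in unary, giving a bitvector $B_c$ of length $O(n/\sigma + n_c)$. All the $B_c$ together occupy $O(n)$ bits, which is $o(n\log\sigma)$ for non-constant $\sigma$. With constant-time binary rank/select on $B_c$ we can:
\begin{itemize}
\item reduce $\select_c(A,j)$ to locating the chunk holding the $j$-th occurrence plus a local select;
\item reduce $\rank_c(A,j)$ to counting occurrences in earlier chunks plus a local rank.
\end{itemize}

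\textbf{Inside a chunk.} Store the chunk as a permutation $\pi$: list the positions of occurrences of symbol $1$ in increasing order, then those of symbol $2$, and so on. A bitvector $X$ of length $2\sigma$ records, in unary, how many occurrences each symbol has in the chunk.
\begin{itemize}
\item A local $\select_c(j)$ is then $\pi$ evaluated at index $\select_1(X,c)-c+j$, which is $O(1)$ time if $\pi$ is stored plainly.
\item Access to $A[i]$ needs $\pi^{-1}$. Use the Munro et al.\ succinct permutation representation, which stores $\pi$ in $\sigma\log\sigma + O(\sigma\log\sigma/t)$ bits and gives $\pi^{-1}$ in $O(t)$ time. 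Choosing $t$ slowly growing, e.g.\ $t=\log\log\sigma$, keeps the overhead at $o(\sigma\log\sigma)$ per chunk.
\end{itemize}

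\textbf{Local rank (main obstacle).} Computing $\rank_c(i)$ inside a chunk means finding, among the increasing block of $\pi$ values belonging to symbol $c$, how many are at most $i$. This is a predecessor search on a sorted sublist of $\{1,\dots,\sigma\}$. To do it in $O(\log\log\sigma)$ time within only $o(\sigma\log\sigma)$ extra bits, I would sample every $\log\sigma$-th entry of each symbol's block, which costs $O(\sigma)$ bits overall. Over the samples I would build a y-fast-trie/van Emde Boas-style predecessor structure. This locates the answer within a window of $\log\sigma$ entries. Binary search in that window then finishes in $O(\log\log\sigma)$ time, using $O(1)$-time accesses to $\pi$.

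The delicate part is the space accounting for the sampled predecessor structures: they must sum to $o(\sigma\log\sigma)$ per chunk. Summing over all $n/\sigma$ chunks then gives total space $n\log\sigma + o(n\log\sigma)$, rank in $O(\log\log\sigma)$ time and select in $O(1)$ time. In the paper itself I would simply invoke the citation.
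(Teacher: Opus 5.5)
Your proposal matches the paper: the lemma is imported verbatim from Golynski, Munro and Rao and the paper gives no proof beyond the citation, and your optional sketch is a faithful outline of their construction (chunks of length $\sigma$, unary per-chunk count bitvectors, a per-chunk permutation with the Munro et al.\ inverse shortcuts, and a sampled predecessor structure followed by binary search for rank). The space issue you flag is easily closed, but a naive reading fails. A separate y-fast trie, or a pointer, per symbol and chunk costs $\Theta(\sigma\log\sigma)$ bits per chunk, and a van Emde Boas tree per symbol is larger still; instead, put all $O(\sigma/\log\sigma)$ samples of a chunk into one y-fast trie keyed by the pair (symbol, position), i.e.\ over a universe of size $\sigma^2$, which with $O(\log\sigma)$-bit fields takes $O(\sigma)$ bits and still answers predecessor queries in $O(\log\log\sigma)$ time.
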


% \paragraph*{Integer sorting.}  
% We remind that in the word RAM model, integer sorting can be performed in $o(n \log n)$ time.

% \begin{lemma}[{\cite[Theorem 1]{han2004deterministic}}]\label{lem-sorting}
% Let $A[1 \dd n]$ be an array of $n$ integers from $\{1, \dots, U\}$.  
% The integers of $A$ can be sorted in $\mt O(n \log \log n)$ time.
% \end{lemma}

\paragraph*{Graphs.}
We assume basic familiarity with graph theory. Let $G = (V, E)$ be a directed acyclic graph (DAG).
We assume that the vertices $V = \{\, 1, 2, \dots, |V| \,\}$ are named with their rank in an arbitrary topological
order (so, $1$ is the first, $2$ the second and so on). This order induces an order of the edges (the lexicographical
order of pairs) and associates to every edge its rank. We denote by $\edgeRank : E \to \{\, 1, \dots, |E| \,\}$ the bijection that
associates to every edge its rank. Whenever we talk about a sequence of vertices (or edges) 
we will assume that such sequence is sorted by their ranks.

% We denote by $\out(v)$ (resp. $\inn(v)$) the sequence of edges outgoing from (resp. incoming to) $v$.
% \gyn{We denote by $\out(v)$ (resp. $\inn(v)$) the sequence of heads of the edges outgoing from $v$ (resp. tails of the edges incoming to $v$).}
We denote by $\out(u)$ the sequence of vertices $v$ such that $(u,v)\in E$, and by $\inn(v)$ the sequence of vertices $u$ such that $(u,v)\in E$, both
sorted by the topological order of $V$.
%\bri{head/tail is not defined.}
%\gyn{\textbf{I searched the keywords "tail" and found that it appears at least 19 times. Please fix this issue.}}\bri{Out of these 19 times, only a few are existing in the manuscript and are all in Section 5. Decide by yourself if you prefer to move this definition there or if you want to keep it here.} 
For an edge $(u,v) \in E$ we call $u$ the \emph{tail} and $v$ the \emph{head}.
A \emph{path} $h$ 
of $G$ is a sequence of consecutive edges. We denote by $\be(h)$ (resp. $\en(h)$) the starting (resp. ending) vertex of $h$, 
and by $V(h)$ the ordered sequence of vertices obtained while traversing $h$.

%In order to represent $G$, $\edgeRank(\cdot)$ and $\edgeRank\inv(\cdot)$, we can use the data structure from Lemma~\ref{lem-rank-select}.
%In particular, we have the following result (whose proof is deferred to the Appendix~\ref{app-graph-rep}).

By augmenting a compact representation of $G$ with the rank-and-select data structure of Lemma~\ref{lem-rank-select}, we can also support $\edgeRank(\cdot)$ and $\edgeRank\inv(\cdot)$. This gives the following proposition. %, whose proof is deferred to Appendix~\ref{app-graph-rep}.

%\gdv{3?}
% \begin{proposition}\label{prop-graph-rep}
%     Let $G=(V,E)$ be a DAG. We extend the data structure in \cite[Section 9.1]{navarro2016compact} that uses $\mt O(|E|\log |V|)$ bits to represents $G$ while supporting listing $\out(v)$ and $\inn(v)$ in $\mt O(|\out(v)|)$ and
%     $\mt O(|\inn(v)|)$ time, computing $|\out(v)|$ and $|\inn(v)|$ in $\mt O(1)$ time, to also return
%     $\edgeRank(e)$ for any $e\in E$  in $\mt O(\log\log |V|)$ time and $\edgeRank\inv(j)$ for any $j\in [1, |E|]$ in $\mt O(1)$ time.
% \end{proposition}

\begin{proposition}\label{prop-graph-rep}
Let $G=(V,E)$ be a weakly connected DAG. There exists a data structure that represents $G$ using $\mt O(|E|\log |V|)$ bits and supports the following operations: for any $v\in V$, it reports $\out(v)$ and $\inn(v)$ in $\mt O(1+|\out(v)|)$ and $\mt O(1+|\inn(v)|)$ time, respectively, and returns $|\out(v)|$ and $|\inn(v)|$ in $\mt O(1)$ time; moreover, it returns $\edgeRank(e)$ in $\mt O(\log\log|V|)$ time and $\edgeRank\inv(j)$ in $\mt O(1)$ time for any $e\in E$
and any $j\in[1 \dd |E|]$.
\end{proposition}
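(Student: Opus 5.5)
We build the representation from two sequences over the alphabet $\{\,1,\dots,|V|\,\}$, each indexed with the rank/select structure of Lemma~\ref{lem-rank-select}, plus two bitvectors.

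\textbf{Out-adjacency.} List the edges in increasing rank, i.e.\ lexicographically by (tail, head). Let $H[1\dd|E|]$ be the sequence of their heads. Then $H$ is the concatenation of the lists $\out(1),\out(2),\dots,\out(|V|)$, each sorted topologically. To delimit the blocks, store a bitvector $B_{\out}=1\,0^{|\out(1)|}\,1\,0^{|\out(2)|}\cdots 1\,0^{|\out(|V|)|}$ of length $|V|+|E|$, indexed for constant-time rank/select with $o(|V|+|E|)$ extra bits (the binary case of Lemma~\ref{lem-rank-select}, or any standard bitvector index). The block of $u$ starts at $s_u=\select_1(B_{\out},u)-u+1$ and has length $|\out(u)|=\select_1(B_{\out},u+1)-\select_1(B_{\out},u)-1$. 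Both quantities take $\mt O(1)$ time, so $\out(u)$ is reported by scanning $H[s_u\dd s_u+|\out(u)|-1]$ in $\mt O(1+|\out(u)|)$ time. Since $G$ is weakly connected, $|V|\le|E|+1$, so $B_{\out}$ uses $\mt O(|E|)$ bits. $H$ uses $|E|\log|V|+o(|E|\log|V|)$ bits.

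\textbf{In-adjacency.} Symmetrically, store the sequence $T$ of tails of the edges sorted by (head, tail), together with a bitvector $B_{\inn}$ built the same way. This supports $\inn(v)$ and $|\inn(v)|$ in the same time bounds, and the total space is $\mt O(|E|\log|V|)$ bits.

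\textbf{Rank queries.} By construction, $\edgeRank((u,v))$ is exactly the position of $(u,v)$ in $H$. For $\edgeRank\inv(j)$: the head is $H[j]$, obtained in $\mt O(1)$ via $\select$ (or by storing $H$ plainly as well, which adds only $|E|\lceil\log|V|\rceil$ bits). The tail is $u=\rank_0(B_{\out},\select_0(B_{\out},j))$, since the $j$-th zero of $B_{\out}$ lies in the block of $u$ and the number of ones preceding it is $u$. This is $\mt O(1)$.

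For $\edgeRank((u,v))$ we need the position of $v$ inside the block of $u$, i.e.\ the number of occurrences of values $\le v$ in $H[s_u\dd e_u]$, where $e_u=s_u+|\out(u)|-1$. Because each block has distinct entries (simple graph), we instead use
\[
k=\rank_v(H,e_u)-\rank_v(H,s_u-1)\in\{0,1\}.
\]
Here $k=1$ certifies that the edge exists. When $k=1$, the position is $p=\select_v\bigl(H,\rank_v(H,s_u-1)+1\bigr)$, and $\edgeRank((u,v))=p$. This costs two rank queries at $\mt O(\log\log|V|)$ each, plus $\mt O(1)$ select and bitvector operations, giving $\mt O(\log\log|V|)$ in total.

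\textbf{Main obstacle.} The subtle step is $\edgeRank$. A naive binary search over $\out(u)$ would cost $\mt O(\log|\out(u)|)$. The point is to realize that we only need to locate the one occurrence of the symbol $v$ in the range of $u$, which the $\rank_v$/$\select_v$ combination handles within the $\mt O(\log\log\sigma)$ rank time of Lemma~\ref{lem-rank-select}, with $\sigma=|V|$. The remaining care is to check that weak connectivity gives $|V|=\mt O(|E|)$, which keeps the bitvectors inside the stated $\mt O(|E|\log|V|)$ bit budget.
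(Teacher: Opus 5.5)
Your proposal is correct and is essentially the paper's argument: concatenate the out-lists and index them with Lemma~\ref{lem-rank-select}, obtain $\edgeRank(u,v)$ by a $\rank_v$ at the start of $u$'s block followed by one $\select_v$, and use weak connectivity ($|V|\le|E|+1$) to keep the per-vertex data within $\mt O(|E|\log|V|)$ bits; your bitvectors $B_{\out},B_{\inn}$ and plain copies of $H,T$ simply replace the paper's prefix-sum array $F$, its explicit pair array $Q$ for $\edgeRank\inv$, and its citation of Navarro's adjacency structure. Two small slips need fixing. First, the tail in $\edgeRank\inv(j)$ is $\rank_1(B_{\out},\select_0(B_{\out},j))$, as your own explanation says, not $\rank_0$, which just returns $j$. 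Second, Lemma~\ref{lem-rank-select} provides no constant-time access, and $\select$ alone cannot recover $H[j]$, so the plain copies of $H$ and $T$ must be part of the structure from the start: they are what make the $\mt O(1+|\out(v)|)$ and $\mt O(1+|\inn(v)|)$ scans and the $\mt O(1)$ head lookup work.
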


\begin{proof}
The in- and out-neighbor queries can be supported by the data structure described in \cite[Section 9.1]{navarro2016compact}, which occupies $\mt O(|E|\log |V|)$ bits.
It therefore suffices to present solutions for the $\edgeRank$ and $\edgeRank\inv{}$ queries.

To support $\edgeRank(u,v)$, we build the data structure of Lemma~\ref{lem-rank-select} over the array $D[1 \dd |E|]$ that stores the concatenation of 
$\out(1), \out(2), \dots, \out(|V|)$, enabling $\rank$ and $\select$ operations. 
Note that for every edge $(u,v)$ it holds $D[\edgeRank(u,v)] = v$.

We also store an auxiliary array $F[v] = 1 + \sum_{1 \le j < v} |\out(j)|$ for $1 < v \le |V|+1$ and $F[1] = 1$.
The array $F$ occupies $O(|V|\log(|E|+1))$ bits of space.
Since the underlying undirected graph is connected, we have
$|V|\le |E|+1$, and thus this space is
$O(|E|\log |V|)$ bits. 
%\gyn{I added the space cost of the array $F$.}

Then, it follows that $\edgeRank(u,v) = \select_v\!\bigl(D, \rank_v(D, F[u]-1) + 1\bigr)$. By Lemma~\ref{lem-rank-select}, each $\rank$ query takes $\mt O(\log\log |V|)$ time, while each $\select$ query takes $\mt O(1)$ time. Hence, $\edgeRank(u,v)$ can be answered in $\mt O(\log\log |V|)$ time.

To support $\edgeRank\inv(j)$, we store an array $Q[1 \dd |E|]$ such that $Q[j] = (u,v)$ whenever $\edgeRank(u,v) = j$. 
Then, a $\edgeRank\inv(j)$ query can be answered in $\mt O(1)$ time.

The arrays $D[1 \dd |E|]$, $F[1 \dd |V|+1]$, and $Q[1 \dd |E|]$ together occupy $\mt O(|E|\log |V|)$ bits. The additional data structure on $D$ (from Lemma~\ref{lem-rank-select}) also requires $\mt O(|E|\log |V|)$ bits. Thus, the total space usage is $\mt O(|E|\log |V|)$ bits.
\end{proof}

To ease the exposition, from now on we will identify every edge $e \in E$ with its rank $\edgeRank(e)$.

\begin{definition}[Variation Graph]
    A \emph{variation graph} is a triple $G=(V,E,\mt H)$ where $(V,E)$ is a DAG and $\mt H =\{\,h_1,\dots,h_k\,\}$ 
    is a collection of paths of $(V, E)$ such that every edge $e \in E$ is traversed by at least one path in $\mt H$.
\end{definition}

To avoid confusion, we will always specify whether a path is a generic path in the graph (path of $G$) or if it is a selected
path (path in $\mt H$). For every path $h_i \in \mt H$, we let $V_i = V(h_i)$, $s_i = \be(h_i)$ and
$t_i = \en(h_i)$. If $(u, v)$ and $(v, w)$ are consecutive edges of $h_i$,
we call $(v, w)$ the \emph{successor} of $(u, v)$ in $h_i$, and $(u, v)$ the \emph{predecessor} of $(v, w)$ in
$h_i$. If there exists a path in $G$ from $u$ to $v$, we call $u$ an \emph{ancestor of $v$} and $v$ a \emph{descendant of $u$}.
Moreover, if such a path is unique (or if $u=v$) then $u$ is a \emph{cardinal} ancestor and $v$ is a \emph{cardinal} descendant.

An edge $(u,v) \in E$ is a \emph{transitive edge} if there exist at least two paths in $G$ going from $u$ to $v$, one of which is
$(u,v)$ itself. If no edges of $G$ is transitive, $G$ is called \emph{transitive-free}.

For every $e \in E$ and $v \in V$, we let $\col(e) = \{\, i \in \{\,1, \dots, |\mt H|\,\} : e \in h_i \,\}$ and 
$\col(v) = \{\, i \in \{\,1, \dots, |\mt H|\,\} : v \in V_i \,\}$.
For readability, we slightly abuse the notation and write $\col(u,v)$ in place of $\col((u,v))$ for an edge $(u,v)$.

\section{Guardians and Pebbles}
\label{sect-pebbling-intro}

% In this section we introduce the notions of \emph{guardians} and \emph{pebblings} as a way to reduce the 
% space needed for representing a set of paths in a DAG. Specifically, a guardian is a subset of
% the edges of a specific path that allows for its unambiguous reconstruction. A pebbling is an assignment
% of (possibly zero or multiple) colors to each edge in order to determine in which guardian a specific edge will
% be inserted into. If a pebbling can be thought as the raw information needed to reconstruct the paths, a
% collection of guardians is a ``structured'' high-level representation of a pebbling. In the end of this
% section we present also a lightweight data structure to store and query a pebbling.

In this section, we introduce the notions of \emph{guardians} and \emph{pebblings} as a means of reducing the space required to represent a set of paths in a DAG. A guardian is a path-centric notion: it is a subsequence of the edges of a given path that allows the path to be reconstructed unambiguously. A pebbling provides the corresponding graph-centric view by assigning colors to edges, where each color identifies the guardian of a particular path. Thus, a pebbling represents the collection of guardians as a whole.
At the end of the section, we also present a lightweight data structure for storing and querying a pebbling.

\begin{definition}[Unique Path Condition]\label{def:uni-path}
Let $G=(V,E)$ be a DAG, let $v_0,u_{k+1}\in V$, and let
$\mt G=(e_1,\dots,e_k)$ be a sequence of edges of $G$, where
$e_i=(u_i,v_i)$ for every $i\in[1,k]$.
We say that $(v_0,\mt G,u_{k+1})$ satisfies the \emph{unique path condition} if, for every $i\in[0,k]$, there exists a unique directed path in $G$ from $v_i$ to $u_{i+1}$.
\end{definition}

%\gdvi{I think that guardian (without s) is better for the next definition, since the next definition is pebbling (without s)}
\begin{definition}[Guardian]\label{def:guardians}
    Let $G = (V, E, \mt H)$ be a variation graph, and let $h_i \in \mt H$. A \emph{guardian for $h_i$}
    is a subsequence $\mt G_i \subseteq h_i$ such that $(s_i, \mt G_i, t_i)$ 
    satisfies the unique path condition. A \emph{guardian collection for $G$} is a set 
    $\guardianCollection = \{\,\mt G_1, \dots, \mt G_{|\mt H|}\,\}$ such that $\mt G_i$ is a guardian for $h_i$.
    Given an edge $e$, we define $\guardianCollection(e) = \{\, i : e \in \mt G_i \,\}$.
\end{definition}

% Intuitively, a guardian for $h_i$ is a sequence of edges that allows us to unambiguously reconstruct 
% $h_i$ (or $V_i$). 
% We can represent the selected edges in a guardian for $h_i$ with colored pebbles: a pebble of color $i$ is placed on an edge $e$ whenever $e$ 
% belongs to the guardian for $h_i$ (see Figure~\ref{fig:guardian-lists}).

% \textcolor{red}{Younan: Please check which version of pebbling you prefer. The second one does not require to define the universe $\ms P$. }
% \textcolor{blue}{Brian: This is better, we don't need $\ms P$. Thank you!}
\begin{definition}[Pebbling]\label{def:pebbling}
	Let $G = (V, E, \mt H)$ be a variation graph. A subset $\mt P \subseteq \{(e, i): e \in E, i \in \col(e)\}$ is a \emph{pebbling} of $G$ if $\{\, \mt G_1, \dots, \mt G_{|\mt H|}\, \}$ is a guardian collection for $G$, where each $\mt G_i$ is the ordered sequence of edges $( e : (e, i) \in \mt P)$.
	A pebbling $\mt P$ is \emph{saturated} if for every $e \in E$ we have either 
    $\guardianCollection(e) = \emptyset$ or $\guardianCollection(e) = \col(e)$.\end{definition}

By definition, a pebbling induces a guardian collection for $G$. The \emph{saturation} property ensures that
whenever an edge is pebbled, it is pebbled with the colors associated with every path traversing that edge. This will
become useful when considering queries performance. 
%Our goal will be to reduce the size of the pebbling (and, thus, of the induced guardian collection) in order to obtain a lightweight representation of $\pathSet$. 
Our goal will be to combine small-size pebblings (hence small-size induced guardian collections) with efficient algorithms for path and edge queries, to obtain a lightweight and efficient representation of $\pathSet$.
This leads to the following natural optimization problems:

\begin{problem}[\MINSP{} and \MINWP{}]\label{problem:pebbling}
Given $G$, \MINSP{} asks for a minimum-size pebbling $\mt P^*$ of $G$, whereas \MINWP{} asks for a minimum-size \emph{saturated} pebbling $\mt P_s^*$ of $G$.
\end{problem}

% \begin{problem}[\MINSP{}]\label{problem:local-pebbling}
%     Given $G$, compute a pebbling $\mt P^*$ of $G$ of minimum size.
% \end{problem}

% \begin{problem}[\MINWP{}]\label{problem:global-pebbling}
%     Given $G$, compute a \emph{saturated} pebbling $\mt P_s^*$ of $G$ of minimum size.
% \end{problem}

Note that a non-saturating pebbling is of minimal cardinality if and only if each of its induced guardians is individually minimized. Hence, if we do not require saturation, \MINSP{} can be obtained by independently minimizing the size of each (induced) $\mt G_i$.
Figure~\ref{fig:guardian-lists} shows a \MINSP{} for
the variation graph, consisting of eight pebbles in total.
Observe that instead minimum-size saturated pebbling minimizes pebbled edges of the whole graph needed to reconstruct all selected paths of the graph. Thus  \MINWP{} induces a guardian  $\mt G_i$ for each path that is not necessarily of minimum size, but it minimizes the summation of pebbles used compared to those used in  \MINSP{}.

Figure~\ref{fig:saturated guardian-lists} shows a \MINWP{} for the same variation graph as in Figure~\ref{fig:guardian-lists}.
It consists of nine pebbles, one more than the minimum-size pebbling shown in
Figure~\ref{fig:guardian-lists}.

\begin{figure}[!t]
    \centering
    \includegraphics[scale=1]{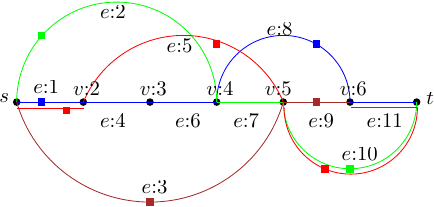}
    \caption{A \MINWP{} for the same variation graph as in Figure~\ref{fig:guardian-lists}. The pebbling consists of nine pebbles, shown as colored squares above the corresponding edges.         
    }\label{fig:saturated guardian-lists}
        % $\mt P = \left\{(e_2, 1), (e_7,1), (e_{10},1), (e_1, 2), (e_5, 2), (e_{10}, 2), (e_1, 3), (e_4, 3), (e_6, 3), (e_8, 3), (e_{11}, 3)
        % (e_3, 4), (e_9, 4), (e_{11}, 4)\right\}$
\end{figure}

\subsection{Data Structure for Representing Pebbled Variation Graphs}

Before going further, we present a space-efficient data structure for representing a guardian collection, alongside
five fundamental operations that form the interface for navigating and querying a pebbled variation graph.

%\gyn{Maybe, we could move the following description of the data structure back to the proof of Theorem, since this paragraph appearing before "Definition" might look strange. The reason that we have it before the theorem in the previous version is we used to defer the whole proof into the appendix but we still wanted to give a high-level idea of the data structure to the reviewers who might not see the appendix. Since we don't have appendix at all, I would suggest to move it back to the proof.}
% \bri{Fine for me!}
% Let $\mt P$ be a pebbling of $G$ and let $\guardianCollection$ be its induced guardian collection. Our data structure for representing $\mt P$  consists of:
% \begin{enumerate}
%     \item An array $P[1 \dd |\mt P|]$ for storing the pebble colors of each edge,
%     \item A bitvector $B[1 \dd |E|+|\mt P|+1]$ containing $|\mt P|$ ones 
%         and $|E|+1$ zeros, and
%     \item Structures supporting $\rank(\cdot)$ and $\select(\cdot)$ over $P$ and $B$ 
%         (see Lemma~\ref{lem-rank-select}).
% \end{enumerate}

% Specifically, the array $P$ consists of the concatenation of $\guardianCollection(1), \dots, \guardianCollection(|E|)$ where every block $\guardianCollection(i)$ is
% sorted increasingly. The bitvector $B$ encodes pebble multiplicities: the number of ones between the $j$-th and 
% $(j+1)$-th zero equals the number of pebbles on edge $j$. Overall, this data structure consumes
% $\mt O(|E| + |\mt P| \log |\mt H|)$ bits of space.

%\gyn{The statement of Lemma~\ref{lem-ds-pebbles} has been restated.}
\begin{lemma}\label{lem-ds-pebbles}
Let $G=(V, E, \mt H)$ be a variation graph and $\mt P$ be a pebbling for $G$.
In addition to a representation of the underlying DAG $(V, E)$ given by Proposition~\ref{prop-graph-rep},  there exists a data structure that uses $\mt O(|E| + |\mt P| \log |\mt H|)$ bits of extra space and supports the following queries:
\begin{alphaenumerate}
    \item For $e\in E$, determine whether $\guardianCollection(e) \neq \emptyset$ in $\mt O(\log \log |V|)$ time,
    \item For $e\in E$, return $\guardianCollection(e)$ in increasing order in $\mt O(\log \log |V| + |\guardianCollection(e)|)$ time,
    \item For $i\in [1 \dd |\mt H|]$, return $|\mt G_i|$ in $\mt O(\log \log |\mt H|)$ time,
    \item For $i\in [1 \dd |\mt H|]$ and $j\in[1 \dd |\mt G_i|]$, return the $j$-th edge of $\mt G_i$ in $\mt O(1)$ time, and
    \item For $i\in [1 \dd |\mt H|]$ and $e\in \mt G_i$, return the rank of $e$ in $\mt G_i$ in $\mt O(\log \log |V| + \log \log |\mt H|)$ time.
\end{alphaenumerate}
\end{lemma}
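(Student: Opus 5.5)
We store three components, each equipped with the rank/select structure of Lemma~\ref{lem-rank-select}. The first is an array $P[1 \dd |\mt P|]$ holding the concatenation $\guardianCollection(1),\dots,\guardianCollection(|E|)$, where each block lists its colors in increasing order. Edges are identified with their ranks, so this is edge-major order. The second is a bitvector $B[1 \dd |E|+|\mt P|]$ that encodes the block lengths in unary: for each edge $j$ in order we write $|\guardianCollection(j)|$ ones followed by a zero. The third is a color-major array $C[1\dd|\mt P|]$ holding the concatenation $\mt G_1,\dots,\mt G_{|\mt H|}$ of edge ranks, together with a unary bitvector $B'$ of length $|\mt P|+|\mt H|$ encoding $|\mt G_i|$.

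The space of $P$ with its structure is $|\mt P|\log|\mt H|(1+o(1))$ bits, and $B$ uses $O(|E|+|\mt P|)$ bits. The array $C$ is the problematic component. Stored explicitly it costs $|\mt P|\log|E|$ bits, which exceeds the budget. So rather than store $C$, I will obtain color-major access by rank/select on $P$ itself. The $j$-th occurrence of color $i$ in $P$ lies in the block of the $j$-th edge of $\mt G_i$, since blocks are in edge order. Hence $\select_i(P,j)$ gives a position $p$ in $O(1)$ time, and the edge is $\rank_0(B,\select_1(B,p))+1$. With a binary rank/select structure on $B$ this is $O(1)$, which gives (d) without storing $C$.

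For (c), $|\mt G_i|$ is the number of occurrences of $i$ in $P$. I will store an array of $|\mt H|$ prefix counts $\mathit{cnt}[i]$, or use $B'$ (just $O(|\mt P|+|\mt H|)$ bits), so that $|\mt G_i|$ is found by one select on $B'$, giving $O(1)$ time. The stated $O(\log\log|\mt H|)$ bound would instead come from computing $\rank_i(P,|\mt P|)$. I will do that to avoid the extra $|\mt H|$ term, since $|\mt H|$ need not be $O(|E|)$.

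For (a) and (b), given $e$ we first compute $j=\edgeRank(e)$ in $O(\log\log|V|)$ time via Proposition~\ref{prop-graph-rep}. Then the block of $j$ spans positions $\select_0(B,j-1)-(j-1)+1$ through $\select_0(B,j)-j$ in $P$, with $\select_0(B,0)=0$. This costs $O(1)$ binary selects. Emptiness is decided by comparing the endpoints, and the block is read off in increasing order in $O(|\guardianCollection(e)|)$ further time.

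For (e), given $i$ and $e\in\mt G_i$, we compute $j=\edgeRank(e)$ and the block start $p$ of $j$ as above. The answer is $\rank_i(P,p-1)+1$, which takes $O(\log\log|V|+\log\log|\mt H|)$ time.

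The main point to get right is the space accounting: avoiding an explicit color-major edge list, and ensuring that the binary structures on $B$ stay within $O(|E|+|\mt P|)$ bits while still giving constant-time select.
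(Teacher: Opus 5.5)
Your proposal is correct and essentially the paper's own proof. It uses the same components: the edge-major array $P$ with a unary bitvector $B$, $\rank_i(P,|\mt P|)$ for (c), $\select_i$ on $P$ followed by $\rank_0\circ\select_1$ on $B$ for (d), and a prefix $\rank_i$ on $P$ for (e). The only differences are cosmetic: you place the $0$ after each block and compute block boundaries as $\select_0(B,j)-j$ rather than via $\rank_1\circ\select_0$. You were right to abandon the color-major array $C$ and the bitvector $B'$, since their space is not always within $\mt O(|E|+|\mt P|\log|\mt H|)$ bits.
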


\begin{proof}
The proof is organized in two parts: data-structure and queries implementation.

\subparagraph{Data structure.}

% Consider the following data structures that occupy $\mt O(|E|+|P|\log|\mt H|)$ bits of space:
Let $\mt P$ be a pebbling of $G$ and let $\guardianCollection$ be its induced guardian collection. Our data structure for representing $\mt P$  consists of:
\begin{enumerate}
    \item The array $P[1 \dd |\mt P|] = \ms G(1) \cdot \ms G(2) \cdots \ms G(|E|)$,
    \item The bit vector $B[1 \dd |E| + |\mt P| + 1] = 0 \cdot 1^{|\ms G(1)|} \cdot 0 \cdot 1^{|\ms G(2)|} \cdots 0 \cdot 1^{|\ms G(|E|)|} \cdot 0$, and
    %\item Structures to support $\select_c$ and $\rank_c$ over $B$ for $c\in \{0, 1\}$, and $\select_i$ and $\rank_i$ over $P$
        %for $i = 1, 2, \dots, |\mt H|$.
    \item A succinct rank/select structure over $B$ supporting
$\rank_c$ and $\select_c$ for each $c\in\{0,1\}$, and a rank/select
structure over $P$ supporting $\rank_i$ and $\select_i$ for each
$i\in[1\dd |\mt H|]$. 
%\gyn{The last item has been polished.}
\end{enumerate}

Specifically, the array $P$ consists of the concatenation of $\guardianCollection(1), \dots, \guardianCollection(|E|)$ where every block $\guardianCollection(i)$ is
sorted increasingly. The bitvector $B$ encodes pebble multiplicities: the number of ones between the $j$-th and 
$(j+1)$-th zero equals the number of pebbles on edge $j$. Overall, this data structure consumes
$\mt O(|E| + |\mt P| \log |\mt H|)$ bits of space.

\subparagraph{Queries.}

\emph{Queries (a) and (b).}
Let $e \in E$ be the query edge and $p = \mt R(e)$, $j = \rank_1(B, \select_0(B, p)) + 1$
and $j' = \rank_1(B, \select_0(B, p+1))$. All these operations take
$O(\log\log|V|)$ time. Note that $e$ is pebbled if and only if $j \leq j'$ (query (a))
and, by definition of $P$, the list $\mt G(e)$ is stored consecutively in $P[j \dd j']$ (query (b)). Hence, 
query (a) can be supported in $\mt O(\log\log |V|)$ time and query (b) can be supported with the additional
$\mt O(|\mt G(e)|)$ time to scan $P[j \dd j']$.

\emph{Query (c).} 
The number of pebbles of color $i$ is given by $\rank_i(P, |\mt P|)$, computable in $\mt O(\log \log |\mt H|)$ time.  

\emph{Query (d).}
Let $i \in [1 \dd \mt |H|]$ and $j \in [1 \dd |\mt G_i|]$. Remember that edges are sorted according to 
the topological order of $V$, and $P$ concatenates the colored pebbles according to this order. 
Thus, if $x, y \in [1 \dd |\mt P|]$ are such that $P[x] = P[y] = i$, and $e_x$ and $e_y$ are the 
corresponding edges, then $x < y$ if and only if $e_x < e_y$. Therefore, the $j$-th edge pebbled with
color $i$, say $e_j$, corresponds to the $e_j$-th block of 1s of bitvector $B$.

Let $p_j$ be the position in $P$ of such $j$-th edge with color $i$. Clearly $p_j = \select_i(P, j)$,
which takes $\mt O(1)$ time. Furthermore, the $p_j$-th bit set to 1 in $B$ is in the $e_j$-th group
of 1s in $B$. Therefore, $e_j = \rank_0(B, \select_1(B, p_j))$, which takes $\mt O(1)$ time.
Finally, returning the actual edge $\edgeRank\inv(e_j)$ by Proposition~\ref{prop-graph-rep} also takes $\mt O(1)$ time.

\emph{Query (e).}
Let $i \in [1 \dd |\mt H|]$ and $e \in \mt G_i$. Let $p = \mt R(e)$, which
takes $O(\log\log|V|)$ time, and $j = \rank_1(B, \select_0(B, p))$, which takes $\mt O(1)$ time, and notice that $j+1$ corresponds to the position in
$P$ of the first element of $\ms G(e)$. Therefore, the rank of $e$ in
$\mt G_i$ is equivalent to the number of occurrences of $i$ in $P[1 \dd j]$, that is $\rank_i(P, j)+1$ which takes $\mt O(\log\log|\mt H|)$ time.

\emph{Managing out-of-bound inputs.}
Queries (d) and (e) assume that the inputs $j$ and $e$ satisfy
$j \in [1 \dd |\mt G_i|]$ and $e \in \mt G_i$. Here we prove that this is
not a loss of generality, since the out-of-bound can be detected and
addressed without worsening the time complexity.

During query (d), $j > |\mt G_i|$ if and only if $p_j = \select_i(P, j) = |\mt P|+1$.

During query (e), let $j = \rank_1(B, \select_0(B, \mt R(e)))$ and $j' = \rank_1(B, \select_0(B, \mt R(e)+1))$. Then $e \notin \mt G_i$ if and only if
$\rank_i(P, j) = \rank_i(P, j')$.
\end{proof}

%\textcolor{blue}{Bri: If we have some space left, I think it would be good to add a sketch of how to support the
%less-obvious queries (c), (d), (e).}

\section{Solving \MINSP{}}
\label{sect-pebbling-to-compute}

The main result of this section is  Theorem \ref{theorem-local-for-all} stating the complexity of two different algorithms for solving \MINSP{}. Both algorithms rely
on the idea of determining if there exist multiple paths from one vertex to another.
The first one precomputes a table based solely on the topology of the graph, and
then uses this table for each path in $\mt H$. The second algorithm, instead, while traversing
each path $h \in \mt H$ it computes the information needed to determine whether there exist multiple paths in $G$
between vertices of $h$.

\begin{theorem}\label{theorem-local-for-all}
    Let $G=(V, E, \mt H)$ be a variation graph. A minimum-size pebbling can be computed in time
    $\mt O\!\left(\min\left\{|V|\cdot|E|+\sum |V_i|, \; |\mt H|\cdot|E|\right\}\right) \subseteq \mt O((|V|+|\mt H|)\cdot|E|)$.
\end{theorem}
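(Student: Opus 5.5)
Since the size of a pebbling is $\sum_i |\mt G_i|$ and the guardian condition for $h_i$ involves only $h_i$ and $G$, I will minimize each guardian independently and then show how to do all of them within the claimed time.

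\textbf{Step 1: a greedy rule is optimal for one path.} Fix $h_i=(e_1,\dots,e_m)$ with $e_j=(x_{j-1},x_j)$, so $s_i=x_0$ and $t_i=x_m$. Selecting a subsequence of edges amounts to choosing cut points along $V_i$, and the unique path condition asks that each consecutive segment of $h_i$ be the unique $G$-path between its endpoints. The key observation is monotonicity. If there is a unique path from $x_a$ to $x_b$ with $a\le a'\le b'\le b$, then there is a unique path from $x_{a'}$ to $x_{b'}$. Otherwise two distinct $x_{a'}$--$x_{b'}$ paths, extended by the fixed prefix and suffix of $h_i$, would give two distinct $x_a$--$x_b$ paths; acyclicity guarantees these extensions are still simple paths.

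Given monotonicity, I use the greedy rule. Start at $x_0$, extend as far as possible while the path from the current start to $x_j$ stays unique, and when it fails at $x_j$, pebble $e_j$ and restart at $x_j$. A standard exchange argument shows optimality: after $k$ pebbles, the greedy end of the current segment is at least as far along $h_i$ as in any other valid guardian. This is the same argument as for interval-stabbing or minimum refuelling stops. A small subtlety is that a pebbled edge $(u_j,v_j)$ ends one segment at $u_j$ and starts the next at $v_j$, so the edge itself is "free"; the exchange argument should be phrased on edge indices to account for this.

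\textbf{Step 2a: the $\mt O(|V|\cdot|E|+\sum|V_i|)$ algorithm.} For every source $u$, compute $c_u(v)=\min\{2,\#\text{paths } u\to v\}$ for all $v$ by one DP in topological order. This costs $\mt O(|E|)$ per source, so $\mt O(|V|\cdot|E|)$ overall, and gives a $|V|\times|V|$ table. Each greedy test is then a table lookup "is $c_{\text{start}}(x_j)=1$?", costing $\mt O(1)$ per step of $h_i$, hence $\mt O(|V_i|)$ per path.

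\textbf{Step 2b: the $\mt O(|\mt H|\cdot|E|)$ algorithm.} For each path, run the same capped path-count DP incrementally from the current start vertex. Restarting the DP at every new start would cost too much, so this is the step I expect to be the main obstacle. My plan is to exploit the structure of the segments: the segments are consecutive along the topological order, since $x_0<x_1<\dots$. Let $x_a$ be the current start. I propagate counts from $x_a$ only over vertices whose topological rank lies between $x_a$ and the current position $x_j$, processing vertices in topological order up to $x_j$. When a reset occurs at $e_j$, the next segment starts at $x_j$ and only vertices after $x_j$ are relevant. The vertex windows of successive segments are therefore essentially disjoint in topological order, so each edge of $G$ is relaxed $\mt O(1)$ times per path. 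Care is needed at the boundary edge, where the window of one segment ends at $u_j$ and the next begins at $v_j$. This yields $\mt O(|E|)$ per path, after an $\mt O(|V|+|E|)$ topological sort.

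Taking the better of the two algorithms gives the minimum in the statement. Using $\sum|V_i|\le|\mt H|\,|V|\le|\mt H|\,|E|+|\mt H|$ (by weak connectivity) then yields the containment in $\mt O((|V|+|\mt H|)\cdot|E|)$.
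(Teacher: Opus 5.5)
Your proposal is correct and follows essentially the paper's route: each guardian is minimized independently by a greedy rule whose optimality comes from an exchange argument (your explicit monotonicity lemma is what the induction in Lemma~\ref{lem:corr-poly1} uses implicitly), the first bound uses a precomputed all-pairs path-count table capped at $2$, and the second uses, for each path, a single topological sweep whose capped counts are reset at every pebbled edge, so each edge is examined once. The only difference is direction: you sweep forward from $s_i$ and restart at the head of each pebbled edge, whereas Algorithms~\ref{algo:poly1} and~\ref{algo:poly2} sweep backward from $t_i$ with the target $pre$ set to the tail of the last pebbled edge, and their filter $v\le pre$ mirrors your restriction to the window that begins at the current start vertex.
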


In the next sections we give a brief description of the algorithms, whose pseudocode are given in Algorithm~\ref{algo:poly1}
and Algorithm~\ref{algo:poly2}.

\subsection{The first algorithm}

\begin{algorithm}
\caption{Compute matrix $A$ that checks if there exist multiple paths.}\label{algo:dp}
\begin{algorithmic}[1]
    \nolinenumbers
    \Function{ComputeA}{$V, E$}
        \State $A[1 \dd |V|, 1 \dd |V|] \gets$ matrix initialized by $0$
        \State $A[|V|, |V|] \gets 1$
        \For{$u = |V|-1$ down to $1$}
            \State $A[u,u] \gets 1$
            \For{$v = u+1$ to $|V|$}
                \For{$x \in \out(u)$}
                    \State $A[u, v] \gets A[u, v] + A[x, v]$
                \EndFor
                \If{$A[u, v] > 2$}
                    \State $A[u, v] \gets 2$
                \EndIf
            \EndFor
        \EndFor
        \State\Return $A$ 
        %\gyn{To be consistent with your algorithm, I change the definition of $\out(u)$ as a list of vertices.}\bri{Super.}
    \EndFunction
\end{algorithmic}
\end{algorithm}

\begin{algorithm}
\caption{First polynomial-time algorithm for solving \MINSP.}\label{algo:poly1}
\begin{algorithmic}[1]
\nolinenumbers
    \Function{PebblePath}{$A, h_i$}
        \State $V_i \gets $ array of vertices $[v_1, \dots, v_k]$ of $h_i$
        \State $\mt P_i \gets \emptyset$
        \State $pre \gets t_i$
        \For{$j = k-1$ down to $1$}
            \State $v \gets V_i[j]$
            \If{$A[v, pre] = 2$}\Comment{Multiple paths from $v_j$ to $pre$.}
                \State Prepend $((v, V_i[j+1]), i)$ to $\mt P_i$
                \State $pre \gets v$
            \EndIf
        \EndFor
        \State\Return $\mt P_i$
    \EndFunction
\end{algorithmic}
\end{algorithm}

% \gyn{Maybe, we may remove the title "High-level idea" and regard the paragraph followed as the description of the algorithm. Otherwise, reviewers would question why not give the detailed description of the algorithm.}
\paragraph*{Algorithm.}

We precompute a matrix $A \in \{0,1,2\}^{V \times V}$, where $A_{u,v}$ is the number of paths in $G$ from $u$ to $v$ (where this number
is limited to 2 if it is larger). 
Algorithm \ref{algo:dp} works by filling the matrix $A$ by visiting the graph following the topological  order from the vertex preceding $|V|$ down to the vertex $1$. In this way, the needed entries of the recurrence to fill $A$  are already pre-computed. 
Intuitively, a pebble is required whenever a vertex $u$ has multiple outgoing paths to the next ``checkpoint'' $v$, as this multiplicity violates the unique path condition required for reconstruction.
To compute $A$, we adapt the standard dynamic programming (DP) algorithm for counting paths between all pairs of vertices in a DAG, whose pseudocode is given in Algorithm~\ref{algo:dp}.
%\gyn{I think we could add more details in how to compute $A$. For the details, you could reference to the appendix of the stacs version.}
% \bri{Isn't this folklore? It's the most basic DP on DAGs.}

Afterwards, we compute a sub-pebbling $\mt P_i$ inducing a minimum-size guardian $\mt G_i$ for $h_i \in \mt H$.  
Two variables are used: $\mt P_i$, the sub-pebbling that we are constructing, and $pre$, the starting vertex of the most recently pebbled edge.  
Initially, set $pre = t_i$ and $\mt P_i = \emptyset$.  
We traverse $V_i = (v_1, \dots, v_k)$ backwards, from $v_{k-1}$ down to $v_1$ (remember that $v_1 = s_i$ and $v_k = t_i$).
For every $j$, check whether there is a unique path in $G$ from $v_j$ to $pre$ (\emph{i.e.}, $A_{v_j, pre} = 1$).
If not, edge $(v_j, v_{j+1})$ is pre-pended to $\mt P_i$ (that is, it is pebbled), and we set $pre = v_j$.  
After the traversal, $\mt P_i$ is returned. 
In the end, the set $\mt P = \bigcup \mt P_i$ is a minimum-size pebbling for $G$.

\paragraph*{Correctness.}

Since there is no interplay between guardians of different colors, a minimum size pebbling
can be obtained by independently minimizing the size of the guardians. Thus, correctness follows
from the following lemma.

\begin{lemma}\label{lem:corr-poly1}
    Let $\mt G_i$ be the edges pebbled by color $i$ from Algorithm~\ref{algo:poly1}. 
    Then, $\mt G_i$ is a minimum-size guardian for $h_i$. 
\end{lemma}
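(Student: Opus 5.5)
Two things must be shown: (i) the output $\mt G_i$ is a guardian for $h_i$, i.e.\ $(s_i,\mt G_i,t_i)$ satisfies the unique path condition; (ii) every guardian of $h_i$ has at least $|\mt G_i|$ edges. Throughout I use that $A[u,v]=\min\{2,\#\text{paths } u\to v\}$, the standard DP on a DAG processed in reverse topological order. I also use a monotonicity fact: if there are at least two paths from $v_j$ to $w$, where $w$ is a later vertex of $h_i$, then there are at least two paths from $v_j$ to any later vertex $w'$ of $h_i$. This holds because each of those paths can be extended by the subpath of $h_i$ from $w$ to $w'$, and distinct prefixes stay distinct.

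\emph{Validity.} Let the pebbled edges be $(v_{j_1},v_{j_1+1}),\dots,(v_{j_m},v_{j_m+1})$ with $j_1<\dots<j_m$. The segments to check are $v_{j_\ell+1}\to v_{j_{\ell+1}}$, together with $s_i\to v_{j_1}$ and $v_{j_m+1}\to t_i$. Fix a segment whose right endpoint $pre$ is $v_{j_{\ell+1}}$ (or $t_i$), and whose left endpoint is $v_{j_\ell+1}$ (or $s_i=v_1$). When the backward scan processed each $v_j$ with $j_\ell<j<j_{\ell+1}$ (or $1\le j<j_1$), the variable $pre$ equalled this right endpoint, and no edge was pebbled, so $A[v_j,pre]=1$. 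Taking $j=j_\ell+1$ (or $j=1$) gives a unique path from the left endpoint to $pre$, unless the left endpoint equals $pre$, in which case uniqueness is trivial. Hence the unique path condition holds.

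\emph{Minimality.} I would use an exchange/greedy-stays-ahead argument scanning backwards. Let $\mt G'$ be any guardian with tails at indices $j'_1<\dots<j'_{m'}$. The claim, by induction on $r$, is that $m'\ge r$ and $j'_{m'-r+1}\ge j_{m-r+1}$. In words, the $r$-th from last pebble of the greedy solution is at least as far left as the $r$-th from last pebble of $\mt G'$. For the step, suppose $\mt G'$ had no pebble with index in $[j_{m-r+1},\,j'_{m'-r+2})$, taking $j'_{m'+1}=k$ in the base case. Then the segment of $\mt G'$ containing $v_{j_{m-r+1}}$ ends at a vertex $w=v_{j'_{m'-r+2}}$ at or after the greedy checkpoint $pre=v_{j_{m-r+2}}$ (inductive hypothesis), and it starts at or before $v_{j_{m-r+1}}$. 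The greedy found two paths $v_{j_{m-r+1}}\to pre$. By monotonicity there are two paths $v_{j_{m-r+1}}\to w$. Prepending the $h_i$-subpath from the segment's start then gives two paths across the segment, contradicting the unique path condition. So $\mt G'$ has a pebble at index $\ge j_{m-r+1}$ below its previous one, which gives the induction step, and then $m'\ge m$.

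The main obstacle is getting the indices right in this induction: handling the segment start being a head $v_{j'+1}$ versus the scanned vertex, and the base case with $t_i$. The prepending/extension argument also relies on distinct paths remaining distinct after concatenation, which holds since $G$ is a DAG and the concatenations are at a single shared vertex.
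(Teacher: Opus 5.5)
Your proof is correct and takes essentially the same approach as the paper: a backward greedy-stays-ahead induction showing that the $r$-th-from-last greedy pebble is never to the right of the corresponding pebble of any other guardian, with paths extended or prepended along $h_i$ to produce the contradiction. You are slightly more careful than the paper in three places: you justify the validity of the output, which the paper only asserts ``by construction''; you prepend the $h_i$-subpath from the segment's start, which is a head vertex, where the paper argues loosely from the tail; and you fold the paper's final contradiction at $s_i$ into the induction through the clause $m'\ge r$.
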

\begin{proof}
    Let $\mt G_i = (e_1, e_2, \dots, e_n)$. By construction, $(s_i, \mt G_i, t_i)$ satisfies
    the unique path condition, hence $\mt G_i$ is a guardian for $h_i$. 

    Suppose, for the sake of contradiction, that there exists a guardian $\mt G' = (e'_1, e'_2, \dots, e'_m)$
    for $h_i$ of size $m < n$. Append $(t_i, t_i)$ to both $\mt G_i$ and $\mt G'$ as their last element, and
    assume the rank of $(t_i,t_i)$ to be the largest among all edges in $h_i$.
    We also let $e_j = (u_j, v_j)$ and $e'_j = (u'_j, v'_j)$ for every valid $j$. 

    We first prove by induction that for all $0 \leq j \leq m$ it holds $e_{n+1-j} \leq e'_{m+1-j}$.

    (Base) For $j = 0$ we have $e_{n+1} = (t_i, t_i) = e_{m+1}$.
    
    (Step) Assume the claim holds for $j$ and consider the case for $j+1$. By Algorithm~\ref{algo:poly1}
        there exist multiple paths going from $u_{n-j}$ to $u_{n+1-j}$. Moreover, since by induction
        hypothesis $e_{n+1-j} \leq e'_{m+1-j}$, there are multiple paths going from $u_{n-j}$ to $u'_{m+1-j}$.
        If it were $e'_{m-j} < e_{n-j}$ we would have multiple paths going from $u'_{m-j}$ to $u'_{m+1-j}$,
        contradicting the fact that $\mt G'$ is a guardian. Therefore, $e_{n-j} \leq e'_{m-j}$.

    Now consider $e_{n+1-m}$ and $e'_{m+1-m} = e'_1$. By the above claim we have $e_{n+1-m} \leq e'_1$.
    Since we assumed $m < n$, it holds $e_1 < e_{n+1-m}$ and by Algorithm~\ref{algo:poly1} we have
    multiple paths going from $s_i$ to $u_{n+1-m}$ and, so, also to $u'_1$, contradicting the unique
    path condition on $\mt G'$.
\end{proof}

\paragraph*{Complexity.}
The matrix $A$ is computed in $\mt O(|V|\cdot|E|)$ time, and it can be used for every $h_i \in \mt H$. 
Once we have computed $A$, processing each vertex of $V_i$ takes $\mt O(1)$ time. Thus, in total, a minimum-size pebbling 
can be constructed in $\mt O\!\left(|V|\cdot|E|+\sum |V_i|\right) \subseteq \mt O((|V|+|\mt H|)\cdot|E|)$ time.

\subsection{The second algorithm}

\begin{algorithm}
\caption{Second polynomial-time algorithm for solving \MINSP.}\label{algo:poly2}
\begin{algorithmic}[1]
\nolinenumbers
    \Function{PebblePath}{$h_i$}
        \State $V_i \gets $ array of vertices $[v_1, \dots, v_k]$ of $h_i$
        \State $\mt P_i \gets \emptyset$
        \State $pre \gets t_i$
        \State $pre.paths \gets 1$
        \State $pnt \gets k$\Comment{Points to the most recently discovered element of $V_i$.}
        \For{$u = t_i-1$ down to $s_i$}
            \State $u.paths \gets 0$
            \For{$v \in \out(u)$}\Comment{Number of paths from $u$ to $pre$.}
                \If{$v \leq pre$}
                    \State $u.paths \gets \min\{\,2, u.paths + v.paths\,\}$
                \EndIf
            \EndFor
            \If{$u = V_i[pnt-1]$}\Comment{New edge to be pebbled.}
            \If{$u.paths = 2$}
                \State $v \gets V_i[pnt]$\Comment{Unique vertex in $\out(u) \cap V_i$.}
                \State Prepend $((u,v), i)$ to $\mt P_i$
                \State $pre \gets u$
                \State $pre.paths \gets 1$
            \EndIf
                \State $pnt \gets pnt-1$
            \EndIf
        \EndFor
        \State\Return $\mt P_i$
    \EndFunction
\end{algorithmic}
\end{algorithm}

\paragraph*{Algorithm.}

Similar to the first algorithm, the second one also computes a sub-pebbling $\mt P_i$ inducing a minimum-size guardian $\mt G_i$ for 
each $h_i \in \pathSet$ independently, with the main difference being that, instead of precomputing the information about multiple 
paths between pairs of vertices $u,v \in V$, we compute the same information (on-the-fly) for vertices $u \in V$ relative to the 
starting vertex $pre \in V_i$ of the most recently pebbled edge in $h_i$ (we still process the graph from right to left).

Given $h_i \in \mt H$, we create the sub-pebbling $\mt P_i$ that will store the pebbled edges of $h_i$, and a variable $pre$ such that
$(pre, x)$ is the most recently pebbled edge of $h_i$, for some $x \in V_i$. Initially, $\mt P_i = \emptyset$ and $pre = t_i$.

The algorithm processes vertices in reverse order, from $t_i-1$ down to $s_i$ (which are the only vertices that may be found on path $h_i$) including those not in $V_i$.  
At each visited vertex $u$, we keep a variable $u.paths$ to store the number of paths in $G$ from $u$ to $pre$ (we limit $u.paths \leq 2$).
It is not difficult to see that $u.paths = \min\{\,2, n_u\,\}$, where $n_u = \sum_{v \in \out(u), v \leq pre} v.paths$.
Indeed, if $v \in \out(u)$ is such that $v \leq pre$ then $v.paths$ is the number of paths in $G$ from $v$ to $pre$ (upper-limited to 2).
In case $u \in V_i$ and $u.paths = 2$, then we have to pebble the unique edge $(u,v)$ outgoing from $u$ that lies on $h_i$, 
and we set $pre = u$ and $pre.paths = 1$ (representing the empty path from $pre$ to itself).

After the traversal, $\mt P_i$ is returned.
In the end, the set $\mt P = \bigcup \mt P_i$ is a minimum-size pebbling for $G$.

%\gyn{If you gave the correctness of the first algorithm, then you don't need to prove the correctness of the algorithm here. At least, I think the correctness of one of the two algorithms should be given.}
% \bri{Done! What should we do with correctness for this algorithm? Shall we briefly refer it to the first algorithm or do we skip it entirely?}
% \gyn{I would suggest to prove that both algorithms output the same pebbling; thereby, the correctness follows from algorithm 1.}
% \gyn{I would suggest proving that the two algorithms output the same pebbling; the correctness of Algorithm~2 would then follow directly from that of Algorithm~1.}\bri{Done.}

\paragraph*{Correctness}

Note that at the iteration for $u$ we correctly compute in $u.paths$ the (bounded) number
of paths going from $u$ to $pre$ (we use the same recurrence of Algorithm~\ref{algo:dp}).
Moreover, whenever $u$ is a vertex on path $h_i$ and there are multiple paths going from
$u$ to $pre$ (that is, $u.paths = 2$) we pebble the unique edge $(u,v)$ that lies on $h_i$.
This is precisely the pebbling condition we use also in Algorithm~\ref{algo:poly1}, hence,
in the end, the pebbling generated by Algorithm~\ref{algo:poly2} will be the same.

\paragraph*{Complexity}

Fix a path $h_i \in \mt H$. The algorithm processes every vertex and every edge of $G$ once. In case $u.paths = 2$, checking whether $u \in V_i$
and finding the unique (if it exists) edge $(u,v)$ to be pebbled can be done in constant time by maintaining a pointer to the most recently discovered
vertex of $V_i$ while processing the vertices backwards. Thus, in total, the algorithm requires $\mt O(|E|)$ time in the worst case.
Summing over all paths $h \in \mt H$, we obtain $\mt O(|\mt H| \cdot |E|)$.

% \gyn{I think we could briefly mention that both algorithms output the same min-size pebbling, although there might be multiple min-size pebbling for a given graph.}
% \bri{Agree.}

As a final remark, notice that there may exist minimum size pebblings different from the one
generated by both Algorithm~\ref{algo:poly1} and Algorithm~\ref{algo:poly2}.

\section{Complexity of \MINWP{} }
\label{sect-NP-hard}

% In this section, we prove that \MINWP{} is NP-hard even when every predefined path has at most nine edges.
% For space constraints, our ILP solution to the \MINWP{} problem is deferred to Appendix \ref{app-mini-weight-pebbling}. 
% %A DAG $G=(V,E)$ is called \emph{transitive-free} if, for every edge $(u,v)\in E$, the unique path from $u$ to $v$ is the edge $(u,v)$ itself. 
% All proofs omitted in this section can be found in Appendix \ref{app-NP-Hard}.

We first show in Section~\ref{sect-np-hard} that computing a minimum-size \emph{saturated} pebbling is \NP-hard, even when every path $h_i \in \mt H$ contains only a constant number of edges. We then show in Section~\ref{sect-set-cover} that \MINWP{} admits a reduction to the minimum-weight set cover problem.
Finally, in Section~\ref{sect-ILP}, we formulate \MINWP{} as an integer linear program.
%All omitted proofs in the section are provided in Appendix \ref{app-NP-Hard}.

%Due to space constraints, the reduction from \MINWP{} to minimum-weight set cover is deferred to Appendix 
%\ref{app-mini-weight-pebbling}. All other omitted proofs are provided in Appendix \ref{app-NP-Hard}.

\subsection{The \NP-Hardness Proof}
\label{sect-np-hard}

% \subsection{A simplified \NP-hard proof \textcolor{red}{Y.G.: TODO}}

\paragraph*{Overview of the reduction.}
We establish the \NP-hardness of \MINWP{} through a reduction from {Vertex Cover} on cubic triangle-free  graphs. First,
Lemma~\ref{prop-transitive-free} shows that the edges of any cubic triangle-free
graph can be oriented in linear time so as to obtain a transitive-free DAG $G$. Since the underlying undirected graph of $G$ is the original cubic triangle-free graph, the two graphs have the same minimum vertex-cover size.
% \pb{\(G\). 
% Then the minimum vertex-cover of the underlying undirected graph of \(G\) is the same of the original cubic triangle-free graph}\gyn{Fixed.} 
Then, from the resulting DAG \(G\), we construct a variation graph \(G'\) by introducing, for each edge of \(G\), the reduction gadget shown in Fig.~\ref{fig-reduction-weight}. 
%Lemma~\ref{lem-me-unique} establishes a property  of the predefined paths in $G'$ that follows from the transitive-free condition imposed on \(G\).
In
Lemma~\ref{lem-eq-saturated-min-degree-weight}, we establish the connection
between the size of a minimum saturated pebbling of \(G'\) and the minimum
total degree of the vertices in a vertex cover of the underlying undirected graph of $G'$. 
% \pb{if you mention here a property you have to state what property otherwise skip this statement}\gyn{The property is deleted from the overview.}
Finally, Theorem~\ref{theorem-non-st-mwp-NP-Hard} gives the complete reduction from
\textsc{Vertex Cover} on cubic triangle-free graphs to \MINWP{} and concludes
that \MINWP{} is \NP-hard, even when every predefined path contains at most
three edges.

\begin{lemma}\label{prop-transitive-free}
There is an $\mt O(|E|+|V|)$-time algorithm that transforms any undirected cubic triangle-free graph $X=(V,E)$ into a transitive-free DAG by orienting the edges of $E$.
\end{lemma}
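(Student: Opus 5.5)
The plan is to orient the edges according to a proper $3$-coloring of $X$. First I compute a proper coloring $c : V \to \{1,2,3\}$. Then I orient every edge $\{u,v\}$ from the endpoint of smaller color to the endpoint of larger color. A proper coloring gives adjacent vertices different colors, so this rule orients every edge in exactly one direction.

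\emph{The orientation is a DAG.} Colors strictly increase along every directed edge, so they strictly increase along every directed path. Hence there is no directed cycle. The same observation shows that every directed path has at most two edges, because it visits at most three distinct colors.

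\emph{The orientation is transitive-free.} Suppose $(u,v)$ were a transitive edge. Then there would be a second directed path from $u$ to $v$ of length at least $2$. By the bound above it would have length exactly $2$, say $u \to w \to v$. Together with $(u,v)$, the vertices $u,w,v$ would form a triangle in $X$, contradicting triangle-freeness.

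\emph{Existence of the coloring.} The maximum degree of $X$ is $3$. No connected component is complete, since the only connected cubic complete graph is $K_4$, which contains triangles. No component is an odd cycle, since cycles are not cubic. So Brooks' theorem gives a proper $3$-coloring. Components can be treated independently.

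The main obstacle is making the coloring step run in $\mt O(|V|+|E|)$ time. My first choice is to invoke a known linear-time algorithmic version of Brooks' theorem for maximum degree $3$ (e.g.\ Skulrattanakulchai's). If a self-contained argument is preferred, I would follow Lovász's proof for each connected component.
\begin{enumerate}
    \item Pick a vertex $r$ and two neighbors $a,b$ of $r$. Triangle-freeness makes $a$ and $b$ non-adjacent, because otherwise $r,a,b$ would form a triangle.
    \item Check by a single BFS/DFS whether $X-\{a,b\}$ stays connected.
    \item If it does, give $a$ and $b$ color $1$. Then color the remaining vertices greedily in reverse BFS order from $r$. Every vertex other than $r$ has an uncolored neighbor (its BFS parent) when it is colored, so it sees at most two colored neighbors. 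The vertex $r$ sees $a$ and $b$ with the same color, so at most two distinct colors appear among its neighbors.
    \item If $X-\{a,b\}$ is disconnected, use the low-degree / cut-vertex case analysis from Lovász's proof. Each case is handled with $\mt O(1)$ BFS/DFS passes.
\end{enumerate}
With the coloring in hand, the orientation itself takes one pass over $E$, so the total time is $\mt O(|V|+|E|)$.
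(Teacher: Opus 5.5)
Your proof is correct and follows the paper's argument: get a proper $3$-coloring from Brooks' theorem in linear time (the paper cites Baetz--Wood where you cite Skulrattanakulchai), orient each edge toward the larger color, and use the fact that colors strictly increase along directed paths both for acyclicity and to force any alternative $u$--$v$ path to have length two and close a triangle. Your optional self-contained Lov\'asz-style fallback is only sketched in the case where $X-\{a,b\}$ is disconnected, and that case analysis is not obviously $O(1)$ traversals; it is unnecessary, though, once you cite the linear-time Brooks algorithm, as the paper does.
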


\begin{proof}
	Since $X$ is a cubic triangle-free, $X$ cannot be either a cycle of odd length or a complete graph.
	By Brooks's theorem~\cite{brooks1941colouring}, the graph $X$ is $3$-colorable; that is, each vertex of $X$ can be assigned one of three colors so that no two adjacent vertices receive the same color.
	
	We first compute a proper vertex coloring 	$c:V\to\{1,2,3\}$.
	By \cite{baetz2014brooks}, a Brooks coloring can be computed in $O(|V|+|E|)$ time.
	Then, for every edge $\{u,v\}\in E$, orient the edge from the endpoint of smaller color to the endpoint of larger color. 
    
    Let $G$ be the resulting directed graph.
	Since the vertex colors strictly increase along every directed edge, they also
	strictly increase along every directed path. Hence, $G$ is acyclic.
	
	It remains to prove that $G$ is transitive-free. Suppose, for the sake of contradiction, that an oriented edge $(u,v)$ is transitive. Then,
	besides the edge $(u,v)$, there exists a directed path from $u$ to $v$ that contains at least two edges.
	However, since the vertex colors along this path strictly increase and only three colors are available, the path consists of exactly two edges. Then, this alternative path and the edge $(u, v)$ together form a triangle in $X$, contradicting the assumption that $X$ is triangle-free.
	
	Thus, no edge of $G$ is transitive, and $G$ is a transitive-free DAG.
	The coloring and orientation can be computed in $O(|V|+|E|)$ time.
\end{proof}

\paragraph*{Construction.}  
%Let $G=(V,E)$ be a transitive-free DAG. 
We first describe the construction for an arbitrary transitive-free DAG $G=(V,E)$.
We denote by $\Delta$ the maximum degree among all vertices in $V$. 
In the reduction of Theorem~\ref{theorem-non-st-mwp-NP-Hard}, $G$ is obtained by orienting a cubic triangle-free graph according to Lemma~\ref{prop-transitive-free}, and therefore $\Delta=3$.

% \pb{Am I wrong or you said that you start the reduction from a cubic transitive free DAG?}\gyn{You are right. The DAG used in the final reduction is obtained from a cubic triangle-free graph, so its underlying undirected graph is cubic and $\Delta=3$. We stated the construction more generally for an arbitrary transitive-free DAG because Lemma~12 do not require the cubic constraint. I have now clarified this point and explicitly noted that $\Delta=3$ in the final reduction.} 

From $G$, we construct an instance $G'=(V',E',\mt H)$ of the pebbling problem by adding elements to $V'$ and $E'$ based on the gadget detailed 
in Fig.~\ref{fig-reduction-weight}. 
Specifically, for every vertex $u \in V$ we create two vertices $\underline{u}$ and $\overline{u}$, and an edge
$\mf n_u = (\underline{u}, \overline{u})$ called the \emph{node edge}; for every directed edge $e =(u,v) \in E$ we create $2\Delta$ vertices 
$u^e_1, u^e_2, \dots,  u^e_\Delta$, $v^e_1, v^e_2, \dots, v^e_\Delta$, and the following $2(\Delta+1)$ edges:  
\begin{itemize}
    \item The \emph{arc edge} $\mf a_{u,v} = (\overline{u},\underline{v})$,
    \item The \emph{shortcut edge} $\mf s_{u,v} = (\underline{u},\overline{v})$, and
    \item The \emph{fake edges} $\overline{\mf f^e_i} = (u^e_i,\overline{u})$ and $\underline{\mf f^e_i} = (\underline{v}, v^e_i)$,
        for each $1 \leq i \leq \Delta$.
\end{itemize}
In the following, we will refer to a generic node (resp. arc, shortcut, fake) edge as an $\mf n$- (resp. $\mf a$-, $\mf s$-, $\mf f$-) edge.
The set $\mt H$ contains the following paths:
\begin{itemize}
    \item The \emph{shortcut paths} $\mf S_{u,v} = (\mf s_{u,v})$, for every $(u, v) \in E$,
    \item The \emph{arc paths} $\mf A_{u,v} = (\mf n_u, \mf a_{u,v}, \mf n_v)$, for every $(u, v) \in E$, and
    \item The \emph{fake paths} $\mf F^e_i = (\overline{\mf f^e_i}, \mf a_{u,v}, \underline{\mf f^e_i})$ for all $e \in E$ and $1 \leq i \leq \Delta$.
\end{itemize}
The symbols $\mf S_{u,v}$, $\mf A_{u,v}$, and $\mf F^e_i$ will be used as pebble colors later.

\begin{figure}%[!h]
    \centering
    \usetikzlibrary{calc,arrows.meta} % For coordinates and custom arrows

\begin{tikzpicture}[scale=.8]
  \tikzset{vertex/.style={circle, fill=black, minimum size=3.5pt, inner sep=0pt}}
  \tikzset{blue edge/.style={draw=blue, thick, -{Latex[scale=0.8]}}}
  \tikzset{red edge/.style={draw=red, thick, -{Latex[scale=0.8]}}}
  \tikzset{aux edge/.style={draw=black, -{Latex[scale=0.8]}}}

  \coordinate (n1_pos) at (0,0);
  \coordinate (n2_pos) at (3.0,0);
  \coordinate (n3_pos) at (6.0,0);
  \coordinate (n4_pos) at (9.0,0);

  \node[vertex] (n1) at (n1_pos) {};
  \node[vertex] (n2) at (n2_pos) {};
  \node[vertex] (n3) at (n3_pos) {};
  \node[vertex] (n4) at (n4_pos) {};

  \draw[blue edge] (n1) -> (n2) node[midway, above, text=blue, font=\large] {$\mf n_u$};
  \draw[thick] (n2) -> (n3) node[midway, below, font=\large] {$\mf a_{u,v}$};
  \draw[blue edge] (n3) -> (n4) node[midway, above, text=blue, font=\large] {$\mf n_v$};
  \draw[red edge, bend left=28] (n1) to node[midway, above, text=red, yshift=8pt, font=\large] {$\mf s_{u,v}$} (n4);

  \def\laboffset{0.1}
  \node[font=\large] (u1_l) at ($(n2)+(-2.2, 1.5)$) {$\overline{\mf f^e_1}$};
  \node[font=\large] (u2_l) at ($(n2)+(-2.5, 0.9)$) {$\overline{\mf f^e_2}$};
  \node[font=\large] (ud_m1_l) at ($(n2)+(-2.5, -0.9)$) {$\overline{\mf f^e{_{_{\!\!\!{\Delta-1}}}}}$};
  \node[font=\large] (ud_l) at ($(n2)+(-2.2, -1.5)$) {$\overline{\mf f^e{_{_{\!\!\!\Delta}}}}$};
  
  \foreach \i in {u1_l, u2_l, ud_m1_l, ud_l} {
    \draw[aux edge] ($(\i.east)+(\laboffset,0)$) -- (n2);
  }

  \def\edgedest{2.0}
  \coordinate (v_dest_1) at ($(n3)+(\edgedest, 1.5)$);
  \coordinate (v_dest_2) at ($(n3)+(\edgedest, 0.9)$);
  \coordinate (v_dest_3) at ($(n3)+(\edgedest, -0.9)$);
  \coordinate (v_dest_4) at ($(n3)+(\edgedest, -1.5)$);
  
  \foreach \i in {1,2,3,4} {
    \draw[aux edge] (n3) -- (v_dest_\i);
  }
  
  \def\labelright{0.3}
  \node[font=\large, right] at ($(v_dest_1)+(\labelright,0)$) {$\underline{\mf f^e_1}$};
  \node[font=\large, right] at ($(v_dest_2)+(\labelright,0)$) {$\underline{\mf f^e_2}$};
  \node[font=\large, right] at ($(v_dest_3)+(\labelright,0)$) {$\underline{\mf f^e_{_{\!{\Delta-1}}}}$};
  \node[font=\large, right] at ($(v_dest_4)+(\labelright,0)$) {$\underline{\mf f^e_{_{{\!\Delta}}}}$};

\end{tikzpicture}
    \caption{The reduction gadget associated with the edge $e = (u, v)\in E$. It consists of the node edges $\mf n_u$ and $\mf n_v$, the arc edge $\mf a_{u, v}$, the shortcut edge $\mf s_{u, v}$, and $\Delta$ incoming and $\Delta$ outgoing fake edges. The corresponding predefined paths are the shortcut path $(\underline{u}, \mf s_{u, v}, \overline{v})$, the arc path $(\underline{u},\mf n_u, \mf a_{u,v}, \mf n_v,\overline{v})$, and the fake paths $(u^e_i, \overline{\mf f^e_i}, \mf a_{u,v}, \underline{\mf f^e_i},v^e_i)$ for each $i\in[1..\Delta]$.} 
    \label{fig-reduction-weight}
\end{figure}

Since $G$ is a DAG, the constructed graph $G'$ is also a DAG. Most importantly, $G'$ is of polynomial size \emph{w.r.t.} $G$, as it 
contains $2|V| + 2\Delta|E|$ vertices, $2(\Delta+1)|E| + |V|$ edges, and the set $\mt H$ consists of $(2+\Delta)|E|$ paths.
The instance $G' = (V', E', \mt H)$ can be constructed in polynomial time.
By construction, every edge in $E'$ is traversed by at least one path of $\mt H$.  
Moreover, it follows that $|\col(\mf n_u)|=\deg(u)\le \Delta$, $|\col(\mf a_{u, v})|=\Delta+1$, and $|\col(\mf s_{u, v})|=1$.

We will always refer to $G = (V, E)$ as a transitive-free DAG and to $G' = (V', E', \mt H)$ as the variation graph constructed from $G$.

\begin{lemma}\label{lem-me-unique}
    Let $(u, v) \in E$. Then, (1) $(\mf n_u, \mf a_{u,v})$ is the unique path in $G'$ from $\underline{u}$ to $\underline{v}$, and 
    (2) $(\mf a_{u,v}, \mf n_v)$ is the unique path in $G'$ from $\overline{u}$ to $\overline{v}$.
\end{lemma}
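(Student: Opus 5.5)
We plan to characterize the paths of $G'$ in terms of paths of $G$ and then use transitive-freeness of $G$. We begin by listing the out-edges of each type of vertex in $G'$. From $\underline{w}$ there are the node edge $\mf n_w$, the shortcut edges $\mf s_{w,x}$ for $(w,x)\in E$, and the fake edges $\underline{\mf f^{e}_i}$ for edges $e$ entering $w$. From $\overline{w}$ the only out-edges are the arc edges $\mf a_{w,x}$ for $(w,x)\in E$. Every fake vertex $u^e_i$ is a source with a single out-edge. Every fake vertex $v^e_i$ is a sink.

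The consequence is that any path of $G'$ starting at $\underline{u}$ or $\overline{u}$ can only reach the target $\underline{v}$ or $\overline{v}$ if it avoids the $\underline{\mf f}$-edges, since those lead into sinks. It therefore uses only $\mf n$-, $\mf a$- and $\mf s$-edges. This gives a projection map from such a path of $G'$ to a walk of $G$: each $\mf a_{w,x}$ and each $\mf s_{w,x}$ maps to the edge $(w,x)$, and each $\mf n_w$ is dropped. The projection is a directed path of $G$, because consecutive edges of $G'$ chain through the gadget vertices of a common original vertex $w$.

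For part (1), take any path $P$ in $G'$ from $\underline{u}$ to $\underline{v}$ and let $Q$ be its projection, a path from $u$ to $v$ in $G$. Since $(u,v)\in E$ and $G$ is transitive-free, $Q$ must be the single edge $(u,v)$. So $P$ contains exactly one $\mf a$- or $\mf s$-edge, and it corresponds to $(u,v)$. We then rule out the cases one at a time.
\begin{itemize}
\item If that edge is $\mf s_{u,v}$, then $P$ reaches $\overline{v}$. The only out-edges of $\overline{v}$ are arc edges, which are not allowed, so $P$ cannot get back to $\underline{v}$. In fact $\underline{v}$ precedes $\overline{v}$ in the DAG, so $\overline{v}$ cannot reach $\underline{v}$ at all.
\item Otherwise the edge is $\mf a_{u,v}$. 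To use it, $P$ must first go from $\underline{u}$ to $\overline{u}$, and without any $\mf a$- or $\mf s$-edge the only option is $\mf n_u$. After $\mf a_{u,v}$ the path is already at $\underline{v}$.
\end{itemize}
Hence $P=(\mf n_u,\mf a_{u,v})$.

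Part (2) follows the same argument. A path from $\overline{u}$ to $\overline{v}$ projects onto the single edge $(u,v)$, so it must begin with $\mf a_{u,v}$ (the only kind of out-edge at $\overline{u}$) and then reach $\overline{v}$ from $\underline{v}$ without using any further $\mf a$- or $\mf s$-edge, which forces $\mf n_v$.

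The main obstacle is making the projection argument rigorous. We have to check that between two consecutive $\mf a$/$\mf s$-edges of a path the intermediate $\mf n$-edges keep us at the same original vertex, so that the projection really is a path of $G$. We also have to check that no route through fake vertices can re-enter the core part of the graph. Both facts follow from the in/out-edge catalogue above.
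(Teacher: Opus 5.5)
Your proof is correct and follows the paper's approach. You exclude the fake edges, project a path of $G'$ onto a path of $G$ from $u$ to $v$, and invoke transitive-freeness of $G$. The one difference is minor: the paper first replaces each shortcut $\mf s_{x,y}$ by $(\mf n_x,\mf a_{x,y},\mf n_y)$ and rules out shortcuts by an edge count, whereas you project $\mf s$-edges directly and dismiss $\mf s_{u,v}$ because $\overline{v}$ has only arc out-edges.
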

\begin{proof}
We prove the first statement; the second is symmetric.

Let $\pi = (e_1, \dots, e_k)$ be a different path going from $\underline{u}$ to $\underline{v}$.
Clearly, no edge of $\pi$ can be an $\mf f$-edge, since fake edges are incident with private sources (that have no incoming edge) or private sinks (that have no outgoing edge).

Replace every shortcut edge $\mf s_{x, y}$ with the path $(\mf n_x, \mf a_{x,y}, \mf n_y)$.
The resulting path $\pi'$ contains only node and arc edges. By construction, node and arc edges alternate in this path.
Note that every node edge $\mf n_x$ corresponds to a node $x\in V$ and that every arc edge $\mf a_{x,y}$ corresponds to an edge $(x, y)\in E$.
As a result, the path $\pi'$ in $G'$ projects to a directed path in $G$ from $u$ to $v$.

Since $(u, v)\in E$ and $G$ is transitive-free, the only directed path in $G$ from $u$ to $v$ is the edge $(u, v)$.
Therefore, the resulting path $\pi'$ must be $(\mf n_u, \mf a_{u,v})$.
%Since $\pi'$ contains less than three edges, $\pi'$ must be $\pi$, contradicting the choice of $\pi$. 
In particular, \(\pi'\) contains exactly two edges. Replacing a shortcut
edge by \((\mf n_x,\mf a_{x,y},\mf n_y)\) increases the number of edges
by two, so \(\pi\) cannot contain any shortcut edge. Hence,
\(\pi=\pi'=(\mf n_u,\mf a_{u,v})\), contradicting the choice of \(\pi\).

The second statement follows analogously.
\end{proof}

% \pb{you never use U(G) in other parts of the file...why to use it here? You named it X if I am not wrong, otherwise please use it below in the proof...}
% \gyn{Since the notation $U(G)$ is not otherwise used, I have removed it and now refer directly to the underlying undirected graph of $G$ throughout. Note that $X$ used in Lemma~10 denotes any cubic triangle-free graph, whereas Lemma~12 is stated for any arbitrary transitive-free DAG.}

% Let \(U(G)\) denote the underlying undirected graph of \(G\).
For each vertex $u\in V$, let $\deg_G(u)$ denote the degree of $u$ in the underlying undirected graph of $G$.
For every vertex cover $C$ of the underlying undirected graph of $G$, define its \emph{degree weight} by $w_{\deg}(C)
=
\sum_{u\in C} \deg_G(u).$
We then define the minimum degree weight of a vertex cover of the underlying undirected graph of $G$ as
$w_{\deg}^*(G)
=
\min\bigl\{
w_{\deg}(C)
:
C \text{ is a vertex cover of the underlying undirected graph of } G
\bigr\}.$

\begin{lemma}\label{lem-eq-saturated-min-degree-weight}
Let \(G=(V,E)\) be a transitive-free DAG, and let \(G'=(V',E', \mt H)\) be the variation
graph obtained from \(G\) by the construction above, and let
\(\mt P^*\) be a minimum saturated pebbling of \(G'\). Then, $|\mt P^*|=|E|+w^*_{\deg}(G).$
\end{lemma}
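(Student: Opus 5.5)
The plan is to prove the two inequalities $|\mt P^*| \le |E| + w^*_{\deg}(G)$ and $|\mt P^*| \ge |E| + w^*_{\deg}(G)$. First I will work out, color by color, which guardians are admissible in $G'$, using Lemma~\ref{lem-me-unique} and the transitive-freeness of $G$.

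\textbf{Step 1: admissible guardians for each type of path.} I will check the three path types in turn.
\begin{itemize}
\item \emph{Shortcut path $\mf S_{u,v}$.} It goes from $\underline{u}$ to $\overline{v}$. A second path between these endpoints is $(\mf n_u,\mf a_{u,v},\mf n_v)$. So the empty guardian fails, and $\mf s_{u,v}$ must be pebbled. This forces at least $|E|$ pebbles, one per shortcut edge, each with a single color.
\item \emph{Arc path $\mf A_{u,v}$.} The empty guardian fails for the same reason. I claim that any guardian containing $\mf n_u$, or $\mf n_v$, or $\mf a_{u,v}$ is valid. For $\mf n_u$, the segment from $\overline{u}$ to $\overline{v}$ is unique by Lemma~\ref{lem-me-unique}(2). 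For $\mf n_v$, the segment from $\underline{u}$ to $\underline{v}$ is unique by Lemma~\ref{lem-me-unique}(1). For $\mf a_{u,v}$, I will show that $\mf n_u$ is the only path from $\underline{u}$ to $\overline{u}$, and $\mf n_v$ the only one from $\underline{v}$ to $\overline{v}$. The argument is the projection used in the proof of Lemma~\ref{lem-me-unique}: a longer path would project to a closed walk in $G$, which is impossible in a DAG.
\item \emph{Fake path $\mf F^e_i$.} I will show that the empty guardian already works. The source $u^e_i$ and sink $v^e_i$ are private, so any path between them has the form $\overline{\mf f^e_i}$, then a path from $\overline{u}$ to $\underline{v}$, then $\underline{\mf f^e_i}$. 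A path from $\overline{u}$ to $\underline{v}$ projects to a path from $u$ to $v$ in $G$. By transitive-freeness (again as in Lemma~\ref{lem-me-unique}) that path is just $\mf a_{u,v}$, possibly after expanding shortcuts, and shortcuts are excluded by counting edges.
\end{itemize}

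A general fact I will also use: any supersequence of a valid guardian along the same path is still valid. This holds because every subpath of a unique path is unique.

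\textbf{Step 2: upper bound.} Take a vertex cover $C$ of minimum degree weight. Saturate every shortcut edge, and saturate $\mf n_u$ for each $u\in C$. This costs $|E|$ plus $\sum_{u\in C}|\col(\mf n_u)| = |E| + \sum_{u\in C}\deg_G(u)$. Every arc path contains a pebbled node edge because $C$ is a cover. Fake paths get whatever pebbles saturation forces on them, which is fine by the supersequence fact. So this is a saturated pebbling of size $|E|+w^*_{\deg}(G)$.

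\textbf{Step 3: lower bound (the main obstacle).} An optimal pebbling may satisfy some arc paths with the arc edge instead of a node edge, so I need an exchange argument. Take $\mt P^*$. Whenever $\mf a_{u,v}$ is pebbled, unpebble it and saturate $\mf n_u$ instead. This removes $\Delta+1$ pebbles and adds at most $\deg(u)\le\Delta$, so the size strictly decreases. I must check that validity is preserved:
\begin{itemize}
\item each fake path stays valid, since the empty guardian suffices for it and any remaining pebbles on it are fine by the supersequence fact;
\item each arc path that used $\mf a_{u,v}$ now has $\mf n_u$ pebbled, so it still has an admissible guardian by Step 1;
\item every other arc path $\mf A_{x,y}$ still contains the pebbled edge it relied on, so by Step 1 it remains valid, and adding $\mf n_u$ to it if it passes through $u$ only gives a supersequence.
\end{itemize}
Hence a minimum saturated pebbling contains no pebbled arc edge. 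Then every arc path $\mf A_{u,v}$ needs $\mf n_u$ or $\mf n_v$ pebbled, so $C=\{u : \mf n_u \text{ pebbled}\}$ is a vertex cover. Counting the pebbles gives $|\mt P^*| \ge |E| + w_{\deg}(C) \ge |E| + w^*_{\deg}(G)$.
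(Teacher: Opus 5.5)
Your proposal is correct and follows essentially the same route as the paper. The upper bound saturates every shortcut edge plus the node edges of a minimum-degree-weight vertex cover. The lower bound forces all shortcut edges and eliminates pebbled arc edges by the exchange $\mathfrak a_{u,v}\mapsto\mathfrak n_u$, which saves at least $\Delta+1-\deg(u)>0$ pebbles; it then reads off a vertex cover from the pebbled node edges. The only minor differences are that you make explicit the monotonicity of guardians under adding edges, which the paper uses implicitly. Also, your inequality-based final count makes unnecessary the paper's separate step showing that no fake edge is pebbled.
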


\begin{proof}
It is sufficient to prove the two inequalities $|\mt P^*|\le |E|+w^*_{\deg}(G)$ and $|\mt P^*|\ge |E|+w^*_{\deg}(G)$.

To prove that $|\mt P^*|\le |E|+w^*_{\deg}(G)$, let $\mt V \subseteq V$ be a vertex cover of the underlying undirected graph of $G$ such that $w_{\deg}(\mt V)=w^*_{\deg}(G)$.
We construct a saturated pebbling $\mt P$ for $G'$ based on $\mt V$. To simplify the exposition,
when we say that an edge $e$ is added to $\mt P$ we intend that every pebble $(e, i)$ for $i \in \col(e)$
is added to $\mt P$ (hence, saturation is satisfied by construction).
\begin{itemize}
    \item For every shortcut path $\mf S_{u,v} \in \mt H$, add pebble $(\mf s_{u,v}, \mf S_{u,v})$ to $\mt P$.
        Clearly, $(\underline{u}, \{\mf s_{u,v}\}, \overline{v})$ satisfies the unique path property for $\mf S_{u,v}$.
        This case will add $|E|$ elements to $\mt P$.

    \item For every fake path $\mf F^e_i \in \mt H$, add no pebbles to $\mt P$.
        Since there is a unique path in $G'$ from $\be(\mf F^e_i)$ to $\en(\mf F^e_i)$, the empty
        set is a valid guardian for $\mf F^e_i$.

    \item For every arc path $\mf A_{u,v} \in \mt H$, add to $\mt P$ the edge $\mf n_u$ if $u \in \mt V$
        and the edge $\mf n_v$ if $v \in \mt V$ (both edges are added if both vertices are in the vertex cover).
        Since $\mt V$ is a vertex cover, for every edge $(u,v) \in E$  at least  $u \in \mt V$ or $v \in \mt V$, or both $u, v \in V$.
        If $u \in \mt V$, then by (2) of Lemma~\ref{lem-me-unique} there exists a unique path in $G'$ from
        $\overline{u}$ to $\overline{v}$, hence $(\underline{u}, \{\mf n_u\}, \overline{v})$ satisfies
        the unique path condition. A similar argument holds if $v \in \mt V$ (or both).
        This case will add $\sum_{u \in \mt V} |\col(\mf n_u)| = \sum_{u \in \mt V} |\deg(u)|=w_{\deg}(\mt V)=w^*_{\deg}(G)$ elements to $\mt P$.
\end{itemize}
Therefore, the size of the pebbling is $|\mt P| = w^*_{\deg}(G) + |E|$.
Since $|\mt P^*|\le|\mt P|$, we have $|\mt P^*|\le w^*_{\deg}(G) + |E|$.

%To prove that $|\mt P^*|\ge |E|+w^*_{\deg}(G)$, let $\mt E \subseteq E'$ be the set of edges that appear in $\mt P^*$.

We now prove the lower bound. Let \(\mt E^*\subseteq E'\) be the set of
edges carrying at least one pebble in \(\mt P^*\). Since \(\mt P^*\)
is saturated,
$|\mt P^*|
=
\sum_{e\in\mt E^*}|\col(e)|$.

\begin{itemize}
    \item $\mt E^*$ contains no fake edges. 
     A fake edge belongs only to its corresponding fake path, and that path admits the empty guardian, as shown above. Hence, removing a selected fake edge preserves feasibility and strictly decreases the size of the pebbling, contradicting the minimality of \(\mt P^*\).
    %Since each fake path already has a unique path between its endpoints, removing fake edges preserves the unique path condition,
    \item $\mt E^*$ contains every shortcut edge. Indeed, the empty set cannot be a guardian for the shortcut path \(\mf S_{u,v}\), because
there are at least two paths from \(\underline u\) to \(\overline v\). Since \(\mf S_{u,v}\) consists only of \(\mf s_{u,v}\), this edge must
be selected.  
    \item $\mt E^*$ contains no arc edges by Claim~\ref{claim-no-arc-edge}.
    % Indeed, if there are any $\mf a_{u,v}\in \mt E$, replacing $\mf a_{u,v}$ with $\mf n_u$ (or symmetrically by $\mf n_v$) would preserve the unique path condition by Lemma~\ref{lem-me-unique} meanwhile strictly decrease the pebbling size, since $|\col(\mf n_u)|=\deg(u)<\Delta+1=|\col(\mf a_{u, v})|$.  This contradicts the minimality of \(\mt P^*\). Hence, no arc edge is selected.
\end{itemize}

\begin{claim}\label{claim-no-arc-edge}
 $\mathcal E^*$ contains no arc edge.
\end{claim}

\begin{claimproof}
Suppose, for the sake of contradiction, that
$\mf a_{u,v}\in \mt  E^*$ for some $(u,v)\in E$.
Let $\mt E'=(\mt E^*\setminus\{\mf a_{u,v}\})\cup\{\mf n_u\}$,
and let $\mt P'$ be the saturated pebbling obtained by placing,
for every edge $e\in\mt E'$, all pebbles $(e,i)$ with
$i\in\col(e)$.
We prove that $\mt P'$ induces guardians for $\mt H$.

The only paths whose induced guardians may be affected are the paths containing $\mf a_{u,v}$ or $\mf n_u$.
Each fake path containing $\mf a_{u,v}$ admits the empty
guardian, since there is a unique path from its starting vertex to its
ending vertex. Hence, removing $\mf a_{u,v}$ preserves the guardian property for all such fake paths.

Consider now the arc path $\mf A_{u,v}=(\mf n_u, \mf a_{u,v}, \mf n_v).$ After the replacement, $\mf n_u$ belongs to its guardian.
If $\mf n_v$ is not selected, then the subpath  $(\mf a_{u,v}, \mf n_v)$ of $\mf A_{u,v}$  is the unique path between its
endpoints by Lemma~\ref{lem-me-unique}.
If $\mf n_v$ is selected, then the subpath strictly between $n_u$ and $n_v$, that is the edge $\mf a_{u,v}$, is also unique.
Thus, the induced guardian for $\mf A_{u,v}$ remains valid.

Finally, for every other arc path containing $\mf n_u$, the replacement only adds $\mf n_u$ to its guardian, which cannot
destroy the guardian property. All remaining paths are unchanged.
Therefore, $\mathcal P'$ is a saturated pebbling.

On the other hand, $|\mt P'| \leq |\mt P^*| - |\col(\mf a_{u,v})|+|\col(\mf n_u)|=|\mt P^*|-(\Delta+1)+\deg(u)<|\mt P^*|$, since $\deg(u)\leq\Delta$.
This contradicts the minimality of $\mathcal P^*$, so $\mathcal E^*$ contains no arc edge.
\end{claimproof}

Define $\mt V = \{u \in V : \mf n_u \in \mt E^* \}$.
For any edge $(u,v) \in E$, since there exist at least two paths in $G'$ going from $\underline{u}$ to $\overline{v}$ and $\mf a_{u, v}\notin \mt E^*$, at least one of
$\mf n_u$ and $\mf n_v$ is in $\mt E^*$, thus, at least one of  $u$ and $v$ is in $\mt V$. Hence, $\mt V$ is a vertex cover of the underlying undirected graph of $G$.

Since $\mt E^*$ contains every shortcut edges and $\sum_{u \in \mt V} |\col(\mf n_u)| = \sum_{u \in \mt V} |\deg(u)|=w_{\deg}(\mt V)$, the size $|\mt P^{*}|=|E|+w_{\deg}(\mt V)$.
Since $w_{\deg}(\mt V)\ge w^*_{\deg}(G)$, we have $|\mt P^{*}|\ge|E|+w^*_{\deg}(G)$.

Combining both inequalities, it follows that $|\mt P^{*}|=|E|+w^*_{\deg}(G)$.
\end{proof}

\begin{theorem}\label{theorem-non-st-mwp-NP-Hard}
 \MINWP{} on DAGs is \NP-hard, even when every predefined path has at most three edges.
\end{theorem}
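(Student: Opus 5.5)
We reduce from \textsc{Vertex Cover} on cubic triangle-free graphs, which is known to be \NP-hard; we take this as a standard result from the literature. Let $X=(V,E)$ be a cubic triangle-free graph and $k$ an integer, and ask whether $X$ has a vertex cover of size at most $k$.

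\emph{Constructing the instance.} We first apply Lemma~\ref{prop-transitive-free} to orient $X$ in $\mt O(|V|+|E|)$ time. This yields a transitive-free DAG $G$ whose underlying undirected graph is exactly $X$, so $\Delta=3$. We then build the variation graph $G'=(V',E',\mt H)$ by the construction above. It has $2|V|+6|E|$ vertices, $8|E|+|V|$ edges and $5|E|$ paths, so it is computable in polynomial time. Each predefined path is a shortcut path (one edge), an arc path (three edges) or a fake path (three edges). Hence every path in $\mt H$ has at most three edges, as the statement requires.

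\emph{Relating the optima.} Because $X$ is cubic, $\deg_G(u)=3$ for every $u\in V$, so $w_{\deg}(C)=3|C|$ for every vertex cover $C$. Therefore $w^*_{\deg}(G)=3\tau(X)$, where $\tau(X)$ is the minimum vertex cover size of $X$. Lemma~\ref{lem-eq-saturated-min-degree-weight} then gives $|\mt P^*|=|E|+3\tau(X)$ for a minimum saturated pebbling $\mt P^*$ of $G'$. Consequently, $X$ has a vertex cover of size at most $k$ if and only if $G'$ has a saturated pebbling of size at most $|E|+3k$. The threshold is computable in polynomial time, and $\tau(X)$ can be read off from any optimal solution of \MINWP{} on $G'$.

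\emph{Main obstacle.} All the substantive work is already done in Lemma~\ref{lem-eq-saturated-min-degree-weight}, so the remaining difficulty is only bookkeeping. We must check that the hypotheses match: Lemma~\ref{prop-transitive-free} delivers exactly the transitive-free DAG the construction needs, the orientation keeps the underlying graph equal to $X$ so the vertex cover numbers coincide, and degree uniformity turns degree weight into a scaled cardinality. We would also state explicitly which source we cite for the \NP-hardness of \textsc{Vertex Cover} on cubic triangle-free graphs, since the theorem rests on it.
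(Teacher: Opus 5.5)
Your proposal is correct and follows the paper's proof essentially step for step. Both reduce from \textsc{Vertex Cover} on cubic triangle-free graphs, orient the graph via Lemma~\ref{prop-transitive-free}, build the gadget with $\Delta=3$, and use Lemma~\ref{lem-eq-saturated-min-degree-weight} with $w_{\deg}(C)=3|C|$ to get $|\mt P^*|=|E|+3\tau(X)$ and the threshold $|E|+3k$. The paper adds only the explicit citation (\textsc{Independent Set} on triangle-free cubic graphs, via complementation) and a remark on membership in \NP{}, which is not needed for the hardness claim.
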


\begin{proof}
    We give a polynomial-time reduction from {Vertex Cover} on cubic
triangle-free graphs, which is \NP-hard as an immediate consequence
of~\cite[Theorem 3]{miRCCW98}.\footnote{The cited theorem explicitly proves
that \textsc{Independent Set} is \NP-complete on triangle-free \(3\)-regular
graphs. The claimed hardness follows from the standard equivalence between
{Vertex Cover} and {Independent Set}.}
    Let \((X,k)\) be an instance of {Vertex Cover}, where \(X=(V,E)\) is a cubic triangle-free graph. The question is whether \(X\) admits a vertex cover of size at most \(k\).

    By Lemma~\ref{prop-transitive-free}, we can orient the edges of $X$ in $\mt O(|E|+|V|)$ time to obtain a transitive-free DAG $G=(V, E)$.
    Since this operation only assigns a direction to each edge of $X$, the underlying undirected graph of $G$ is precisely $X$.
    Therefore, the underlying undirected graph of $G$ and $X$ have the same vertex covers and, in particular, the same minimum vertex-cover size.
    % \pb{the following statement is wrong because $G$ is a DAG...you should say that the minimum vertex cover of the underlying undirected graph of $G$ is the same of the graph $X$} \gyn{Addressed.}

    From $G$, we construct the variation graph $G'=(V',E',\mt H)$, as described above.
    Since $X$ is cubic, the maximum degree $\Delta$ is three, a constant, and thus the construction can be completed in polynomial time.
    Moreover, every path in $\mt H$ contains at most three edges.

    % Let $\mt P^*$ denote a minimum-size saturated pebbling of $G'$.
    % By Lemma~\ref{lem-eq-saturated-min-degree-weight}, we have $|\mt P^*|=|E|+w^*_{\deg}(G)=|E|+w^*_{\deg}(X)$, since $X$ is the underlying undirected graph of $G$.
    % Let $\tau(X)$ denote the minimum size of a vertex cover of $X$.
    % Since $X$ is cubic, we have $\deg(u)=3$ for every $u\in V$, and thus $|\mt P^*|=|E|+w^*_{\deg}(X)=|E|+3\tau(X)$.

    % Since the source graph $X$ is cubic, a vertex cover of $X$ with at most $k$ vertices has degree weight at most $3k$, and conversely any vertex cover of degree weight at most $3k$ contains at most $k$ vertices.
    % Since $|\mt P^*|=|E|+3\tau(X)$, $X$ has a vertex cover of size at most $k$, that is $\tau(X)\le k$, if and only if $|\mt P^*|=|E|+3\tau(X)\le |E|+3k$.

    % This concludes the polynomial-time reduction from Vertex Cover on cubic triangle-free  graphs to \MINWP{}.

    Let \(\tau(X)\) denote the minimum size of a vertex cover of \(X\). Since every vertex of \(X\) has degree three, every vertex cover \(C\) satisfies $w_{\deg}(C)=3|C|$. Hence, $w_{\deg}^*(G)=3\tau(X)$. By Lemma~\ref{lem-eq-saturated-min-degree-weight}, if \(\mt P^*\) is a minimum saturated pebbling of \(G'\), then
$|\mt P^*|
=
|E|+w_{\deg}^*(G)
=
|E|+3\tau(X).$
Therefore, \(X\) admits a vertex cover of size at most \(k\) if and only if $\tau(X)\leq k$,
which holds if and only if
$|\mt P^*|
=
|E|+3\tau(X)
\leq
|E|+3k.$
Equivalently, \(X\) admits a vertex cover of size at most \(k\) if and only
if \(G'\) admits a saturated pebbling of size at most \(|E|+3k\).

This proves that the decision version of \MINWP{} is \NP-hard. It belongs
to \NP, since a saturated pebbling provides a polynomial-size certificate
whose saturation and unique-path conditions can be verified in polynomial
time. Hence, the decision version is \NP-complete.
Consequently, the optimization problem \MINWP{} is \NP-hard, even when
every predefined path has at most three edges.
\end{proof}

\subsection{From \MINWP{} to Minimum-Weight Set Cover}
\label{sect-set-cover}

\paragraph*{Construction.}
Consider a variation graph \(G=(V,E,\pathSet)\). For each path \(h_i\in\pathSet\), we introduce a fake vertex \(t'_i\) and a fake edge \((\en(h_i),t'_i)\). Let \(T=\{(\en(h_i),t'_i)\mid h_i\in\pathSet\}\), and define \(\col((\en(h_i),t'_i))=\{i\}\). Recall that \(V_i=V(h_i)\). For every \(h_i\in\pathSet\) and every edge \(e\in E\cup T\) such that \(i\in\col(e)\), define \(W_i(e)=\{x\in V_i\mid x\text{ is a cardinal ancestor of }\be(e)\}\).

We construct an instance \((\mt U,\mt S,\mt W)\) of the minimum-weight set cover problem as follows. The universe is \(\mt U=\{(v,i)\mid h_i\in\pathSet,\ v\in V_i\}\). For every edge \(e\in E\cup T\), define \(S(e)=\{(v,i)\in\mt U\mid i\in\col(e),\ v\in W_i(e)\}\), and assign it weight \(\mt W(S(e))=|\col(e)|\). 
Finally, let \(\mt S=\{S(e)\mid e\in E\cup T\}\).

The construction can be carried out in polynomial time. Indeed, for every pair of vertices \(x,u\), whether \(x\) is a cardinal ancestor of
\(u\) can be determined by counting the directed paths from \(x\) to \(u\),
with the count capped at two, using dynamic programming on the DAG (Algorithm~\ref{algo:dp}).
Consequently, all sets \(W_i(e)\), and hence the entire set-cover instance,
can be constructed in polynomial time.

Observe that $(\en(h_i), i) \in S((\en(h_i), t'_i))$, but $(\en(h_i), i) \notin S(e)$ for any edge $e$ on $h_i$. 
Therefore, every fake edge $(\en(h_i), t'_i)$ must be included in any valid set cover of $(\mt {U}, {\cal S}, {\cal W})$, and each such edge contributes exactly one unit to the total weight.

Let $\mt P$ be a saturated pebbling of the original variation graph, and let $\{\mt G_1, \dots, \mt G_{|\pathSet|}\}$ be the induced guardian collection. 

\begin{lemma}\label{lem-reduction-ILP-1}
Let \(\mt E=\bigcup_i \mt G_i\). Then the sets associated with the guardian edges in $\mt E$ and the fake edges cover the entire universe \(\mt U\); that is, $\bigcup_{e\in \mt E\cup T} S(e)=\mt U.$
\end{lemma}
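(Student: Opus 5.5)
To show $\bigcup_{e\in\mt E\cup T}S(e)=\mt U$, I fix an arbitrary element $(v,i)\in\mt U$, so $h_i\in\pathSet$ and $v\in V_i$. I will exhibit one edge $e\in\mt G_i\cup\{(\en(h_i),t'_i)\}$ with $(v,i)\in S(e)$. Every edge of $\mt G_i$ lies in $\mt E$ and carries color $i$, so $i\in\col(e)$ holds for every such candidate. The only remaining requirement is $v\in W_i(e)$, that is, $v$ must be a cardinal ancestor of $\be(e)$.

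The natural choice is the first edge of the extended guardian that starts at or after $v$. Write $\mt G_i=(e_1,\dots,e_k)$ with $e_j=(u_j,v_j)$. Append the fake edge $e_{k+1}=(t_i,t'_i)$, so $u_{k+1}=t_i$, and set $v_0=s_i$. Since $\mt G_i$ is a subsequence of $h_i$, the vertices $v_0,u_1,v_1,u_2,\dots,v_k,u_{k+1}$ appear in this order along $V_i$. Hence $v$ lies on the segment of $h_i$ from $v_j$ to $u_{j+1}$ for some $j\in[0,k]$, and I choose the smallest such $j$. If $v$ is the tail $u_j$ of a guardian edge rather than a vertex of such a segment, I instead take $e=e_j$; this is immediate, because $v=\be(e)$ and $v$ is trivially a cardinal ancestor of itself.

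In the main case, $v$ is on the subpath of $h_i$ from $v_j$ to $u_{j+1}$, and I take $e=e_{j+1}$, which is either a guardian edge or the fake edge. By Definition~\ref{def:guardians}, the triple $(s_i,\mt G_i,t_i)$ satisfies the unique path condition, so there is a unique path in $G$ from $v_j$ to $u_{j+1}$. This path is exactly the corresponding segment of $h_i$. The key step is to show that the path from $v$ to $u_{j+1}$ is also unique. Suppose two distinct paths $\pi_1,\pi_2$ ran from $v$ to $u_{j+1}$. Prefix each with the segment of $h_i$ from $v_j$ to $v$. Because $G$ is acyclic, both concatenations are genuine paths, and they are distinct, giving two different paths from $v_j$ to $u_{j+1}$ and a contradiction. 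Therefore $v$ is a cardinal ancestor of $u_{j+1}=\be(e_{j+1})$, so $v\in W_i(e_{j+1})$ and $(v,i)\in S(e_{j+1})$.

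I expect the main obstacle to be bookkeeping rather than any real difficulty. Consecutive guardian edges may share endpoints, for instance $v_j=u_{j+1}$ when the segment is empty. The boundary cases $v=s_i$ and $v=t_i$ must also be handled. The fake edge is what covers the last segment, including $(t_i,i)$ itself. Both the ``prefix'' argument for uniqueness and the choice of segment extend to these cases without change, since a vertex equal to $u_{j+1}$ counts as its own cardinal ancestor.
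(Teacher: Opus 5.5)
Your proof is correct and follows the paper's argument. Like the paper, you pick the first edge of $\mt G_i$, extended by the fake edge $(\en(h_i),t'_i)$, whose tail lies at or after $v$ on $h_i$. You then show $v$ is a cardinal ancestor of that tail by prefixing two hypothetical distinct paths with the segment of $h_i$ from the preceding guardian head (or $s_i$), which contradicts the unique path condition.

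Two small remarks. Your separate case $v=u_j$ is already covered by the main case, since the segments from $v_j$ to $u_{j+1}$ partition $V_i$. The reverse inclusion $\bigcup_{e} S(e)\subseteq\mt U$, which you leave implicit, is immediate from $S(e)\subseteq\mt U$.
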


\begin{proof}
Fix \((v,i)\in\mt U\), and let $f_i=(\en(h_i),t'_i)$ be the fake edge placed
immediately after the last edge of \(h_i\). 
Traverse \(h_i\) from left to right and let \(e\) be the first edge of $\mt G_i\cup\{f_i\}$ whose tail is encountered at or after \(v\).

Let \(a=\be(h_i)\) if \(e\) is the first edge of \(\mt G_i\cup\{f_i\}\);
otherwise, let \(a\) be the head of the guardian edge immediately
preceding \(e\). By the unique path condition for \(\mt G_i\), there is
a unique path in \(G\) from \(a\) to \(\be(e)\).

The vertex \(v\) lies on the subpath of \(h_i\) from \(a\) to
\(\be(e)\). If there were two distinct paths from \(v\) to \(\be(e)\),
then prepending the subpath of \(h_i\) from \(a\) to \(v\) would give
two distinct paths from \(a\) to \(\be(e)\), a contradiction.
Therefore, \(v\) is a cardinal ancestor of \(\be(e)\), and hence $v\in W_i(e)$.
Consequently, \((v,i)\in S(e)\).

Thus every element of \(\mt U\) is covered by an edge of
\(\mt G_i\cup\{f_i\}\). Since every \(S(e)\) is a subset of
\(\mt U\), the union is exactly \(\mt U\).
\end{proof}

% Let $\mt E=\bigcup_i \mt G_i$ be the set of all edges appearing in the guardian collection induced by $\mt P$.
By Lemma~\ref{lem-reduction-ILP-1}, \(\mt   \{S(e)\mid e\in \mathcal E\cup T\}\) forms a set cover of the instance $(\mt {U}, {\cal S}, {\cal W})$.

\begin{lemma}\label{lem-reduction-ILP-2}
Let $E' \subseteq E \cup T$ be any set cover of the instance $(\mt {U}, {\cal S}, {\cal W})$.
For every $i \in [1 \dd |\pathSet|]$, the set 
$E'_i = \{e \mid e \in E' \setminus T,\; i \in \col(e)\}$
forms a guardian for the path $h_i$.
\end{lemma}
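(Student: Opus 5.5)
Order the edges of $E'_i$ along $h_i$; write them $e_1,\dots,e_k$ with $e_j=(u_j,v_j)$, set $v_0=s_i$, and append the fake edge $f_i=(t_i,t'_i)$ as $e_{k+1}$, so $u_{k+1}=t_i$. The goal is to show that for every $j\in[0,k]$ there is a unique path in $G$ from $v_j$ to $u_{j+1}$. Since $h_i$ is a path, these endpoints appear on $h_i$ in this order, and the subpath of $h_i$ from $v_j$ to $u_{j+1}$ is one such path. It remains to rule out a second path.

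\textbf{Step 1: find the covering set.} Fix $j$. The element $(v_j,i)\in\mt U$ is covered by some $S(e)$ with $e\in E'$. By definition of $S(e)$ this means $i\in\col(e)$, so $e$ lies on $h_i$ or $e=f_i$. Moreover $v_j\in W_i(e)$, so $v_j$ is a cardinal ancestor of $\be(e)$. Since $e\in E'$ and $i\in\col(e)$, either $e=f_i$ or $e\in E'_i$.

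\textbf{Step 2: place $e$ after $v_j$.} Because $v_j$ is an ancestor of $\be(e)$, acyclicity and the fact that $h_i$ visits vertices in topological order imply that $\be(e)$ occurs on $h_i$ at or after $v_j$. (Here a DAG path visits each vertex at most once, so positions on $h_i$ are well defined.) The edges of $E'_i\cup\{f_i\}$ whose tail lies at or after $v_j$ are exactly $e_{j+1},\dots,e_{k+1}$. Hence $e=e_\ell$ for some $\ell\ge j+1$, and $u_{j+1}$ lies on the subpath of $h_i$ from $v_j$ to $u_\ell=\be(e)$.

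\textbf{Step 3: restrict uniqueness to a prefix.} This is the key, though short, step. Suppose there were two distinct paths from $v_j$ to $u_{j+1}$. Append to both the subpath of $h_i$ from $u_{j+1}$ to $u_\ell$. The results are two distinct paths from $v_j$ to $u_\ell$: they differ already in their prefixes, and they are paths rather than walks because $G$ is acyclic. This contradicts cardinality. So the path from $v_j$ to $u_{j+1}$ is unique, which is the unique path condition for $(s_i,E'_i,t_i)$.

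The main obstacle is bookkeeping rather than depth. We must check that the covering edge $e$ cannot lie strictly before $v_j$ on $h_i$, and this is exactly where acyclicity is used. We must also treat the boundary case $j=k$, where the cover may come only from $f_i$, so that $u_{k+1}=t_i$. Finally, $E'_i$ must be read as a subsequence of $h_i$, ordered by rank, consistently with Definition~\ref{def:guardians}.
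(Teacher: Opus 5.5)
Your proof is correct and follows essentially the same route as the paper. You cover the element $(v_j,i)$ at the start of each gap, show that the tail of the covering edge lies on $h_i$ at or after the end $u_{j+1}$ of the gap, and extend two hypothetical distinct paths by a common subpath of $h_i$ to contradict cardinality; the only cosmetic difference is that the paper argues for an arbitrary subpath of $h_i$ containing no edge of $E'_i$, whereas you work directly with the gaps between consecutive edges of $E'_i\cup\{f_i\}$.
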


\begin{proof}
Fix \(i\in[1\dd|\pathSet|]\), and let \(\pi_i\) be any subpath of
\(h_i\) containing no edge of \(E'_i\). 
Let $a=\be(\pi_i)$ and $b=\en(\pi_i)$.
We prove that there is a unique path from \(a\) to \(b\).

Since \(E'\) is a set cover, the element \((a,i)\in\mt U\) is
covered by some selected set. Hence, there exists an edge
$e\in E'_i\cup \{(\en(h_i),t'_i)\}$
such that \(a\in W_i(e)\). Let \(u=\be(e)\). By the definition of
\(W_i(e)\), there is a unique path from \(a\) to \(u\).

Because \(i\in\col(e)\), either \(e\) is an edge of \(h_i\), or \(e\)
is the fake edge whose tail is \(\en(h_i)\). Moreover, \(u\) cannot
occur before \(a\) on \(h_i\), since \(a\) reaches \(u\).

Suppose that \(u\) occurred strictly before \(b\) on \(h_i\). Then
\(e\) would be an original edge of \(h_i\) lying on the subpath
\(\pi_i\). Since \(e\in E'_i\), this would contradict the assumption
that \(\pi_i\) contains no edge of \(E'_i\). Therefore, \(u=b\), or
\(u\) occurs after \(b\) on \(h_i\).

Let \(\rho\) be the subpath of \(h_i\) from \(b\) to \(u\), possibly
of length zero. If there were two distinct paths from \(a\) to \(b\),
then appending \(\rho\) to both of them would produce two distinct
paths from \(a\) to \(u\). This contradicts the fact that \(a\) is a
cardinal ancestor of \(u\).

Hence, there is a unique path from \(a\) to \(b\). In particular, the
subpaths between consecutive edges of \(E'_i\), as well as the initial
and final subpaths, satisfy the unique path condition. Therefore,
\(E'_i\) is a guardian for \(h_i\).
\end{proof}

Let \(C\subseteq E\cup T\) be a set of edges such that
\(\{S(e)\mid e\in C\}\) is a set cover of
\(\mt U\), and
define $\mt P'
=
\{(e,i)\mid e\in C\setminus T,\ i\in\col(e)\}.$
By Lemma~\ref{lem-reduction-ILP-2}, for every
\(i\in[1\dd|\pathSet|]\), the edges of \(C\setminus T\) carrying
color \(i\) form a guardian for \(h_i\). Therefore, \(\mt P'\) is a
pebbling. It is saturated by construction.
Every fake edge belongs to \(C\), and every fake edge has weight
one. 
Therefore, $\sum_{e\in C}\mt W(S(e))
=
\sum_{e\in C\setminus T}|\col(e)|+|T|
=
|\mt P'|+|\pathSet|.$

Let $\operatorname{OPT}_{\text{pebbling}}$ denote the size of a minimum saturated pebbling of $G$.
Since \(\mt P'\) is a saturated pebbling, $\operatorname{OPT}_{\text{pebbling}}+|\pathSet|
\le |\mt P'|+|\pathSet| =
\sum_{e\in C}\mt W(S(e)).$
As this holds for every set cover \(\{S(e)\mid e\in C\}\), it follows that $\operatorname{OPT}_{\text{pebbling}}+|\pathSet|
\le
\operatorname{OPT}_{\text{SC}}$, where $\operatorname{OPT}_{\text{SC}}$ denotes the optimum value of the weighted set-cover instance. 

Conversely, let \(\mt P\) be any saturated pebbling of \(G\), let
\(\{\mt G_1,\ldots,\mt G_{|\pathSet|}\}\) be its induced guardian
collection, and let \(\mt E=\bigcup_i\mt G_i\). By
Lemma~\ref{lem-reduction-ILP-1}, the family
\(\{S(e)\mid e\in\mt E\cup T\}\) is a set cover of \(\mt U\).
Since \(\mt P\) is saturated, the weight of this set cover is
$
\sum_{e\in\mt E\cup T}\mt W(S(e))
=
\sum_{e\in\mt E}|\col(e)|+|T|
=
|\mt P|+|\pathSet|.
$
Therefore, 
$\operatorname{OPT}_{\text{SC}}
\le \sum_{e\in\mt E\cup T}\mt W(S(e))=
|\mt P|+|\pathSet|.
$
As this holds for every saturated pebbling \(\mt P\), it follows that $\operatorname{OPT}_{\text{SC}}
\le
\operatorname{OPT}_{\text{pebbling}}+|\pathSet|.$

% By combining the two inequalities, it follows that the optimum value of the weighted set-cover instance is equal to the size of a minimum saturated pebbling of $G$, plus \(|\pathSet|\).
% Consequently, the pebbling \(\mt P^{*}\) obtained from a
% minimum-weight set cover is a minimum saturated pebbling of \(G\).
% Since the constructed set-cover instance can be produced in polynomial time, this proves Theorem~\ref{theorem-ILP}.

Combining the two inequalities yields $\operatorname{OPT}_{\text{SC}}
=
\operatorname{OPT}_{\text{pebbling}}+|\pathSet|.$
In particular, let \(C^*\subseteq E\cup T\) be such that
\(\{S(e)\mid e\in C^*\}\) is a minimum-weight set cover, and define $\mt P^*
=
\{(e,i)\mid e\in C^*\setminus T,\ i\in\col(e)\}.$
Then $|\mt P^*|
=
\operatorname{OPT}_{\text{SC}}-|\pathSet|
=
\operatorname{OPT}_{\text{pebbling}}$,
and hence \(\mt P^*\) is a minimum saturated pebbling of \(G\).
Since the constructed set-cover instance can be produced in polynomial
time, this proves Theorem~\ref{theorem-ILP}.

\begin{theorem}\label{theorem-ILP}
There is a polynomial-time reduction from \MINWP{} on variation graphs to Minimum-Weight Set Cover.
\end{theorem}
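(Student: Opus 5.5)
The reduction is the set-cover instance \((\mt U,\mt S,\mt W)\) built above, and the plan is to show three things: it is polynomial-size and polynomial-time computable, its optimum equals \(\operatorname{OPT}_{\text{pebbling}}+|\pathSet|\), and an optimal cover maps back to an optimal saturated pebbling by a simple polynomial-time map.

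\textbf{Construction.} First I would bound the size. The universe has \(\sum_i|V_i|\) elements. The family has \(|E|+|\pathSet|\) sets, each of size at most \(|\mt U|\). To build the sets, I would run Algorithm~\ref{algo:dp} once to obtain the capped path counts \(A\) in \(\mt O(|V|\cdot|E|)\) time. Then each \(W_i(e)\) is computed by scanning \(V_i\) and testing whether \(A[x,\be(e)]=1\). The fake edges in \(T\) need one extra row of \(A\), since \(\be((\en(h_i),t'_i))=\en(h_i)\) is already in \(V\); this requires no new computation.

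\textbf{Soundness: cover to pebbling.} Given any set cover \(C\), I would define \(\mt P'=\{(e,i)\mid e\in C\setminus T,\ i\in\col(e)\}\).
\begin{itemize}
\item By Lemma~\ref{lem-reduction-ILP-2}, every colour class of \(\mt P'\) is a guardian, so \(\mt P'\) is a pebbling.
\item It is saturated by construction.
\item Element \((\en(h_i),i)\) is covered only by the set of the fake edge \((\en(h_i),t'_i)\), so every edge of \(T\) lies in \(C\).
\item Hence the cover weight is \(|\mt P'|+|\pathSet|\), which gives \(\operatorname{OPT}_{\text{pebbling}}+|\pathSet|\le\operatorname{OPT}_{\text{SC}}\).
\end{itemize}

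\textbf{Completeness: pebbling to cover.} Given any saturated pebbling \(\mt P\), Lemma~\ref{lem-reduction-ILP-1} shows that the guardian edges together with \(T\) form a cover. Saturation makes its weight exactly \(|\mt P|+|\pathSet|\). This gives the reverse inequality \(\operatorname{OPT}_{\text{SC}}\le\operatorname{OPT}_{\text{pebbling}}+|\pathSet|\).

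\textbf{Conclusion.} Combining the two inequalities gives \(\operatorname{OPT}_{\text{SC}}=\operatorname{OPT}_{\text{pebbling}}+|\pathSet|\). Therefore the soundness map sends a minimum-weight cover to a minimum saturated pebbling, and this back-translation is clearly polynomial.

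\textbf{Main obstacle.} The delicate step is the weight accounting. Without saturation, the weight \(|\col(e)|\) would overcount the pebbles on \(e\), and the equality would fail. I would also check the covering of the element \((\en(h_i),i)\). Because a node reaches only itself via the empty path, \(\en(h_i)\) is a cardinal ancestor of itself, so it lies in \(W_i\) of the fake edge. It is not in \(W_i(e)\) for any edge \(e\) of \(h_i\), because \(\be(e)\) strictly precedes \(\en(h_i)\) and the graph is acyclic. Everything else is a direct invocation of Lemmas~\ref{lem-reduction-ILP-1} and~\ref{lem-reduction-ILP-2}.
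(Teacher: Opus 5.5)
Your proposal is correct and follows essentially the same route as the paper. Both arguments build the instance in polynomial time with the capped path-count DP, and both observe that the element $(\en(h_i),i)$ forces every fake edge into any cover. Both then use Lemma~\ref{lem-reduction-ILP-2} for the cover-to-pebbling inequality and Lemma~\ref{lem-reduction-ILP-1} plus saturation for the pebbling-to-cover inequality, concluding $\operatorname{OPT}_{\text{SC}}=\operatorname{OPT}_{\text{pebbling}}+|\pathSet|$. Your justification that $(\en(h_i),i)$ lies only in $S((\en(h_i),t'_i))$, via self-cardinality and acyclicity, spells out a step the paper only asserts.
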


\subsection{The Integer Linear Programming Formulation}
\label{sect-ILP}

As a consequence of Theorem~\ref{theorem-ILP}, \MINWP{} can be solved using any exact algorithm for Minimum-Weight Set Cover. In particular, the constructed instance can be expressed using the standard $0$--$1$ integer linear programming formulation for set cover.

For every edge $e\in E\cup T$, define a binary variable $z_e$ that is
set to $1$ if and only if the set $S(e)$ is selected.  The minimum-weight
set-cover instance constructed above is represented by the following integer linear program (ILP):
\begin{equation}
	\label{eq:ilp-min-saturated-pebbling}
	\begin{aligned}
		\text{minimize}\qquad
		& \sum_{e\in E\cup T} \mt W(S(e))\times z_e
		= \sum_{e\in E\cup T} |\col(e)| \times z_e, \\
		\text{subject to}\qquad
		& \sum_{\substack{e\in E\cup T:\\(v,i)\in S(e)}} z_e \geq 1
		&& \text{for every $(v,i)\in\mt U$}, \\
		& z_e\in\{0,1\}
		&& \text{for every $e\in E\cup T$}.
	\end{aligned}
\end{equation}
The formulation has $|E\cup T|=|E|+|\mt {H}|$ binary variables and
$|\mathcal{U}|=\sum_{i=1}^{|\mathcal{H}|}|V_i|$  constraints.

For every feasible solution $X$, let $C_X=\{e\in E\cup T\mid z_e=1\}$ and $\mt {P}_X
=\{(e,i)\mid e\in C_X\setminus T,\ i\in\col(e)\}$.
The constraints guarantee that $\{S(e)\mid e\in C_X\}$ covers the universe $\mt {U}$; hence, by Lemma~\ref{lem-reduction-ILP-2}, $\mt {P}_X$ is a saturated pebbling.  Moreover, for every
$i\in[1..|\mt {H}|]$, the fake edge
$f_i=(\en(h_i),t'_i)$ must be selected in $X$, since the element
$(\en(h_i),i)$ belongs to $S(f_i)$ and to no other set.
Consequently, $z_{f_i}=1$ for every feasible solution, and $\sum_{e\in E\cup T}\mt {W}(S(e))\times z_e= |\mt {P}_X|+|\mt {H}|.$

Conversely, by Lemma~\ref{lem-reduction-ILP-1}, every saturated pebbling $\mt {P}$ induces a feasible solution by selecting all edges carrying at least one pebble, together with all fake edges in $T$.  Therefore, if $X^*=(z^*_e)_{e\in E\cup T}$ is an optimal solution to the ILP~\eqref{eq:ilp-min-saturated-pebbling}, then $\mt {P}^*
=\{(e,i)\mid e\in E,\ z^*_e=1,\ i\in\col(e)\}$ is a minimum saturated pebbling.
Moreover, $|\mt{P}^*|$ is exactly $|\mt{H}|$ smaller than the objective value of $X^*$.

\section{Queries on Pebbled Graphs}
\label{sect-query-pebbling}

In this section we will present the algorithms to answer edge and path queries using the data structures
of Lemma~\ref{lem-ds-pebbles} (that represents $\mt P$) and of Proposition~\ref{prop-graph-rep} (that represents $G$).
Specifically, the two queries we are interested in are:
\begin{description}
    \item[Edge query:] Given an edge $e \in E$, return $\col(e)$,
    \item[Path query:] Given $i \in [1 \dd |\mt \pathSet|]$, return $h_i$.
\end{description}

% We will first prove that an edge query can be answered in $\tilde{\mt O}(|E| + |\mt H| + |\mt P|)$ time and a path query in
% $\mt O(|E|)$ time. Then, we will show that if $\mt P$ is of minimum size, we can improve the edge query to $\tilde{O}(|E|+|\mt H|)$.
% In the remaining part of the section, we will see that if the pebbling is saturated, an edge
% query for $e \in E$ can be answered in $\tilde{\mt O}(|\kappa(e)| + |\partial(e)| + |\mt H|)$ time, where $\kappa(e)$
% and $\partial(e)$ are two sets of edges related to a certain subgraph of $G$.
% In both cases, due to space constraints, we will leave some special corner cases to the reader.

We first show that an edge query can be answered in $\tilde{\mt O}(|E|+|\mt H|+|\mt P|)$ time and a path query in $\mt O(|E|)$ time. We then prove that, when $\mt P$ has minimum size, the edge-query time can be improved to $\tilde{\mt O}(|E|+|\mt H|)$. Lastly, we show that when the pebbling is saturated, an edge query for $e\in E$ can be answered in $\tilde{\mt O}(|\kappa(e)|+|\partial(e)|+|\mt H|)$ time, where $\kappa(e)$ and $\partial(e)$ are two sets of edges associated with a certain subgraph of $G$ that depends on the query edge $e$ (\emph{cf.} Definition~\ref{def:boundary}).

% \pb{I would specify what kind of graph we consider ...more details}\bri{The graph is formally defined in its subsection. Added a pointer to the definition.}

% As for Section~\ref{sect-pebbling-to-compute}, we assume $G$ has been pre-computed in a way every path $h \in \mt H$
% starts at a common source $s$ end ends at a common sink $t$.

% \gyn{Why not use subsection to replace paragraph*?}
% \bri{Done!}
\subsection{Supporting Edge Queries}

Let $e = (u,v) \in E$ be the query edge.
For each $i\in \col(e)$, it follows that either $i\in \ms G(e) \cap \col(e)$ or $i \in \col(e) \setminus \ms G(e)$
Using query (b) of Lemma~\ref{lem-ds-pebbles} we can find $\ms G(e) \cap \col(e)$ in $\tilde{\mt {O}}(|\ms G(e)|)$ time.
Consider now the case $i \in \col(e) \setminus \ms G(e)$. The following lemma reduces the computation of the remaining $i \in \col(e) \setminus \ms G(e)$ 
to the existence of certain cardinal ancestors and descendants. To simplify the statement, we assume that for every $i$, the guardian 
$\mt G_i$ also contains the fictitious edges $(s_i, s_i)$ and $(t_i, t_i)$ as
first and last elements, respectively. The rank of $(s_i, s_i)$ is smaller
than every other edge from $h_i$, whereas the rank of $(t_i, t_i)$ is the largest.
Such edges will be considered non-traversable in the graph, so that they will not generate alternative paths and, thus, will not interfere
with the definition of cardinal ancestor and cardinal descendant.

First, given the query edge $e$, we distinguish two edges in $\mt G_i$. In the following definition, the rank of an edge
is defined with respect to the total order of $E$ as detailed in the preliminaries.
%\pb{Am  I wrong or $e=(u, v)? $ should be specified in the definition, since $e$ is given some rows above, too far}\bri{I state $u$ and $v$ only when I use them as names.}
\begin{definition}\label{def:prev-succ}
    Let $e=(u, v) \in E$ and $i \in [1 \dd |\mt H|]$ be such that $i \notin \ms G(e)$. We define:
    \begin{itemize}
        \item $\EPREV_i(e)$ is the edge $(x,y)\in\mt G_i$ of largest rank smaller than the rank of $e$ such that $y$ is a cardinal ancestor of $u$. If no such edge exists, then $\EPREV_i(e)=(s_i,s_i)$ provided that $s_i$ is a cardinal ancestor of $u$; otherwise, $\EPREV_i(e)=null$.
        \item $\ESUCC_i(e)$ is the edge $(x,y)\in\mt G_i$ of smallest rank
larger than the rank of $e$ such that $v$ is a cardinal ancestor of $x$. If no such edge exists, then
$\ESUCC_i(e)=(t_i,t_i)$ provided that 
$v$ is a cardinal ancestor of $t_i$; otherwise, $\ESUCC_i(e)=null$.
    \end{itemize}
\end{definition}

% \bri{The statement of Lemma~\ref{lem:card-colors} is the same as before but without self-loops.}

\begin{lemma}\label{lem:card-colors}
    Let $e = (u,v)$ and $i \in [1 \dd |\mt H|]$ such that $i \notin \ms G(e)$. Then, $i \in \col(e)$ if and only if there exist
    $e_u = (u'', u')$ and $e_v = (v', v'')$ such that all of the following hold:
    \begin{enumerate}
        \item $u'$ is a cardinal ancestor of $u$ with $(u'', u') \in \mt G_i$ or $(u'', u') = (s_i, s_i)$ and
        \item $v'$ is a cardinal descendant of $v$ with $(v', v'') \in \mt G_i$ or $(v', v'') = (t_i, t_i)$, and
        \item $e_u$ and $e_v$ are consecutive in $\mt G_i$, or $e_u = (s_i, s_i)$ and $e_v$ is the first element in $\mt G_i$, or $e_u$ is the
        last element in $\mt G_i$ and $e_v = (t_i, t_i)$, or $e_u = (s_i, s_i)$ and $e_v = (t_i, t_i)$ and $\mt G_i$ is empty.
    \end{enumerate}
    Moreover, if $i \in \col(e)$ then $e_u = \EPREV_i(e)$ and $e_v = \ESUCC_i(e)$.
\end{lemma}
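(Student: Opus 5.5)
We prove the two directions separately and then the ``moreover'' part, using throughout the fictitious edges $(s_i,s_i)$ and $(t_i,t_i)$ so that every color $i$ has a well-defined sequence of consecutive guardian edges.

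\emph{($\Rightarrow$).} Assume $i\in\col(e)$ and $i\notin\ms G(e)$, so $e$ lies on $h_i$ but is not pebbled with color $i$. Extend $\mt G_i$ with the two fictitious edges. Let $e_u=(u'',u')$ be the last element of this extended sequence occurring on $h_i$ strictly before $e$, and $e_v=(v',v'')$ the first occurring strictly after $e$. Both exist because of the fictitious edges, and condition 3 holds by construction. By the unique path condition of Definition~\ref{def:guardians}, there is a unique path in $G$ from $u'$ to $v'$, and the subpath of $h_i$ from $u'$ to $v'$ passes through $u$ and $v$. If there were two distinct paths from $u'$ to $u$, appending the subpath of $h_i$ from $u$ to $v'$ would give two distinct paths from $u'$ to $v'$, a contradiction. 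This is the same prefix/suffix extension trick used in Lemma~\ref{lem-reduction-ILP-1} and Lemma~\ref{lem-reduction-ILP-2}. Hence $u'$ is a cardinal ancestor of $u$, and symmetrically $v'$ is a cardinal descendant of $v$. This gives conditions 1 and 2.

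\emph{($\Leftarrow$).} Given $e_u,e_v$ satisfying conditions 1--3, the unique path condition says the unique $u'$--$v'$ path in $G$ is the subpath of $h_i$ between $e_u$ and $e_v$. On the other hand, concatenating the unique $u'$--$u$ path, the edge $e$, and the unique $v$--$v'$ path yields a $u'$--$v'$ path in $G$. Since $G$ is acyclic, this concatenation is a genuine path, because $u'\le u<v\le v'$ in topological order. By uniqueness it coincides with the subpath of $h_i$, so $e\in h_i$, i.e.\ $i\in\col(e)$. The degenerate cases $e_u=(s_i,s_i)$ and/or $e_v=(t_i,t_i)$ are handled identically, with $u'=s_i$ and $v'=t_i$; the fictitious loops are declared non-traversable and so add no alternative paths.

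\emph{Moreover part.} This is the step requiring most care. We must show that the witnesses $e_u,e_v$ just constructed are exactly $\EPREV_i(e)$ and $\ESUCC_i(e)$ of Definition~\ref{def:prev-succ}, which are defined by rank extremality over \emph{all} of $\mt G_i$, not only over edges on $h_i$. Since $\mt G_i\subseteq h_i$ is listed in rank order and $e\in h_i$, the guardian edges of rank smaller than $e$ are exactly those preceding $e$ along $h_i$. This holds because ranks follow a topological order, so edge order along a path agrees with rank order. The largest of these is $e_u$, and we showed its head is a cardinal ancestor of $u$; hence $\EPREV_i(e)=e_u$. If $e_u$ is fictitious, no real guardian edge precedes $e$, and the definition falls back to $(s_i,s_i)$, whose cardinality condition we verified. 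The argument for $\ESUCC_i(e)=e_v$ is symmetric.

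The main obstacle is keeping the rank-order and path-order arguments consistent. In particular, for the head of the tail edge compared with $u$, the inequalities are non-strict: $u'=u$ is allowed, with the trivial path counting as unique.
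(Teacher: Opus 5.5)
Your proof is correct and follows the same structure as the paper's: for the forward direction you take the elements of the augmented guardian immediately surrounding $e$ on $h_i$ and use uniqueness of the path between them to get cardinality, and for the converse you rely on the unique path condition between consecutive guardian elements. Your converse is more direct than the paper's, which argues that no alternative $u$--$v$ path exists and leaves implicit why $e$ must then lie on $h_i$; likewise, your remark that rank order agrees with the order along $h_i$ makes explicit the identification with $\EPREV_i(e)$ and $\ESUCC_i(e)$, which the paper only asserts.
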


% \begin{lemma}\label{lem:card-colors}
%     Let $e = (u,v)$ and $i \in [1 \dd |\mt H|]$ such that $i \notin \ms G(e)$. Then, $i \in \col(e)$ if and only if there exist
%     $e_u = (u'', u')$ and $e_v = (v', v'')$ in $\mt G_i$ such that all of the following hold:
%     \begin{enumerate}
%         \item $u'$ is a cardinal ancestor of $u$, and
%         \item $v'$ is a cardinal descendant of $v$, and
%         \item $e_u$ and $e_v$ are consecutive in $\mt G_i$.
%     \end{enumerate}
%     Moreover, if $i \in \col(e)$ then $e_u = \EPREV_i(e)$ and $e_v = \ESUCC_i(e)$.
% \end{lemma}
\begin{proof}
    ($\Rightarrow$) Let $i \in \col(e) \setminus \ms G(e)$. Since $e$ lies on path $h_i \in \mt H$ but is not pebbled by $i$, there must exist edges 
    $e_u = (u'', u')$ and $e_v = (v', v'')$ consecutive in $\mt G_i$ such that $e$ lies on the unique path going from $u'$ to $v'$. In particular,
    it must be that $u'$ is a cardinal ancestor of $u$ and $v'$ is a cardinal descendant of $v$, or the path from $u'$ to $v'$ would not be unique. Since $e_u$ and $e_v$ are consecutive in $\mt G_i$,
    and since $u'$ is an ancestor of $u$ and $v'$ is an ancestor of $v$, it must be
    $e_u = \EPREV_i(e)$ and $e_v = \ESUCC_i(e)$.
    
    ($\Leftarrow$) Let $i \notin \ms G(e)$, and $e_u$ and $e_v$ be as stated. We want to prove that $i \in \col(e)$. Since $u'$ is a cardinal
    ancestor of $u$ and $v'$ is a cardinal descendant of $v$, then there exists a unique path going from $u'$ to $u$ and going from $v$ to $v'$. 
    %\gyn{Here, probably I would suggest to mention that why $(u, v)$ cannot be a transitive edge in the beginning, otherwise the logic looks messed up.}
    Furthermore, there cannot exist another path going from $u$ to $v$ apart from the edge $(u,v)$, or we would have an edge $e' \in \mt G_i$ 
    such that $e_u < e' < e_v$, contradicting the fact that $e_u$ and $e_v$ are consecutive in $\mt G_i$.
    Therefore, edge $e$ must be on $h_i$ and, thus, $i \in \col(e)$.
\end{proof}

\paragraph*{Algorithm.} Lemma~\ref{lem:card-colors} above gives us a way to determine $\col(e)$: 
%\gyn{$\col(e) \setminus \ms G(e)$??? If I was not wrong, I think the algorithm below returns $\col(e) \setminus \ms G(e)$.}
% \bri{Fixed.}
\begin{enumerate}
    \item Determine the cardinal ancestors of $u$ and the cardinal descendants of $v$ by counting, for every $w \in V$, 
    how many paths in $G$ go from $w$ to $u$ and from $v$ to $w$. This can be done via DP in the same way as described
    in Section~\ref{sect-pebbling-to-compute} (this time two arrays suffice).
    %\gyn{Here, The DP is also mentioned. I think it is supposed to describe the details of the DB at least once somewhere in the paper, given that it has been references serveral time.}

    %\pb{I think we can keep the notion of }

    \item Compute $\EPREV_i(e)$ for every color $i \in [1 \dd |\mt H|]$, if it
        exists. This can be done by scanning $\ms G(u'',u')$ for every $u'' \in \inn(u')$ {(with the special case of $(s_i, s_i)$ as stated in Lemma~\ref{lem:card-colors})}, where $u'$ is a cardinal ancestor of $u$.
        Both operations are supported by the data structures that represent $\mt P$ and $G$ in $\tilde{\mt {O}}(|\ms G(u'', u')|)$
        and $\mt{O}(|\inn(u')|)$ for enumerating the respective sets. 
        
        Analogously, compute $\ESUCC_i(e)$ {by scanning $\ms G(v', v'')$ for
        every $v'' \in \out(v')$ (with the special case of $(t_i, t_i)$ as
        stated in Lemma~\ref{lem:card-colors}) where $v'$ is a cardinal
        descendant of $v$.} %\gyn{$\ESUCC_i(\cdot)$ defines over an edge. I saw several places where a vertex is used as the variable.}\bri{I can't find any other.}

    \item Add $\ms G(e)$ to the list $col$ to be returned. Then, for every color 
        $i \in [1 \dd |\mt H|]$, check whether $\EPREV_i(e)$ and $\ESUCC_i(e)$ are consecutive in 
        $\mt G_i$ and, if so, add $i$ to $col$. This check is supported by the data structure that represents $\mt P$ 
        in $\tilde{\mt{O}}(1)$ time.
\end{enumerate}

The pseudocode is given in Algorithm~\ref{algo:edge-query}.

\begin{algorithm}
\caption{Edge query for general pebbling.}\label{algo:edge-query}
\begin{algorithmic}[1]
    \nolinenumbers
    \Function{EdgeQuery}{$V, E, \mt P, e = (u,v)$}
        \State $to\_u[1 \dd |V|] \gets$ array initialized by 0\Comment{Step 1.}
        \State $\mt A \gets \emptyset$\Comment{Store cardinal ancestors.}
        \State $to\_u[u] \gets 1$
        \For{$x = u$ down to $1$}
            \If{$to\_u[x] = 1$}\Comment{New cardinal ancestor.}
                \State Prepend $x$ to $\mt A$
            \EndIf
            \For{$y \in \inn(x)$}
                \State $to\_u[y] \gets \min\{\,2, to\_u[y] + to\_u[x]\,\}$
            \EndFor
        \EndFor
        \State In a symmetric way compute cardinal descendants $\mt D$

        \State $E_u[1 \dd |\mt H|] \gets$ array initialized by $\mathtt{null}$\Comment{Step 2.}
        \For{$y \in \mt A$}
            \State {$E_u[i] \gets (s_i, s_i)$ for every color $i$ s.t. $y = s_i$}
            \For{$x \in \inn(y)$}\Comment{$\inn(y)$ is sorted increasing.}
                \For{$i \in \ms G(x, y)$}\Comment{Lemma~\ref{lem-ds-pebbles}, query (b).}
                    \State $E_u[i] \gets (x,y)$
                \EndFor
            \EndFor
        \EndFor
        \State In a symmetric way compute leftmost edges $E_v$ (enumerate $\out(\cdot)$ in reverse order)
        \State $col \gets \emptyset$\Comment{Step 3.}
        \State $col \gets col \cup \ms G(e)$\Comment{Lemma~\ref{lem-ds-pebbles}, query (b).}
        \For{$i = 1$ to $|\mt H|$}
            \State $j \gets |\mt G_i|+1$\Comment{Lemma~\ref{lem-ds-pebbles}, query (c).}
            \If{$E_u[i] \neq \mathtt{null}$}
                \If{$E_u[i] = (s_i, s_i)$}
                    \State $j \gets 0$
                \Else
                    \State $j \gets$ rank of $E_u[i]$ in $\mt G_i$\Comment{Lemma~\ref{lem-ds-pebbles}, query (e).}
                \EndIf
            \EndIf
            \State $nxt \gets 0$
            \If{$E_v[i] \neq \mathtt{null}$}
                \If{$E_v[i] = (t_i, t_i)$}
                    \State $nxt \gets |\mt G_i|+1$
                \Else
                    \State $nxt \gets$ rank of $E_v[i]$ in $\mt G_i$
                \EndIf
            \EndIf
            \If{$j+1 = nxt$} %\gyn{$j+1=nxt$?}
                    \State $col \gets col \cup \{i\}$
            \EndIf
        \EndFor
        \State\Return $col$
    \EndFunction
\end{algorithmic}
\end{algorithm}
%\pb{and the correctness...?}

\paragraph*{Correctness}

Step 1 is a simple variant of Algorithm~\ref{algo:dp}. Step 2 is a direct application
of Definition~\ref{def:prev-succ}. Step 3 follows from Lemma~\ref{lem:card-colors}.

\paragraph*{Complexity.}
Step 1 requires $\tilde{\mt O}(|V|+|E|) = \tilde{\mt O}(|E|)$ time. Step 2 requires, $\tilde{\mt O}(|E|+|\mt H|+|\mt P|)$ time
in the worst case, where $\mt O(|\mt P|)$ bounds the quantity $\sum_{(u'', u')} |\ms G(u'', u')|$.
Finally, Step 3 requires $\tilde{\mt O}(|\mt H|)$ time. In total, an edge query can be supported in $\tilde{\mt O}(|E|+|\mt H|+|\mt P|)$ time.
This proves Theorem~\ref{thm:edge-query-general}.

\begin{theorem}\label{thm:edge-query-general}
Let $\mathcal{G}=(V,E,\mt H)$ be a variation graph, and let $\mt P$ be a
pebbling of $\mathcal{G}$. In addition to the representation of the
underlying DAG, there exists a data structure using
$O(|E|+|\mt P|\log|\mt H|)$ bits that reports $\col(e)$ for any edge
$e\in E$ in $\widetilde{O}(|E|+|\mt H|+|\mt P|)$ time. 
\end{theorem}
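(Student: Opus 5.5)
The theorem follows by combining the space bound of Lemma~\ref{lem-ds-pebbles} with the correctness and running time of Algorithm~\ref{algo:edge-query}.

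\emph{Space.} The data structure is the one of Lemma~\ref{lem-ds-pebbles}. On top of the representation of $(V,E)$ from Proposition~\ref{prop-graph-rep}, it uses $\mt O(|E|+|\mt P|\log|\mt H|)$ bits. The auxiliary arrays $to\_u$, the analogous array toward $v$, $E_u$ and $E_v$ are built at query time, so they are working memory and not stored space. The one delicate point is that Algorithm~\ref{algo:edge-query} needs, for every colour $i$, whether $s_i$ (resp.\ $t_i$) is a cardinal ancestor of $u$ (resp.\ a cardinal descendant of $v$). Two options:
\begin{itemize}
\item Obtain $s_i$ and $t_i$ from stored data, via the first and last edges of $\mt G_i$ (queries (c), (d)), extended by the unique path. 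This is possible but awkward.
\item Store the arrays $s_i$ and $t_i$ explicitly, costing $\mt O(|\mt H|\log|V|)$ bits.
\end{itemize}
I would take the second option and argue that this cost is absorbed into the representation of the variation graph, since $\mt H$ is part of the input instance.

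\emph{Correctness.} Step~1 is the capped path-counting recurrence of Algorithm~\ref{algo:dp}, restricted to a single target $u$ (resp.\ source $v$). By induction on topological order, $to\_u[x]=\min\{2,\#\text{paths }x\to u\}$, so $\mt A$ is exactly the set of cardinal ancestors of $u$. Step~2 scans cardinal ancestors $y$ in increasing order and their in-edges $(x,y)$ in increasing order, so these edges are met in increasing rank. The last assignment to $E_u[i]$ is therefore the largest-rank edge of $\mt G_i$ whose head is a cardinal ancestor of $u$. I also need this edge to have rank smaller than $e$, which holds because its head precedes $u$, or equals $u$ with an edge into $u$. Hence $E_u[i]=\EPREV_i(e)$ (Definition~\ref{def:prev-succ}), and symmetrically $E_v[i]=\ESUCC_i(e)$. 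Step~3 tests whether these two edges are consecutive in $\mt G_i$, including the sentinel cases $(s_i,s_i)$ and $(t_i,t_i)$. By Lemma~\ref{lem:card-colors} this decides membership of $i$ in $\col(e)\setminus\ms G(e)$. Adding $\ms G(e)$, obtained with query~(b), gives exactly $\col(e)$.

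\emph{Time.} Step~1 visits each vertex and edge once, so it takes $\mt O(|V|+|E|)=\mt O(|E|)$ time, by weak connectivity of the graph. Step~2 enumerates the in-lists of all cardinal ancestors, which is $\mt O(|E|)$ in total. It calls query~(b) on each edge at $\mt O(\log\log|V|+|\ms G(x,y)|)$ cost, summing to $\tilde{\mt O}(|E|+|\mt P|)$, and initialises $E_u$ in $\mt O(|\mt H|)$. Step~3 performs $\mt O(1)$ calls to queries (c) and (e) per colour, each taking $\mt O(\log\log|V|+\log\log|\mt H|)$, for $\tilde{\mt O}(|\mt H|)$ in total. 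Summing the three steps gives $\tilde{\mt O}(|E|+|\mt H|+|\mt P|)$.

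\emph{Main obstacle.} I expect the hardest step to be the rank-order claim in Step~2: that the last overwrite of $E_u[i]$ is indeed $\EPREV_i(e)$. It relies on edge ranks being lexicographic in (tail, head) under the topological order. If that ordering argument fails, I would instead compare ranks explicitly, keeping the maximum rank edge seen for each colour.
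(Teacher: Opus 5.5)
Your proposal matches the paper's proof: the data structure of Lemma~\ref{lem-ds-pebbles}, Algorithm~\ref{algo:edge-query}, correctness through Lemma~\ref{lem:card-colors}, and the same per-step cost accounting giving $\tilde{\mt O}(|E|+|\mt H|+|\mt P|)$. Two local corrections are needed.

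\textbf{Scan order in Step~2.} Your claim that Step~2 meets edges ``in increasing rank'' is false as stated. Ranks are lexicographic in (tail, head), but the scan is head-major. For example, if $3$ and $5$ are both cardinal ancestors of $u$, the edge $(2,3)$ is met before $(1,5)$, although $(2,3)$ has the larger rank. What you need, and what holds, is the per-colour version. All edges of $\mt G_i$ lie on the single path $h_i$, along which tails and heads increase together. Restricted to colour $i$, the scan order therefore coincides with rank order, so the last write to $E_u[i]$ is indeed $\EPREV_i(e)$. Your fallback of explicitly keeping the maximum-rank edge per colour also works.

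\textbf{Path endpoints.} Your option~1 for recovering $s_i,t_i$ cannot work in general, because a guardian does not determine the endpoints of its path. On a single directed path, for instance, every subpath has the empty guardian. Option~2 is the right choice. The paper's proof tacitly makes the same assumption: Algorithm~\ref{algo:edge-query} uses $s_i$ and $t_i$ without charging them to the stated space bound.
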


\subsection{Improving Edge Queries for \MINSP{}.}

In case $\mt P$ is a minimum pebbling, Step 2 will run in $\tilde{O}(|E|+|\mt H|)$ time thanks to the following lemma.
\begin{lemma}\label{lem:min-peb-edgequery}
    Suppose $\mt P$ is a pebbling of minimum size, let $(u,v)$ be the query edge, and 
    $(x_1, y_1), (x_2, y_2), (x_3, y_3) \in E$ be three distinct edges. If $y_1$, $y_2$ and $y_3$
    are all cardinal ancestors of $u$, then $\ms G(x_1,y_1) \cap \ms G(x_2, y_2) \cap \ms G(x_3, y_3) = \emptyset$. The same holds if $x_1, x_2$ and $x_3$ are all cardinal descendants of $v$.
\end{lemma}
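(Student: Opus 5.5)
The plan is a contradiction argument that exploits minimality. I would show that if a color $i$ appeared in all three sets $\ms G(x_1,y_1)$, $\ms G(x_2,y_2)$, $\ms G(x_3,y_3)$, then one pebble of color $i$ could be removed while $\mt P$ remains a pebbling, contradicting $|\mt P|$ minimum. As observed after Problem~\ref{problem:pebbling}, a minimum pebbling has each guardian $\mt G_i$ individually of minimum size, so it suffices to find a strictly smaller guardian for $h_i$.

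\emph{Setup.} Suppose $i\in\ms G(x_1,y_1)\cap\ms G(x_2,y_2)\cap\ms G(x_3,y_3)$. Then all three edges lie in $\mt G_i\subseteq h_i$. Edges of a path appear in increasing rank, so after relabeling we may assume they occur in the order $(x_1,y_1),(x_2,y_2),(x_3,y_3)$ along $h_i$. Hence $h_i$ contains a subpath $\rho$ from $y_1$ to $x_3$ that contains $(x_2,y_2)$.

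\emph{Key step.} I would first show that $y_1$ is a cardinal ancestor of $x_3$, using the following observation. If $a$ is a cardinal ancestor of $b$ and $w$ lies on some path from $a$ to $b$, then the path from $a$ to $w$ is unique. Indeed, two distinct $a$–$w$ paths, each extended by a fixed $w$–$b$ path, would give two distinct $a$–$b$ paths (acyclicity keeps them distinct paths). Now $\rho$ followed by the edge $(x_3,y_3)$ and then a path from $y_3$ to $u$ is a path from $y_1$ to $u$ (it exists since $y_3$ is an ancestor of $u$). Since $y_1$ is a cardinal ancestor of $u$ and $x_3$ lies on this path, the observation gives that $\rho$ is the unique $y_1$–$x_3$ path in $G$.

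\emph{Conclusion.} Remove from $\mt G_i$ every edge strictly between $(x_1,y_1)$ and $(x_3,y_3)$; there is at least one, namely $(x_2,y_2)$. In the resulting subsequence, $(x_1,y_1)$ and $(x_3,y_3)$ become consecutive, and the gap between them satisfies the unique path condition by the key step. All other gaps, including the initial gap from $s_i$ and the final gap to $t_i$, are unchanged. So the result is a guardian for $h_i$ of strictly smaller size, which contradicts the minimality of $\mt P$.

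\emph{Symmetric case.} If $x_1,x_2,x_3$ are cardinal descendants of $v$, the same argument applies with the path $v\leadsto x_1\to y_1 \leadsto x_3$ (the last part along $h_i$). Here $v$ is a cardinal ancestor of $x_3$ and $y_1$ lies on a path from $v$ to $x_3$. The suffix form of the observation (if $w$ lies on a path from $a$ to $b$ and $a$ is a cardinal ancestor of $b$, then the $w$–$b$ path is unique) again makes $\rho$ unique.

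The only delicate point I expect is this "sub-path uniqueness" observation and checking that concatenation yields genuinely distinct paths; in a DAG both are straightforward. The rest is bookkeeping on the order of edges along $h_i$.
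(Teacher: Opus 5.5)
Your proof is correct and follows the paper's route. You assume a common color $i$, order the three edges along $h_i$, use the cardinality of $y_1$ (resp.\ of $x_3$ with respect to $v$) to show that the subpath of $h_i$ from $y_1$ to $x_3$ is the unique such path in $G$, and then drop pebbles of color $i$ between $(x_1,y_1)$ and $(x_3,y_3)$ to contradict minimality. The only differences are cosmetic: you remove every intermediate pebble rather than just $(x_2,y_2)$, and your prefix/suffix observation makes explicit that only one of the three vertices needs to be cardinal, whereas the paper argues through the nested unique paths $\pi_3\subseteq\pi_2\subseteq\pi_1$.
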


\begin{proof}
    For $k=1,2,3$ let $e_k = (x_k, y_k)$ and let $\ms G_k = \ms G(e_k)$. We prove the statement in case 
    $y_1, y_2$ and $y_3$ are all cardinal ancestors of $u$, since the other case is symmetric.

    Suppose, for the sake of contradiction, that there exists some color $i \in \ms G_1 \cap \ms G_2 \cap \ms G_3$,
    and assume without loss of generality that $y_1 < y_2 < y_3$. Indeed, if two were equal, say $y_1 = y_2$, then we would have
    that $h_i$ traverses $y_1$ twice and, hence, the graph would be cyclic. Since the $y_k$'s are all cardinal ancestors of $u$, 
    there exist unique paths $\pi_k$ 
    in $G$, for $k = 1,2,3$, going from $y_k$ to $u$. Furthermore, it must be that $\pi_3$ is a subpath of $\pi_2$ (or else $y_2$ would not be cardinal) and $\pi_2$ is a subpath of $\pi_1$ (or else $y_1$ would not 
    be cardinal), hence there exists a unique path in $G$ going from $y_1$ to $x_3$.
    Therefore, since the new guardian $\mt G'_i = \mt G_i \setminus \{e_2\}$ satisfies the unique path condition
    we have that $\mt P' = \mt P \setminus \{(e_2, i)\}$ is a valid pebbling for $G$, thus contradicting
    the assumption that $\mt P$ is of minimum size.
\end{proof}

By exploiting Lemma~\ref{lem:min-peb-edgequery} in the complexity analysis of Step 2 above, we notice that 
$\sum_{(u'',u')} |\ms G(u'', u')| \leq 2|\mt H|$, where $(u'', u')$ is a pebbled edge and $u'$ is a
cardinal ancestor of $u$.
Therefore, in the case of a minimum-size pebbling an edge query can be supported in $\tilde{\mt O}(|E|+|\mt H|)$ saving the extra $|\mt P|$ addend.
This proves Theorem~\ref{thm:edge-query-min-pebbling}.

\begin{theorem}\label{thm:edge-query-min-pebbling}
Let $\mathcal{G}=(V,E,\mt H)$ be a variation graph, and let $\mt P$ be a
minimum-size pebbling of $\mathcal{G}$. In addition to the
representation of the underlying DAG, there exists a data structure
using
$O(|E|+|\mt P|\log|\mt H|)$ bits that reports $\col(e)$ for any edge
$e\in E$ in $\widetilde{O}(|E|+|\mt H|)$
time.
\end{theorem}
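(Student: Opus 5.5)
The plan is to reuse the data structure and the three-step algorithm behind Theorem~\ref{thm:edge-query-general}. Only the bound on Step~2 needs to improve, and we get it from Lemma~\ref{lem:min-peb-edgequery}. The structure is exactly the one of Lemma~\ref{lem-ds-pebbles}, together with the representation of Proposition~\ref{prop-graph-rep}. Its space is therefore $O(|E|+|\mt P|\log|\mt H|)$ bits, with no change. Correctness also carries over verbatim, since Algorithm~\ref{algo:edge-query} is correct for any pebbling and in particular for a minimum one. So the whole proof is a refined running-time analysis.

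Step~1 still computes the cardinal ancestors $\mt A$ of $u$ and the cardinal descendants $\mt D$ of $v$ in $O(|V|+|E|)=O(|E|)$ time. The time is linear because each in-list or out-list is scanned once, with $O(1)$ work per entry thanks to Proposition~\ref{prop-graph-rep}.

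Step~2 scans, for each $y\in\mt A$ and each $x\in\inn(y)$, the list $\ms G(x,y)$ using query~(b) of Lemma~\ref{lem-ds-pebbles}. Its cost is $\tilde O\bigl(\sum_{y\in\mt A}|\inn(y)| + \sum_{y\in\mt A,\,x\in\inn(y)}|\ms G(x,y)|\bigr)$, plus $O(|\mt H|)$ for initialising $E_u$ and handling the fictitious $(s_i,s_i)$ edges. The first sum is at most $|E|$. For the second sum, we fix a colour $i$ and count the edges $(x,y)$ with $y\in\mt A$ and $i\in\ms G(x,y)$. These edges are distinct and their heads are all cardinal ancestors of $u$. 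By Lemma~\ref{lem:min-peb-edgequery}, no colour can lie in the pebble sets of three such edges, so each colour is counted at most twice. Hence the second sum is at most $2|\mt H|$. The symmetric argument, applying the second half of Lemma~\ref{lem:min-peb-edgequery} to tails that are cardinal descendants of $v$, bounds the computation of $E_v$ by $\tilde O(|E|+|\mt H|)$.

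Step~3 performs $O(1)$ queries~(c) and~(e) per colour, plus one query~(b) on $e$. Since $|\ms G(e)|\le|\mt H|$, this costs $\tilde O(|\mt H|)$. Summing the three steps gives $\tilde O(|E|+|\mt H|)$.

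The main point needing care is applying Lemma~\ref{lem:min-peb-edgequery} correctly. It requires three \emph{distinct} edges, and the double counting must be per colour over distinct edges. Step~2 iterates over distinct pairs $(x,y)$, so this holds. The lemma's proof also implicitly uses that $h_i$ visits the heads in topological order, which holds because $G$ is acyclic. I would also note that the fictitious edges are handled separately in $O(|\mt H|)$, because they are not real edges of $E$ and so are not covered by the lemma.
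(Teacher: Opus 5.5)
Your proposal is correct and follows the paper's argument. You keep the structure and Algorithm~\ref{algo:edge-query} from Theorem~\ref{thm:edge-query-general} unchanged, and you use Lemma~\ref{lem:min-peb-edgequery} to show that each colour lies in $\ms G(x,y)$ for at most two edges whose head is a cardinal ancestor of $u$ (and symmetrically for tails that are cardinal descendants of $v$); this bounds the Step~2 sum by $2|\mt H|$, removes the $|\mt P|$ term, and your remarks on edge distinctness and on handling the fictitious edges separately are sensible refinements of that same proof.
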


\subsection{Supporting Path Queries}

Consider a path $h_i \in \mt H$ and its guardian $\mt G_i = (e_1, \dots, e_k)$, where $e_j = (u_j, v_j)$. 
We also let $v_0 := s_i$ and $u_{k+1} := t_i$. By the unique path 
condition, there exist unique paths in $G$ going from $v_j$ to $u_{j+1}$ for every $0 \leq j \leq k$. 
Our goal will be to detect those unique paths and return their concatenation.

\paragraph*{Algorithm.}
We give the pseudocode in Algorithm~\ref{algo:path-query}.
Remember that every vertex name corresponds to its rank in the sorted $V$, that is, the $u$-th vertex is $u$, and
that we denote by $e_j = (u_j, v_j)$ the $j$-th pebbled edge for path $h_i \in \mt H$, with the extensions of $v_0 = s_i$ and $e_{k+1} = (t_i, t_i)$, and
that $k = |\mt G_i|$.

Given the query color $i \in [1 \dd |\mt H|]$, we will compute an array of pairs $next[1 \dd |V|]$ such that, in the end, 
$next[u_j] = (v_j, j)$ and, for $u \neq u_j$, $next[u] = (v, j)$  if $j \in [1 \dd k+1]$  is the minimum integer such that there exists a path in $G$ going from 
$u$ to $u_j$, and $(u,v)$ is the first edge traversed on that path; if there is no path from $u$ to any $u_j$ we 
let $next[u] = (0,\infty)$. Note that it holds $next[t_i] = next[u_{k+1}] = (t_i, k+1)$.
We can compute $next$ by scanning $V$ from $t_i-1$ down to $s_i$: at the iteration of vertex $u$, if $u = u_j$ 
for some $j$, set $next[u] = (v_j, j)$; otherwise, determine any vertex $\bar{v} \in \out(u)$ minimizing 
the second element of pair $next[\bar{v}] = (w, j)$ and set $next[u] \gets (\bar{v}, j)$.
Note that if $u$ is on path $h_i$ then the minimizing $\bar{v} \in \out(u)$ is such that 
$next[\bar{v}] = (w,j) \neq (0, \infty)$, and there cannot exists a different $v \in \out(u)$ such that 
$next[v] = (\cdot, j)$.
%\gyn{The claim above does not seem to hold. See a possible counterexample in the figure below.}

% \begin{figure}[!h]
%     \centering
%     \includegraphics[scale=0.3]{../figs/counterex.png}
%     \caption{ \gyn{Please see whether the figure is a counter example: $\out(s)=\{a, t\}$, $next[a]=(b, 2)$, $next[t]=(t, 2)$. As a result, it is not clear how to set $next[s]$? Perhaps, setting $next[t]=(t, k+2)$ could solve your problem? I don't know; that is why we need a proof of correctness.}\bri{See above.}
%     }\label{fig:counter}
%         % $\mt P = \left\{(e_2, 1), (e_7,1), (e_{10},1), (e_1, 2), (e_5, 2), (e_{10}, 2), (e_1, 3), (e_4, 3), (e_6, 3), (e_8, 3), (e_{11}, 3)
%         % (e_3, 4), (e_9, 4), (e_{11}, 4)\right\}$
% \end{figure}

Once array $\rm{next}$ is computed, we can reconstruct $h_i$ as follows, using variable $H$ to store the partially
reconstructed path (initialized to the empty sequence) and variable $curr$ (initially set to $curr = s_i$). 
At each iteration, if $curr = t_i$ exit the procedure and return $H$; otherwise, let $next[curr] = (w, j)$ 
(where $j \neq \infty$ necessarily), and update $H \gets (H, (curr, w))$ and $curr \gets w$.

\begin{algorithm}
\caption{Algorithm for path query.}\label{algo:path-query}
\begin{algorithmic}[1]
    \nolinenumbers
    \Function{PathQuery}{$V, E, \mt P, i$}
        \State $next[1 \dd |V|] \gets$ array initialized by $(0, \infty)$
        \State $j \gets k \gets |\mt G_i|$\Comment{Lemma~\ref{lem-ds-pebbles}, query (c).}
        \State $next[t_i] \gets (t_i, k+1)$
        \For{$u = t_i-1$ down to $s_i$}
            \If{$j > 0$ and $u = u_j$}\Comment{Lemma~\ref{lem-ds-pebbles}, query (d)}
                \State $next[u] \gets (v_j, j)$ %\gyn{Perhaps I am missing something. Consider the simple path
% $s\rightarrow a\rightarrow t$ with the empty guardian, so that
% $k=j=0$. When the loop reaches $s_i=s$, the condition $u=s_i$ holds,
% and the algorithm sets $next[s_i]=(v_0,0)=(s_i,0).$
% Would this cause the reconstruction loop to remain forever at $s_i$?}
% \bri{Indeed ``$u = s_i$'' didn't even make any sense for the logic of the algorithm, I don't know what I was thinking about.}
                \State $j \gets j-1$ 
            \Else
                \For{$v \in \out(u)$}
                    \If{$next[u].second \geq next[v].second$} 
                    %\gyn{Can both $seconds$ are infinity?}
                    % \bri{Why is this a problem?}
                        \State $next[u] \gets (v, next[v].second)$
                    \EndIf
                \EndFor
            \EndIf
        \EndFor\label{algo:path-query:endfor}
        \State $curr \gets s_i$
        \State $H \gets \emptyset$
        \While{$curr \neq t_i$}
            \State Append $(curr, next[curr].first)$ to $H$
            \State $curr \gets next[curr].first$
        \EndWhile
        \State\Return $H$
    \EndFunction
\end{algorithmic}
\end{algorithm}

%\gyn{A description of proof of the correctness would be appreciated.}
% \bri{Done.}

\paragraph*{Correctness.}

It is sufficient to prove the following lemma, since the construction of the list $H$
is straightforward.
\begin{lemma}\label{lem:corr-path-query}
    After line~\ref{algo:path-query:endfor} of Algorithm~\ref{algo:path-query},
    for every $u \in V$ the following hold:
    \begin{enumerate}
        \item If $u = u_j$ for some $1 \leq j \leq k+1$, then $next[u] = (v_j, j)$, otherwise

        \item If $next[u].second = \infty$, then there is no path in $G$ going from $u$ to any $u_j$, otherwise

        \item If $next[u] = (v, j)$, then $j$ is the minimum integer such that there exists a path in $G$ 
            going from $u$ to $u_j$, and $v$ is on that path. Moreover, if $u$ is on path $h_i$, 
            $v$ is the unique minimizer of $next[w].second$ for $w \in \out(u)$.
    \end{enumerate}
\end{lemma}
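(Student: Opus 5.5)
We prove the three properties by reverse induction on $u$, following the order in which the loop of Algorithm~\ref{algo:path-query} processes vertices, from $t_i$ down to $s_i$. For vertices $u>t_i$ or $u<s_i$ the entry stays $(0,\infty)$; such vertices cannot reach any $u_j$ except possibly vertices before $s_i$, which are never queried, so they are harmless. The base case is $u=t_i=u_{k+1}$, set explicitly to $(t_i,k+1)$. In the inductive step for $u$, all out-neighbours $w\in\out(u)$ satisfy $w>u$ in the topological order. Hence their entries are already final and satisfy the claim.

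\emph{Case 1.} $u=u_j$. Since the $u_j$ appear along $h_i$ in decreasing topological order as the loop descends, the pointer $j$, read via query (d) of Lemma~\ref{lem-ds-pebbles}, matches exactly when $u=u_j$. Item 1 follows.

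\emph{Case 2.} $u$ is not a tail of a guardian edge. Then $next[u]$ takes the pair of an out-neighbour $v$ minimizing $next[v].second$. Any path from $u$ to some $u_{j'}$ starts with an edge $(u,w)$. The path from $w$ to $u_{j'}$ is either trivial, if $w=u_{j'}$, in which case $next[w].second=j'$ by item 1, or it witnesses $next[w].second\le j'$ by induction. Conversely, $next[v].second=j$ is realized by a path from $v$. Hence $\min_{w\in\out(u)} next[w].second$ is exactly the minimum reachable index, which gives items 2 and 3 except for uniqueness. Note also that reaching $u_{j}$ at index $j$ through $w=u_j$ is consistent, since $next[u_j].second=j$.

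The main obstacle is the uniqueness clause of item 3 when $u$ lies on $h_i$. Let $u$ lie on $h_i$ strictly between $v_{j-1}$ and $u_j$ (with $v_0=s_i$), and let $(u,v)$ be its successor edge on $h_i$. First we show that the minimum reachable index from $u$ is exactly $j$. It is at most $j$ via $h_i$. Suppose instead that $u$ reaches $u_{j'}$ with $j'<j$. Then $u_{j'}$ precedes $v_{j-1}$, which precedes $u$, in topological order, because the guardian edges are ordered along $h_i$. A path from $u$ back to $u_{j'}$ would therefore create a cycle, which is impossible in a DAG.

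Now suppose two distinct out-neighbours $w_1\neq w_2$ of $u$ both have $next[\cdot].second=j$. Then there are two paths from $u$ to $u_j$ with different first edges. Prepending the subpath of $h_i$ from $v_{j-1}$ to $u$ gives two distinct paths from $v_{j-1}$ to $u_j$, contradicting the unique path condition (Definition~\ref{def:uni-path}) for the guardian $\mathcal G_i$. Hence the minimizer is unique, so it must be the $h_i$-successor $v$. The comparison with $\ge$ in the loop therefore cannot pick a wrong tie.

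Finally, we handle boundary cases. When $\mathcal G_i$ is empty, the relevant segment runs from $s_i$ to $t_i$. A vertex $u_j$ itself may also lie on $h_i$; it is handled by item 1. These cases use the same argument with $v_0=s_i$ and $u_{k+1}=t_i$.
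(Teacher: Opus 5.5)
Your proof is correct and follows essentially the same route as the paper. Both use reverse strong induction in the order the loop processes vertices. Items 2--3 follow from taking the minimum of $next[w].second$ over out-neighbours. Uniqueness comes from prepending the subpath of $h_i$ from $v_{j-1}$ to $u$, which yields two distinct paths to $u_j$ and contradicts the unique path condition.

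You also fill in two details the paper passes over. First, your acyclicity argument shows that an on-path vertex of the $j$-th segment cannot reach any $u_{j'}$ with $j'<j$. Second, you note that vertices before $s_i$ are never processed, so item 2 really only holds for $u\ge s_i$; the paper's induction over all of $V$ overlooks this. One small point: make sure the uniqueness step also covers $u=v_{j-1}$ itself, in particular $u=s_i$, where the same argument applies with an empty prefix.
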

\begin{proof}
    Since every vertex $u$ is the $u$-th vertex in $V$, we treat vertices as integers,
    and we denote by $x_n$ the vertex $t_i-n$, for every $t_i - |V| \leq n < t_i$. In
    particular, when $n < 0$ we have $t_i < x_n \leq |V|$.
    
    We prove the claim for vertex $x_n$ by strong induction on $t_i - |V| \leq n < t_i$.

    (Base) For $n < 0$ the claim trivially holds, since $x_n > t_i$ and there are no paths
        from $x_n$ to any $u_j$, and $next[x_n] = (0, \infty)$ is never updated. For
        $n = 0$ we have $x_n = t_i = u_{k+1} = v_{k+1}$, and it holds 
        $next[u_{k+1}] = (v_{k+1}, k+1)$.
        
    (Step) Assume the claim holds for all $k \leq n$, we prove it for $n+1$.
        \begin{enumerate}
            \item If $x_{n+1} = u_j$ for some $j > 0$, then the algorithm 
                correctly sets $next[x_{n+1}] = (v_j, j)$.

            \item If $next[x_{n+1}].second = \infty$, then for every $w \in \out(x_{n+1})$ (if any) it holds
                $next[w].second = \infty$. Since $w = x_k$ for some $k \leq n$, by induction
                hypothesis, there is no path going
                from $w$ to any $u_j$, thus there is no path from $x_{n+1}$ to any $u_j$ either.
                Since the entry for $x_{n+1}$ is never updated, in the end it will correctly hold 
                $next[x_{n+1}] = (0, \infty)$.

            \item If $next[x_{n+1}] = (v, j)$ and $j \neq \infty$, then by the algorithm it holds 
                $j = \min\{next[w].second : w \in \out(x_{n+1})\}$, and by induction hypothesis $j$ 
                is also the minimum integer for which there exists a path in $G$ going from $v$ to $u_j$. 
                Thus, there exists a path from $x_{n+1}$ to $u_j$ via $v$. If there was a path from $x_{n+1}$ 
                to $u_{j'}$ for some $j' < j$, it would traverse some $v' \in \out(x_{n+1})$ and
                thus $j$ would not be chosen as the minimum. Finally, assume $x_{n+1}$ is on path $h_i$ and
                suppose, for the sake of contradiction, that there exists $v' \in \out(x_{n+1})$ distinct from $v$
                such that $next[v'].second = j$. Then, by induction hypothesis we have that there exists a
                path going from $v'$ to $u_j$ and, thus, there would be two distinct paths going from
                $x_{n+1}$ to $u_j$ and, in particular, two distinct paths going from $v_{j-1}$ to $u_j$, 
                contradicting the unique path condition of $\mt G_i$.
                Hence $next[x_{n+1}] = (v, j)$ is correct.
        \end{enumerate}
\end{proof}

\paragraph*{Complexity.} Every vertex and every edge of $G$ is visited once. For each iteration $u$, enumerating $\out(u)$ takes $\mt{O}(|\out(u)|)$ time. Testing whether
$u = u_j$ takes constant time (using Lemma~\ref{lem-ds-pebbles}, query (d)). Hence, in total the path query is supported in $\mt{O}(|V|+|E|) \subseteq \mt{O}(|E|)$ time.
This gives Theorem~\ref{thm:path-query}.

%\gdv{Notice that the results in this section hold for all guardians of $\mt H$, even if the guardian does not have minimum size. At the same time, the time and space complexities depend on the size of the guardian, which justifies our focus on smallest guardians. But even approximate solutions to the  \MINSP{} result if efficient queries.}\bri{Agree
%with this comment. We should put it somewhere. Maybe at the beginning of the section?}

\begin{theorem}\label{thm:path-query}
Let $\mathcal{G}=(V,E,\mt H)$ be a variation graph, and let $\mt P$ be a
pebbling of $\mathcal{G}$. In addition to the representation of the
underlying DAG, there exists a data structure using
$O(|E|+|\mt P|\log|\mt H|)$ bits that, given a color
$i\in[1\dd|\mt H|]$, sequentially reports all edges of the path $h_i$
in $O(|E|)$ time.
\end{theorem}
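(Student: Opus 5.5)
The data structure is the one of Lemma~\ref{lem-ds-pebbles}, stored on top of the representation of Proposition~\ref{prop-graph-rep}. The space bound $\mt O(|E|+|\mt P|\log|\mt H|)$ bits therefore follows at once. The query procedure is Algorithm~\ref{algo:path-query}, and the proof has two parts: correctness of the reported sequence, and the $\mt O(|E|)$ running time.

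For correctness I rely on Lemma~\ref{lem:corr-path-query}. The aim is to show that the while-loop, started at $curr=s_i$, follows exactly the edges of $h_i$ and stops at $t_i$. I argue by induction along $h_i$ that every value taken by $curr$ is a vertex of $V_i$. There are two cases for a vertex $u$ of $h_i$.

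\textbf{Case $u=u_j$.} Item~1 of the lemma gives $next[u]=(v_j,j)$. The edge $e_j$ is a pebbled edge of $h_i$, so it is the correct successor of $u$.

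\textbf{Case $u$ lies strictly inside the segment between $v_{j-1}$ and $u_j$.} Here $u$ reaches $u_j$ along $h_i$. It cannot reach any $u_{j'}$ with $j'<j$, because $u_{j'}$ precedes $u$ in topological order. By item~3, $next[u].second=j$ and the chosen out-neighbour is the unique minimiser. The $h_i$-successor of $u$ is also a minimiser, since it reaches $u_j$ along $h_i$. Hence the chosen out-neighbour is exactly the successor of $u$ on $h_i$.

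Since $t_i=u_{k+1}$ and $G$ is acyclic, the walk terminates at $t_i$ after exactly $|h_i|$ steps. The sequence $H$ it reports is therefore $h_i$.

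For the running time, the initialisation of $next$ costs $\mt O(|V|)$. The size $k=|\mt G_i|$ comes from query~(c) of Lemma~\ref{lem-ds-pebbles}. In the main loop, each vertex $u$ is tested against the current $u_j$; the pointer $j$ only decreases, and each $e_j$ is fetched with query~(d) in $\mt O(1)$ time, which can be cached. Scanning $\out(u)$ costs $\mt O(1+|\out(u)|)$ by Proposition~\ref{prop-graph-rep}. Summing over all vertices gives $\mt O(|V|+|E|)$. The reconstruction loop runs $|h_i|\le |E|$ iterations. Finally, $|V|\le|E|+1$ because the graph is weakly connected (as assumed in Proposition~\ref{prop-graph-rep}), so the total is $\mt O(|E|)$.

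The main obstacle is the uniqueness part of item~3 in Lemma~\ref{lem:corr-path-query}: when $u$ lies on $h_i$, two out-neighbours with the same minimal index $j$ would give two distinct paths from $v_{j-1}$ to $u_j$. This must also cover the segment before the first pebble, where $v_0=s_i$, and the case where $u$ itself is $v_{j-1}$. In every case the contradiction comes from prepending the subpath of $h_i$ from $v_{j-1}$ to $u$ to both paths, which violates the unique path condition of $\mt G_i$. I would check these boundary cases explicitly before assembling the final argument.
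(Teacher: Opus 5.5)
Your proposal is correct and follows the paper's proof. It uses the same data structure of Lemma~\ref{lem-ds-pebbles}, Algorithm~\ref{algo:path-query}, correctness via Lemma~\ref{lem:corr-path-query} (whose uniqueness clause rests on exactly the prepending argument you describe), and the same $\mt O(|V|+|E|)\subseteq\mt O(|E|)$ count. You merely spell out the induction showing that the walk follows $h_i$, which the paper calls straightforward.

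One caveat applies equally to the paper: the algorithm uses $s_i$ and $t_i$, which no query of Lemma~\ref{lem-ds-pebbles} or Proposition~\ref{prop-graph-rep} returns. They must therefore be regarded as given alongside the DAG, since storing them costs $\mt O(|\mt H|\log|V|)$ further bits, which the stated bound does not cover.
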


\subsection{Exploiting Saturation}

We have seen that for an unrestricted pebbling we incur into an $\tilde{\mt O}(|E|+|\mt H|+|\mt P|)$ running time
to answer an edge query, and this time can be reduced to $\tilde{\mt O}(|E|+|\mt H|)$ in case $\mt P$ is of minimum size. 
Thus, in the worst case, we still have to look at every edge in the graph; this is due to the fact that multiple pebbled edges of different colors may
lie on a same path in $G$. For this reason, when (an edge incident to) a cardinal ancestor $u'$ (or descendant $v'$)
is pebbled by some color $i$, we must continue the search for other ancestors/descendants beyond $u'$ and $v'$.

In this final part, we present an algorithm that exploits the saturation property to reduce the running time to $\tilde{\mt O}(|\kappa(e)|+|\partial(e)|+|\mt H|)$, where $\kappa(e)$ and $\partial(e)$ are determined by a subgraph of $G$ induced by $\mt P$ and the query edge $e$. Intuitively, when exploring the descendants of $v$ or the ancestors of $u$, the search along a path can stop as soon as a pebbled edge is encountered. All the pebbled edges found in this way
will be sufficient to determine $\col(e)$.

% In this final part we will present a heuristic\gyn{I would not call it ``heuristic''}\bri{Change it as you prefer.} that exploits the saturation property in order
% to reduce the running time to $\tilde{\mt O}(|\kappa(e)|+|\partial(e)|+|\mt H|)$, where $\kappa(e)$ and $\partial(e)$
% depend on a particular subgraph of $G$ induced by $\mt P$ and $e$. Intuitively, when
% exploring the descendants of $v$ (or the ancestors of $u$) along some path and a pebbled edge is encountered, the
% exploration along that path can be stopped. All the pebbled edges found in this way
% will be sufficient to determine $\col(e)$.

In what follows, we will assume that the query edge $e$ is not pebbled. Indeed, since
the pebbling is saturated, in case $e$ is pebbled it holds $\col(e) = \ms G(e)$, and
the query can be answered directly using Lemma~\ref{lem-ds-pebbles} query (b).
Moreover, as already done for Definition~\ref{def:prev-succ}, in order to simplify
the exposition we assume to have in $G$ the fictitious edges $(s_i, s_i)$ and $(t_i, t_i)$
that are also pebbled by $i$ (hence, they are also elements of $\mt G_i$).
%\pb{In my opinion we can keep these edges but I would explicitly call them loop edges, but I do not like the idea that they are also part of the guardian list, as the definition  consider $s_i$  ad $t_i$...another solution would be that we modify the notion of guardian from the beginning with these loop edges ?}

We start by formally defining the subgraph induced by the query edge.
\begin{definition}\label{def:boundary}
    Let $e = (u, v) \in E$ with $\ms G(e) = \emptyset$. We denote by $G(e)$ the weakly connected component of $G$ that contains $e$ obtained by removing all pebbled edges in $G$. Furthermore, we say that
    a pebbled edge $(x,y)$ is a \emph{boundary edge for $G(e)$} if $x$ is in $G(e)$ and it is a descendant of $v$,
    or $y$ is in $G(e)$ and it is an ancestor of $u$. The collections of edges of $G(e)$ and of boundary edges for $G(e)$ are denoted by 
    $\kappa(e)$ and $\partial(e)$, respectively.
%\gyn{I was planning to help address Gianluca's first comment in the Introduction. Then I noticed that the definition of $\partial(e)$ had been changed. In fact, defining $\partial(e)$ as the set of pebbled edges with at least one endpoint in $G(e)$ is harmless and would make Gianluca's comment easy to address. In any case, if you prefer the new definition, that is also fine with me.}\bri{For me it is ok in either ways, choose what you prefer.}
\end{definition}

Recall Definition~\ref{def:prev-succ} of $\EPREV_i(\cdot)$ and $\ESUCC_i(\cdot)$. The following
is a simple fact that follows from the saturation property.

% \gyn{Imagine that the input graph consists of the single path
% $s\rightarrow a\rightarrow b\rightarrow t$. No pebbles are needed.
% For the query edge $(a,b)$, Definition~17 gives
% $\EPREV_i((a,b))=(s_i,s_i)$. However, $(s_i,s_i)$ is a fictitious edge
% and therefore does not belong to $\partial((a,b))$, does it?} \bri{I've explained more precisely
% what is a fictitious edge.}
% \gyn{Note that $(s_i, s_i)\notin E$, so $(s_i, s_i)\notin \partial(e)$?}\bri{``Moreover, as already done for Definition~\ref{def:prev-succ}, in order to simplify
% the exposition we assume to have in $G$ the fictitious edges $(s_i, s_i)$ and $(t_i, t_i)$
% that are also pebbled by $i$ (hence, they are also elements of $\mt G_i$).'' I removed the membership to $E$ since it was very very confusing.}
\begin{fact}\label{fact:prev-succ}
Let $e=(u, v)$ be unpebbled and $i\in \col(e)$. 
Then
$\EPREV_i(e)\neq null$ and
$\ESUCC_i(e)\neq null$.
Moreover, if $\EPREV_i(e)=(x, y)$ (resp. $\EPREV_i(e)=(s_i, s_i)$), then there exists an unpebbled path from $y$ (resp. $s_i$) to $u$.
Symmetrically, if $\ESUCC_i(e)=(x, y)$ (resp. $\ESUCC_i(e)=(t_i, t_i)$), then there exists an unpebbled path from $v$ to $x$ (resp. $t_i$).
\end{fact}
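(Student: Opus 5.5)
The plan is to use Lemma~\ref{lem:card-colors} for the first claim and the saturation property for the second. Since $e$ is unpebbled and $\mt P$ is saturated, $\ms G(e)=\emptyset$. In particular $i\notin\ms G(e)$, so Lemma~\ref{lem:card-colors} applies. Because $i\in\col(e)$, the lemma gives $\EPREV_i(e)=e_u$ and $\ESUCC_i(e)=e_v$. These are actual edges of $\mt G_i$ or the fictitious $(s_i,s_i)$, $(t_i,t_i)$, so neither is $null$. Moreover $e_u=(u'',u')$ and $e_v=(v',v'')$ are consecutive in the extended guardian, and $h_i$ passes through $u'$, then $e$, then $v'$.

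For the second part, I would take the subpath $\pi$ of $h_i$ from $u'$ (the head of $\EPREV_i(e)$, or $s_i$ in the fictitious case) to $u$. This is a path in $G$ from $y$ to $u$. It remains to show $\pi$ has no pebbled edge. Suppose some edge $f\in\pi$ is pebbled. Since $f$ lies on $h_i$, we have $i\in\col(f)$. By saturation, $\ms G(f)=\col(f)$, so $i\in\ms G(f)$ and $f\in\mt G_i$.

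The edge $f$ appears on $h_i$ strictly after $e_u$ and strictly before $e$, hence before $e_v$. The order of edges along $h_i$ agrees with rank order, because ranks follow the topological order of tails. So $f$ would be an element of $\mt G_i$ lying strictly between the consecutive elements $e_u$ and $e_v$. This contradicts their consecutiveness. Therefore $\pi$ is unpebbled.

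The symmetric statement is handled the same way: take the subpath of $h_i$ from $v$ to $x=v'$ (or to $t_i$), and any pebbled edge on it would again be a guardian edge of color $i$ strictly between $e_u$ and $e_v$.

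The only delicate step is justifying that the subpath of $h_i$ from $u'$ to $u$ contains no element of $\mt G_i$. Here the plan is to rely on the ($\Rightarrow$) direction of Lemma~\ref{lem:card-colors}: $e$ lies on the segment of $h_i$ between the consecutive guardian edges $e_u$ and $e_v$. A guardian edge inside that segment, together with rank monotonicity along a path, would contradict consecutiveness. The boundary cases where $e_u$ or $e_v$ is fictitious are covered by the convention that $(s_i,s_i)$ has smallest rank and $(t_i,t_i)$ has largest rank.
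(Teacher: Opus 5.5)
Your proof is correct. It is essentially the argument the paper leaves implicit when it calls this a simple consequence of saturation: Lemma~\ref{lem:card-colors} gives non-null $\EPREV_i(e)$ and $\ESUCC_i(e)$ that are consecutive in the augmented guardian, and saturation forces any pebbled edge on the corresponding subpath of $h_i$ into $\mt G_i$, contradicting consecutiveness. One minor point: you do not need saturation to get $\ms G(e)=\emptyset$, since this holds simply because $e$ is unpebbled.
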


% Fact~\ref{fact:prev-succ} together with Lemma~\ref{lem:card-colors} ensure that
% to determine $\col(e)$ it is sufficient to focus on the graph $G^+(e) = G(e)\cup\partial(e)$ and,
% in particular, on the edges in $\partial(e)$, {plus the pairs
% $(s_i, s_i), (t_i, t_i)$ that act as special cases.}
Fact~\ref{fact:prev-succ}, together with Lemma~\ref{lem:card-colors}, implies that, to determine $\col(e)$ for an unpebbled edge $e$, it suffices to explore $G(e)$ along unpebbled edges and consider the pebbled edges encountered at its boundary. More precisely, the backward traversal from $u$ identifies the possible values of $\EPREV_i(e)$, while the forward traversal from $v$ identifies the possible values of $\ESUCC_i(e)$.

% \begin{figure}[!t]
%     \centering
% \includegraphics[scale=0.4]{../figs/colored_pebbling.png}
%     \caption{In this example, each edge itself is an independent haplotype path. The crosses indicate the edges that are pebbled. Note that it is a saturated pebbling, but might not be an optimal one. The connected component $G((u, v))$ consists of the edges $(1,2), (2, 4)$ and $(3,4)$. The traversal algorithm might not be able to access the pebbled edge $(0,1)$, but $(0,1)$ is a boundary edge for that component.\bri{Edge $(0,1)$
%     is not a boundary edge of this component, since neither $1$ is an ancestor of $u=3$ nor $0$ is a descendant of $v=4$.}
%     }\label{fig:counterexp}
% \end{figure}

\paragraph*{Algorithm.}

Let $e = (u,v) \in E$ be the query edge. If $e$ is pebbled, we can return $\ms G(e)$ using
Lemma~\ref{lem-ds-pebbles}, query (b). Otherwise, in light of the above discussion, we will
find $\EPREV_i(e)$ and $\ESUCC_i(e)$ for every $i$ and check if they are consecutive in $\mt G_i$.
The algorithm is described as follows:

% One possible way is to use an algorithm similar to Algorithm~\ref{algo:edge-query} where only
% the cardinal ancestors and descendants in $G(e)$ are computed. This would require to maintain
% $G^+(e)$ in memory. Instead, we present a different algorithm specifically tailored to use the
% saturation property:\\
\begin{enumerate}
    \item Construct two sets $A$ and $D$ by performing two directed traversals that use only unpebbled edges.\\
    
    To construct $A$, start from $u$ and traverse the graph backwards. Whenever a vertex $y$ is visited, inspect all its incoming edges. If an incoming edge $(x, y)$ is unpebbled, continue the traversal from $x$. If $(x, y)$ is pebbled, add it to $A$, but do not cross that edge. Moreover, for every visited vertex $y$, add the fictitious edge $(s_i, s_i)$ for every color $i$ such that $s_i=y$.
    Thus, $A$ contains the pebbled edges $(x, y)$ for which there exists an unpebbled path from $y$ to $u$, as well as the fictitious edge $(s_i, s_i)$ for which there exists an unpebbled path from $s_i$ to $u$.\\

    Symmetrically, construct $D$ by starting from $v$ and traversing the graph forward. Whenever a vertex $x$ is visited, inspect all its outgoing edges. If an outgoing edge $(x, y)$ is unpebbled, continue the traversal from $y$. If $(x, y)$ is pebbled, add it to $D$, but do not cross that edge. Moreover, for every visited $x$, add the fictitious edge $(t_i, t_i)$ for every color $i$ such that $t_i=x$. Thus, $D$ contains the pebbled edge $(x, y)$ for which there exists an unpebbled path from $v$ to $x$, as well as the fictitious edge $(t_i, t_i)$ for which there exists an unpebbled path from $v$ to $t_i$.\\
    % Collect the set $B = \partial(u,v) \cup \{(s_i, s_i) : \text{$s_i$ in $G(e)$ is ancestor of $u$}\} \cup \{(t_i, t_i) : \text{$t_i$ in $G(e)$ is descendant of $v$}\}$. This can be done
    %     by performing a traversal \gyn{If I was not wrong, you need two traversals. Please elaborate more on at least one of the two traversals such as where the traversal begins, the directions of the traversals, i.e., forward or backward, and how the source and sink sentinels $(s_i, s_i)$ or $(t_i, t_i)$ are collected, etc.}\bri{You can do both directions with a unique traversal, you just need a flag to say "go back" or "go forth". Think about a BFS with two sources ($u$ and $v$) and the elements in the queue have this additional flag. Am I missing something?} of $G$ that does not continue the exploration past a pebbled edge. \gyn{I think the algorithm might not collect the entire set $B$, to be precise. Instead, it should collect the set $\{(x, y)\mid \text{$(x, y)$ is pebbled and there is an unpebbled path from $y$ to $u$.}\}$. See Figure~\ref{fig:counterexp} for details.}\bri{Edge $(0,1)$ in Figure~\ref{fig:counterexp} does not satisfy the definition of boundary edge.}
    \item Initialize the arrays $Prev[1..|\mt H|]$ and $Succ[1..|\mt H|]$ with null values. \\

    Sort the elements of $A$ in decreasing order of their edge ranks, treating every fictitious edge $(s_i, s_i)$ as smaller than every ordinary edge. Scan the elements of $A$ in this order.
    For each ordinary pebbled edge $e'\in A$, enumerate the color $\mt G(e')$.
    Whenever a color $i$ is encountered, set $Prev[i]$ to $e'$ if $Prev[i]$ is still null. If $Prev[i]$ has already been assigned, then stop enumerating the remaining colors of $\mt G(e')$ and continue with the next element of $A$. When the element being processed is $(s_i, s_i)$, process its only associated color $i$ and assign $(s_i, s_i)$ to $Prev[i]$ if $Prev[i]$ is still null.\\
    
    Symmetrically, sort the elements of $D$ in increasing order of their edge ranks, treating every fictitious edge $(t_i, t_i)$ as larger than every ordinary edge. Scan the elements of $D$ in this order.
    For each ordinary pebbled edge $e'\in D$, enumerate the color $\mt G(e')$.
    Whenever a color $i$ is encountered, set $Succ[i]$ to $e'$ if $Succ[i]$ is still null. If $Succ[i]$ has already been assigned, then stop enumerating the remaining colors of $\mt G(e')$ and continue with the next element of $D$.
    When processing a fictitious edge $(t_i, t_i)$, simply set $Succ[i]=(t_i, t_i)$ if $Succ[i]$ is stil null.\\

    \item Initialize $\col(e)$ as an empty set.
    For each color $i$, consider two edges stored in $Prev[i]$ and $Succ[i]$.
    If either entry is null, then skip color $i$.
    Otherwise, treat the fictitious edge $(s_i, s_i)$ as appearing immediately before the first edge of $\mt G_i$, and treat $(t_i, t_i)$ as appearing immediately after the last edge of $\mt G_i$.
    Check whether $Prev[i]$ and $Succ[i]$ are consecutive in this augmented order, meaning that no edge of $\mt G_i$ lies between them.
    This also covers the case in which $Prev[i]=(s_i, s_i)$, $Succ[i]=(t_i, t_i)$, or both occur when $\mt G_i$ is empty.
    If they are consecutive, add $i$ to $\col(e)$.
    After all colors have been processed, return $\col(e)$.
\end{enumerate}

\paragraph*{Correctness.}

We first observe the following property.
\begin{lemma}\label{lem:prev-all}
    Let $e=(u, v)$ be not pebbled and $i \in \col(e)$. Let $e_u = \EPREV_i(e)$ and $e_v = \ESUCC_i(e)$. Then the following hold:
    \begin{enumerate}
        \item If $e_u = (s_i, s_i)$, then $e_u = \EPREV_j(e)$
            for every $j$ such that $s_i = s_j$; otherwise, $e_u = \EPREV_j(e)$ for every $j \in \ms G(e_u)$.
        
        \item If $e_v = (t_i, t_i)$, then $e_v = \ESUCC_j(e)$
            for every $j$ such that $t_i = t_j$; otherwise, $e_v = \ESUCC_j(e)$ for every $j \in \ms G(e_v)$.
    \end{enumerate}
\end{lemma}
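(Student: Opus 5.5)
The plan is to work directly from Definition~\ref{def:prev-succ}, using Fact~\ref{fact:prev-succ} and the fact that in a DAG the edges of any path appear in increasing rank order. I prove item~1; item~2 is symmetric, swapping ancestors/descendants and $\inn$/$\out$ and reversing the rank order. Since $e$ is unpebbled, $j \notin \ms G(e)$ for every $j$, so $\EPREV_j(e)$ is defined for all colors considered.

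\emph{Case $e_u=(x,y)$ ordinary.} Fix $j\in\ms G(e_u)$. First, $e_u$ is a candidate for $\EPREV_j(e)$: it lies in $\mt G_j$, its rank is smaller than that of $e$, and its head $y$ is a cardinal ancestor of $u$, as witnessed by $\EPREV_i(e)=e_u$. It remains to exclude a better candidate. Suppose $e'=(x',y')\in\mt G_j$ has rank strictly between those of $e_u$ and $e$, and $y'$ is a cardinal ancestor of $u$.
\begin{itemize}
\item Both $e_u$ and $e'$ lie on $h_j$, and $e'$ has larger rank, so $h_j$ traverses $e_u$ before $e'$. Hence there is a path from $y$ to $x'$ along $h_j$, followed by the edge $e'$.
\item Since $y'$ is an ancestor of $u$, some path goes from $y'$ to $u$. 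Concatenating gives a path $y\leadsto x'\to y'\leadsto u$ that contains $e'$.
\item By Fact~\ref{fact:prev-succ}, since $i\in\col(e)$, there is an unpebbled path from $y$ to $u$.
\item Because $y$ is a cardinal ancestor of $u$, the path from $y$ to $u$ is unique. So the constructed path equals the unpebbled one.
\item Then $e'$, which is pebbled by $j$, lies on an unpebbled path. This is a contradiction, so $\EPREV_j(e)=e_u$.
\end{itemize}

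\emph{Case $e_u=(s_i,s_i)$.} Fix $j$ with $s_j=s_i$. Then $s_j$ is a cardinal ancestor of $u$, so by Definition~\ref{def:prev-succ} it suffices to show that no ordinary edge $e'=(x',y')\in\mt G_j$ of rank smaller than $e$ has $y'$ a cardinal ancestor of $u$. The argument is the same as above. The path $h_j$ starts at $s_i$ and reaches $x'$, so $s_i\leadsto x'\to y'\leadsto u$ is a path through the pebbled edge $e'$. By Fact~\ref{fact:prev-succ} there is an unpebbled path from $s_i$ to $u$, and the path from $s_i$ to $u$ is unique. These two facts contradict each other.

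The main obstacle is the second bullet of the ordinary case: locating the competing edge $e'$ on the unique $y$--$u$ path. Rank comparisons alone do not force this. The step works by using that $e_u$ and $e'$ are both pebbled by the same color $j$, so $h_j$ itself supplies the connecting path from $y$ to $x'$. After that, uniqueness of the path from $y$ to $u$, together with the unpebbled path from Fact~\ref{fact:prev-succ}, closes the argument.
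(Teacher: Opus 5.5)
Your proof is correct and follows essentially the paper's route: in both arguments, the path that starts at the head of $e_u$ (or at $s_i$), follows $h_j$ through the competing edge $e'$, and then continues to $u$ is forced by uniqueness to coincide with a path already known to exist. The only difference is the closing step. The paper identifies this path with the subpath of $h_i$ and uses saturation to get $i\in\ms G(e')$, contradicting the maximality of $\EPREV_i(e)$. You instead identify it with the unpebbled path of Fact~\ref{fact:prev-succ}, which packages exactly that saturation argument and, as you observe, only needs $y'$ to be an ancestor of $u$.
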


\begin{proof}
    We prove only (1), since the proof of (2) is symmetric.

    Let $r=u'$ if $e_u=(u'',u')\in\mt G_i$, and let $r=s_i$ if
    $e_u=(s_i,s_i)$. By Definition~\ref{def:prev-succ}, $r$ is a
    cardinal ancestor of $u$.
    Consider a color $j\in\ms G(e_u)$ when $e_u$ is an ordinary edge,
    or a color $j$ such that $s_j=s_i$ when $e_u=(s_i,s_i)$.
    Suppose, for the sake of contradiction, that
    $\EPREV_j(e)=e'\neq e_u$.

    As $\EPREV_j(e)=e'$ and $e'\neq e_u$,
    the edge $e'$ must occur after $e_u$ in $h_j$. Let $e'=(x,y)$. By the
    definition of $\EPREV_j(e)$, $y$ is a cardinal ancestor of $u$.
    Therefore, the subpath of $h_j$ from $r$ to $y$, followed by the
    unique path from $y$ to $u$, forms a path from $r$ to $u$ that
    traverses $e'$.

    Since $r$ is a cardinal ancestor of $u$, this path is the unique
    path from $r$ to $u$. In particular, it coincides with the subpath
    of $h_i$ from $r$ to $u$. Hence, $h_i$ traverses $e'$. Since the
    pebbling is saturated, this implies $i\in\ms G(e')$.
    Moreover, $y$ is a cardinal ancestor of $u$ and
    $e_u<e'<e$. Thus, $e'$ satisfies the conditions in the definition
    of $\EPREV_i(e)$ and has rank larger than that of $e_u$, contradicting
    the choice of $e_u=\EPREV_i(e)$. Hence, it has to be $\EPREV_j(e)=e_u$.
\end{proof}

\pb{give an intuition of the meaning of the previous result}
% \begin{proof}
%     Let $e = (u,v)$ be not pebbled and $i \in \col(e)$. We prove only (1), since the 
%     proof of (2) is analogous.
%     For every $w \in V$ we extend $\ms G(\cdot)$ as 
%     $\ms G(w, w) = \{j \in [1 \dd \mt |H|] : w = s_j\}$.
    
%     Suppose, for the sake of contradiction, that $e_u = (u'',u') = \EPREV_i(e)$ and 
%     $e' = (x, y) = \EPREV_j(e)$ with $e_u \neq e'$ for some $j \in \ms G(e_u)\setminus\{i\}$. 
%     By Definition~\ref{def:prev-succ}, it follows $j \in \ms G(e_u) \cap \ms G(e')$ and $e_u < e'$. 
%     Thus, $h_j$ traverses both $e_u$ and $e'$ in this order. Furthermore, $i \notin \ms G(e') = \col(e')$, 
%     or $e_u$ would not be $\EPREV_i(e)$. Since $i \in \col(e)$, path $h_i$ traverses both $e_u$ and $e$ in 
%     this order, and does not traverse $e'$. Therefore, there exists two paths in $G$ going from $u'$ to $u$:
%     one is the subpath of $h_i$ traversing both $e_u$ and $e$ but not $e'$, the other is the concatenation between the subpath of $h_j$
%     traversing both $e_u$ and $e'$, and any path going from $y$ to $u$ (which must exist since $y$ is an ancestor of $u$). 
% Thus, either $\mt G_i$ does not satisfy the unique path
%     condition, or there exists another edge $e'' \in \partial(e)$ such that $i \in \ms G(e'')$ and $e_u < e''$,
%     contradicting the fact that $e_u = \EPREV_i(e)$.
% \end{proof}

For the correctness of the algorithm, we prove that Step~2 correctly identifies $\EPREV_i(e)$ and $\ESUCC_i(e)$ of every color $i$ whose path traverses the query edge $e$.

\begin{lemma}\label{lem:edge-sat-corr}
    At the end of Step 2, for every $i \in \col(e)$ it holds $Prev[i] = \EPREV_i(e)$
    and $Succ[i] = \ESUCC_i(e)$.
\end{lemma}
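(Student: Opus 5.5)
We argue the claim for $Prev$; the statement for $Succ$ follows by the symmetric argument with $D$, increasing order, and Lemma~\ref{lem:prev-all}(2). Fix $i\in\col(e)$ and let $e_u=\EPREV_i(e)$, which is non-null by Fact~\ref{fact:prev-succ}.

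\emph{Step 1: $e_u\in A$.} By Fact~\ref{fact:prev-succ} there is an unpebbled path from the head $y$ of $e_u$ (or from $s_i$) to $u$. The backward traversal from $u$ follows exactly the unpebbled edges, so it visits $y$ (resp.\ $s_i$). When it inspects the incoming edges of $y$, it finds the pebbled edge $e_u$ and inserts it into $A$; in the fictitious case it inserts $(s_i,s_i)$.

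\emph{Step 2: no element of $A$ scanned before $e_u$ carries color $i$.} Every $e'=(x,y')\in A$ has an unpebbled path from $y'$ to $u$. Suppose $i\in\ms G(e')$ with $e'$ of larger rank than $e_u$; note $e'\neq e$ because $e$ is unpebbled. We claim $y'$ is a cardinal ancestor of $u$ and $e'<e$, which would contradict the maximality of $e_u$ in Definition~\ref{def:prev-succ}.

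Since $e'$ and $e_u$ both lie on $h_i$, which also traverses $e$, the order along $h_i$ is $e_u$, then $e'$, then $e$. If instead $e'$ came after $e$ on $h_i$, then $y'$ would be a descendant of $v$. Yet $y'$ also reaches $u$, which would create a cycle. So $e'$ lies on $h_i$ strictly between $e_u$ and $e$.

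Then $y'$ lies on the subpath of $h_i$ from the head $r$ of $e_u$ to $u$. That subpath is the unique $r$–$u$ path, since $r$ is a cardinal ancestor of $u$. Any second $y'$–$u$ path, prefixed by the $r$–$y'$ part of $h_i$, would give a second $r$–$u$ path. Hence $y'$ is cardinal, the contradiction applies, and no earlier element carries $i$.

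In the fictitious case there is no pebbled edge ahead of $e_u$ to rule out. Here $(s_i,s_i)$ is scanned after all ordinary edges, and the same argument shows no ordinary edge of $A$ carries $i$.

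\emph{Step 3: the early-stop rule never skips color $i$ at $e_u$.} Scanning $e_u$'s colors, we might encounter some $j$ with $Prev[j]$ already set and stop before reaching $i$. That $Prev[j]$ was set by an earlier element $e''\in A$ with $j\in\ms G(e'')$. By Lemma~\ref{lem:prev-all}(1) applied to $i$, we have $\EPREV_j(e)=e_u$ for every $j\in\ms G(e_u)$.

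We would then like to show that $j\in\col(e)$, or at least that $e''$ can play the role of $e'$ in Step 2 for color $j$. The key point is that $h_j$ traverses both $e_u$ and $e''$. The subpath of $h_j$ from $r$ through $e''$ to $y''$, followed by an unpebbled path from $y''$ to $u$, is an $r$–$u$ path. By cardinality of $r$ this path coincides with $h_i$'s subpath, so $h_i$ traverses $e''$. By saturation $i\in\ms G(e'')$, which contradicts Step 2.

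Step 3 is the main obstacle. The ordering issue, whether $e''$ comes after $e_u$ on $h_j$, has to be settled from the rank order, which is consistent with path order in a DAG. I would therefore handle Step 3 carefully by separating two cases: $e''$ following $e_u$ on $h_j$, or $e''$ preceding it. The latter case is impossible because $e''$ has larger rank than $e_u$.
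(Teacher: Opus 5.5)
Your proof is correct and rests on the same idea as the paper's. After placing $e_u$ in $A$ via Fact~\ref{fact:prev-succ}, any element scanned before $e_u$ that shares a color with it lies, through its unpebbled path to $u$, on the unique path from $r$ to $u$; so its head is a cardinal ancestor of $u$, contradicting the maximality in Definition~\ref{def:prev-succ}. The only difference is bookkeeping: the paper handles both failure modes at once and derives the contradiction for the stopping color $j$ using $\EPREV_j(e)=e_u$ from Lemma~\ref{lem:prev-all}, whereas you transfer it back to color $i$ via saturation and your Step~2, so your appeal to Lemma~\ref{lem:prev-all} in Step~3 is never actually used.
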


\begin{proof}
    Let $i\in\col(e)$ and let $e_u=\EPREV_i(e)$.
    By Fact~\ref{fact:prev-succ} and the construction of $A$, the
    element $e_u$ belongs to $A$ and is associated with color $i$.
    
    Suppose, for the sake of contradiction, that $Prev[i]\neq e_u$.
    Consider the iteration in which $e_u$ is processed. Since $e_u$ is associated with color $i$, if the enumeration reached $i$ while $Prev[i]=null$, the algorithm would set $Prev[i]=e_u$.
    Therefore, either $Prev[i]$ was already non-null when $i$ was reached, or the enumeration stopped earlier at some color $j$ whose entry was already non-null. In the former case, let $j=i$; in the latter case, let $j$ be the color at which the enumeration stopped. In both cases, $j$ is associated with $e_u$ and $Prev[j]=e'\neq e_u$ for some element $e'$ processed before $e_u$.

    Since $e'$ was processed before $e_u$ in the decreasing-rank scan
    of $A$, we have $e_u<e'$. By Lemma~\ref{lem:prev-all}, we have $e_u=\EPREV_j(e)$.
    Let $r$ be the head of $e_u$ if $e_u$ is an ordinary edge, and
    let $r=s_i=s_j$ otherwise. By
    Definition~\ref{def:prev-succ}, $r$ is a cardinal ancestor of $u$.

    Write $e'=(a,b)$. Since $e'\in A$, there exists an unpebbled path
    $\pi$ from $b$ to $u$. Moreover, $Prev[j]=e'$ implies that $e'$
    is associated with color $j$. Therefore, $h_j$ traverses both
    $e_u$ and $e'$ in this order. Let $\rho$ be the subpath of $h_j$
    from $r$ to $b$. The concatenation $\rho\cdot\pi$ is a path from
    $r$ to $u$.

    Since $r$ is a cardinal ancestor of $u$, this path is unique.
    Consequently, $b$ is also a cardinal ancestor of $u$.
    Hence, $e'$ is an edge of $\mt G_j$ whose head is a cardinal
    ancestor of $u$. Furthermore, by the construction of $A$ and the
    topological order, $e_u<e'<e$.
    Thus, the head of $e'$ is a cardinal ancestor of $u$ with
    rank larger than that of $e_u$, contradicting
    $e_u=\EPREV_j(e)$.

    Therefore, $Prev[i]=\EPREV_i(e)$. Symmetrically, we obtain
    $Succ[i]=\ESUCC_i(e)$.
\end{proof}

We can now prove the correctness of the algorithm.
If the query edge $e=(u, v)$ is pebbled, then, by saturation, $\col(e)=\mt G(e)$, so the answer returned directly by the algorithm is correct. Hence, assume that $e$ is unpebbled.

Consider any color $i\in \col(e)$.
We show that $i$ is added to the output in Step~3.
By Fact~\ref{fact:prev-succ} and the construction of $A$ and $D$, both
$\EPREV_i(e)$ and $\ESUCC_i(e)$ are included in $A$ and $D$, respectively. By Lemma~\ref{lem:edge-sat-corr}, we have ${Prev}[i]=\EPREV_i(e)$ and ${Succ}[i]=\ESUCC_i(e)$ at the end of Step~2.
By Lemma~\ref{lem:card-colors}, these two edges are consecutive in the augmented guardian
of $h_i$. Therefore, Step~3 adds $i$ to the output.

Let $i$ be any color added to the output in Step~3.
Next, we prove that $i\in \col(e)$.
Let $e_u=Prev[i]$ and $e_v=Succ[i]$. Both entries are non-null and are consecutive in the augmented guardian of $h_i$, by the condition in Step~3.

Let $r$ be the head of $e_u$ if $e_u$ is an ordinary edge, and let
$r=s_i$ if $e_u=(s_i,s_i)$. Since $e_u\in A$, by the construction of
$A$ there exists an unpebbled directed path $\pi_u$ from $r$ to $u$.
Similarly, let $z$ be the tail of $e_v$ if $e_v$ is an ordinary edge,
and let $z=t_i$ if $e_v=(t_i,t_i)$. Since $e_v\in D$, by the construction
of $D$ there exists an unpebbled directed path $\pi_v$ from $v$ to $z$.

As a result, the concatenation $\pi_u\cdot e\cdot\pi_v$ is a directed path from $r$ to $z$ that traverses the query edge $e$. On the other hand, since $e_u$ and $e_v$ are consecutive in the augmented guardian of $h_i$, the unique path condition guarantees that there exists a unique directed path from $r$ to $z$. The corresponding
subpath of $h_i$ is such a path. It must therefore coincide with $\pi_u\cdot e\cdot\pi_v$, and hence $e$ belongs to $h_i$. Thus, $i\in\col(e)$.

Therefore, the set returned by the algorithm is exactly
$\col(e)$.

\paragraph*{Complexity.}

In Step 1, the graph traversal and the construction of $B$ cost 
$\tilde{\mt O}(|\kappa(e)| + |\partial(e)| + |\mt H|)$ time. In Step 2, the total time spent for array
$Prev$ is proportional to the number $r$ of reads and $w$ of writes. Note that there is at most
one write per color (from ${null}$ to some value) and at most $w+|\mt A|$ reads
in total (one before every write and possibly an extra one per edge in $\mt A$ to stop the enumeration),
thus a total of $\tilde{\mt O}(|\mt A| + |\mt H|)$ time.
In the same way, the time spent for array $Succ$ is $\tilde{\mt O}(|\mt D| + |\mt H|)$. In total,
the time for Step 2 is $\tilde{\mt O}(|\partial(e)| + |\mt H|)$. Finally, Step 3 requires
$\tilde{\mt O}(|\mt H|)$ time.

Therefore, the entire algorithm runs in $\tilde{\mt O}(|\kappa(e)|+|\partial(e)|+|\mt H|)$.
The working space requires $O\bigl((|\kappa(e)|+|\partial(e)|+|\mt H|)\log |V|\bigr)$ bits.
This proves Theorem~\ref{thm:saturated-edge-query}.

\begin{theorem}\label{thm:saturated-edge-query}
Let $\mathcal{G}=(V,E,\mt H)$ be a variation graph, and let $\mt P$ be a
saturated pebbling of $\mathcal{G}$. In addition to the representation
of the underlying DAG, there exists a data structure using
$O(|E|+|\mt P|\log|\mt H|)$ bits that supports an edge query for any
$e\in E$ as follows:
\begin{itemize}
    \item if $e$ is pebbled, then $\col(e)$ can be reported in
    $O(\log\log |V|+|\col(e)|)$ time;
    \item otherwise, $\col(e)$ can be reported in
    $\widetilde{O}(|\kappa(e)|+|\partial(e)|+|\mt H|)$ time,
\end{itemize}
where $\kappa(e)$ and $\partial(e)$ are defined with respect to the
pebbled edges of $\mt P$. In the latter case, the query requires $O\bigl((|\kappa(e)|+|\partial(e)|+|\mt H|)\log |V|\bigr)$
bits of additional working space.
\end{theorem}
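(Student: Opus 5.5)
The proof assembles results already established in this section. The data structure is the one of Lemma~\ref{lem-ds-pebbles}, stored on top of the representation of Proposition~\ref{prop-graph-rep}; this gives the $O(|E|+|\mt P|\log|\mt H|)$-bit bound. The query splits into two cases, according to whether $e$ is pebbled.

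\emph{Pebbled case.} Query (a) of Lemma~\ref{lem-ds-pebbles} decides in $O(\log\log|V|)$ time whether $\ms G(e)\neq\emptyset$. If so, saturation gives $\col(e)=\ms G(e)$. Query (b) then lists $\ms G(e)$ in $O(\log\log|V|+|\ms G(e)|)=O(\log\log|V|+|\col(e)|)$ time.

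\emph{Unpebbled case.} We run the three-step algorithm above.
\begin{itemize}
\item Correctness: by Lemma~\ref{lem:edge-sat-corr}, for every $i\in\col(e)$ we get $Prev[i]=\EPREV_i(e)$ and $Succ[i]=\ESUCC_i(e)$. Lemma~\ref{lem:card-colors} then shows that these two edges are consecutive in the augmented guardian, so every such $i$ is reported. Conversely, if $i$ is reported, the uniqueness argument given after Lemma~\ref{lem:edge-sat-corr} forces $e\in h_i$.
\item Step 1 (time): the traversals only cross unpebbled edges, so they visit exactly the vertices and edges of $G(e)$. The pebbled edges they merely inspect are boundary edges, since their endpoint in $G(e)$ is an ancestor of $u$ or a descendant of $v$. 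Each inspection is a query (a) of Lemma~\ref{lem-ds-pebbles}, so Step 1 costs $\tilde O(|\kappa(e)|+|\partial(e)|)$. Adding the fictitious edges $(s_i,s_i)$ and $(t_i,t_i)$ costs $O(|\mt H|)$ overall, provided we store per vertex the lists of colors starting and ending there. These lists fit in the claimed space, or can be recovered from the guardians by query (d).
\item Step 2 (time): sorting $A$ and $D$ by rank costs $\tilde O(|\partial(e)|+|\mt H|)$. The enumeration is charged by the amortization argument: at most one write per color, plus one aborted read per element.
\item Step 3 (time): consecutiveness is checked with queries (c) and (e), in $\tilde O(1)$ per color.
\item Working space: the visited set, $A$, $D$, $Prev$ and $Succ$ hold $O(|\kappa(e)|+|\partial(e)|+|\mt H|)$ entries, each of $O(\log|V|)$ bits. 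The visited set should be a hash set, not an array of size $|V|$.
\end{itemize}

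The main obstacle is Step 2's charging. When an enumeration of $\ms G(e')$ stops at an already assigned color, we must be sure no unassigned color of $e'$ is skipped that would later matter. Lemma~\ref{lem:edge-sat-corr} handles this for colors in $\col(e)$ via Lemma~\ref{lem:prev-all}. Colors not in $\col(e)$ may receive wrong values, but Step 3 rejects them. So the bound of one read per element plus one per write is legitimate.
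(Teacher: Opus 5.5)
Your proof follows the paper's argument step for step. It uses the structure of Lemma~\ref{lem-ds-pebbles}, handles the pebbled case via saturation and query (b), gets correctness from Lemma~\ref{lem:edge-sat-corr}, Lemma~\ref{lem:card-colors} and the uniqueness argument, and uses the same ``one write per color, one aborted read per element'' count for Step~2. Requiring a dictionary rather than a $|V|$-size visited array is a useful detail the paper leaves implicit. One minor correction: the traversals visit only part of $G(e)$, not exactly $G(e)$, but only the upper bound matters.

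One justification is wrong. You say the per-vertex lists of colors $i$ with $s_i=y$ (resp.\ $t_i=y$) either fit in $O(|E|+|\mt P|\log|\mt H|)$ bits or can be recovered via query (d). Neither holds.
\begin{itemize}
\item Query (d) only returns edges of $\mt G_i$, and $s_i,t_i$ are not determined by $\mt G_i$; an empty guardian carries no information at all.
\item For the space, take a directed path on $n$ vertices and let $\mt H$ be all its $\binom{n}{2}$ subpaths. Then $\mt P=\emptyset$ is a saturated pebbling, so the budget is $O(|E|)$ bits, while your lists hold $|\mt H|=\Theta(|E|^2)$ entries.
\end{itemize}

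Some assumption on the endpoints is unavoidable. On that same path graph, $\mt H$ can be any of $2^{\Omega(|E|^2)}$ families of subpaths, all admitting $\mt P=\emptyset$, and the edge-query answers tell them apart. The paper tacitly treats $s_i$ and $t_i$ as available, exactly as Algorithm~\ref{algo:path-query} does, and you should state this assumption explicitly rather than claim the lists come for free.

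Under that assumption you do not need the lists at all. Every $(s_i,s_i)$ is ranked below all ordinary edges of $A$, so process the ordinary elements first. Then scan all colors and set $Prev[i]=(s_i,s_i)$ whenever $Prev[i]=\mathtt{null}$ and $s_i$ was visited; do the same for $Succ$ with $t_i$. This costs $\tilde O(|\mt H|)$, which is already within the time bound.
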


\section{Future Work}
\label{sect-conclusion}

Several promising directions remain for further investigation.
While the present framework assumes that the underlying graph is a DAG, many pangenome graphs contain cycles or more complex bidirectional structures. Extending our problem definitions to general graphs is therefore a natural direction for future work.
Another important extension is to generalize edge queries on subpaths. Specifically, given an arbitrary subpath of the graph, the goal is to identify the set of predefined paths that traverse it. Notably, this functionality is naturally supported by the GBWT.
It would also be interesting to investigate whether tractable instances of \MINWP{} exist, despite its hardness in the general case.
Finally, one could explore alternative criteria for defining guardians beyond the unique path condition, e.g., unique \emph{minimum} paths 
or, more generally, unique paths satisfying a given property, opening up the possibility to encode (in the model) specific knowledge
about the structure of the input graphs.

\section*{Acknowledgements}
The authors are thankful to Giovanni Buzzega, Giulia Punzi, and Nadia Pisanti for fruitful discussions during the early stages of this project.
P.B., G.D.V., Y.G. and B.R., have received funding from the grant MIUR 2022YRB97K, PINC, Pangenome Informatics: from Theory to Applications, funded by the EU, Next-Generation EU, Mission 4.
P.B., G.D.V. and Y.G. are also supported by Commissione Europea - Fostering Excellence in Advanced Genomics and Proteomics Research at Comenius University in Bratislava - FORGENOM II.

%%
%% Bibliography
%%

%% Please use bibtex, 

\bibliography{colored-pebbling-main}

@InProceedings{Milani2024,
author="Milani, Marcelo Garlet",
editor="Soto, Jos{\'e} A.
and Wiese, Andreas",
title="Directed Ear Anonymity",
booktitle="LATIN 2024: Theoretical Informatics",
year="2024",
publisher="Springer Nature Switzerland",
address="Cham",
pages="77--97",
isbn="978-3-031-55601-2"
}

@InProceedings{Dentiet-al2026,
author="Denys Andrukhovskyi and Martin Madzin and Luca Denti and Tomas and Brona",
editor="Nadia El-Mabrouk and Fabio Vandin",
title="Efficient Algorithms for Pangenome Personalization",
booktitle="WABI, 2026",
year="2026",
publisher="Leibniz International Proceedings in Informatics",
pages="9--16",
}

@article{baetz2014brooks,
  title={Brooks' vertex-colouring theorem in linear time},
  author={Baetz, Bradley and Wood, David R},
  journal={arXiv preprint arXiv:1401.8023},
  year={2014}
}

@inproceedings{brooks1941colouring,
  title={On colouring the nodes of a network},
  author={Brooks, Rowland Leonard},
  booktitle={Mathematical Proceedings of the Cambridge Philosophical Society},
  volume={37},
  number={2},
  pages={194--197},
  year={1941},
  organization={Cambridge University Press}
}

@inproceedings{miRCCW98,
  author       = {Venkatesan Guruswami and
                  C. Pandu Rangan and
                  Maw{-}Shang Chang and
                  Gerard J. Chang and
                  C. K. Wong},
  editor       = {Juraj Hromkovic and
                  Ondrej S{\'{y}}kora},
  title        = {The Vertex-Disjoint Triangles Problem},
  booktitle    = {Graph-Theoretic Concepts in Computer Science, 24th International Workshop,
                  {WG} '98, Smolenice Castle, Slovak Republic, June 18-20, 1998, Proceedings},
  series       = {Lecture Notes in Computer Science},
  volume       = {1517},
  pages        = {26--37},
  publisher    = {Springer},
  year         = {1998},
  url          = {https://doi.org/10.1007/10692760\_3},
  doi          = {10.1007/10692760\_3},
  bibsource    = {dblp computer science bibliography, https://dblp.org}
}

@inproceedings{caceres2022sparsifying,
  title={Sparsifying, shrinking and splicing for minimum path cover in parameterized linear time},
  author={C{\'a}ceres, Manuel and Cairo, Massimo and Mumey, Brendan and Rizzi, Romeo and Tomescu, Alexandru I},
  booktitle={Proceedings of the 2022 annual ACM-SIAM symposium on discrete algorithms (SODA)},
  pages={359--376},
  year={2022},
  organization={SIAM}
}

@book{navarro2016compact,
  title={Compact data structures: A practical approach},
  author={Navarro, Gonzalo},
  year={2016},
  publisher={Cambridge University Press}
}

@techreport{BurrowsWheeler1994,
  author      = {Burrows, Michael and Wheeler, David J.},
  title       = {A Block-sorting Lossless Data Compression Algorithm},
  institution = {Digital Equipment Corporation},
  year        = {1994},
  number      = {SRC-TR-124},
  address     = {Palo Alto, CA, USA},
  month       = {May},
}

@inproceedings{fm-index-2000,
author = {Ferragina, P. and Manzini, G.},
title = {Opportunistic data structures with applications},
year = {2000},
isbn = {0769508502},
publisher = {IEEE Computer Society},
address = {USA},
booktitle = {Proceedings of the 41st Annual Symposium on Foundations of Computer Science},
pages = {390},
series = {FOCS '00}
}

@article{alanko2023themisto,
  title={Themisto: a scalable colored k-mer index for sensitive pseudoalignment against hundreds of thousands of bacterial genomes},
  author={Alanko, Jarno N and Vuohtoniemi, Jaakko and M{\"a}klin, Tommi and Puglisi, Simon J},
  journal={Bioinformatics},
  volume={39},
  number={Supplement\_1},
  pages={i260--i269},
  year={2023},
  publisher={Oxford University Press}
}

@article{siren2014indexing,
  title={Indexing graphs for path queries with applications in genome research},
  author={Sir{\'e}n, Jouni and V{\"a}lim{\"a}ki, Niko and M{\"a}kinen, Veli},
  journal={IEEE/ACM transactions on computational biology and bioinformatics},
  volume={11},
  number={2},
  pages={375--388},
  year={2014},
  publisher={IEEE}
}

@article{equi2023complexity,
  title={On the complexity of string matching for graphs},
  author={Equi, Massimo and M{\"a}kinen, Veli and Tomescu, Alexandru I and Grossi, Roberto},
  journal={ACM Transactions on Algorithms},
  volume={19},
  number={3},
  pages={1--25},
  year={2023},
  publisher={ACM New York, NY}
}

@article{baaijens2022computational,
  title={Computational graph pangenomics: a tutorial on data structures and their applications},
  author={Baaijens, Jasmijn A and Bonizzoni, Paola and Boucher, Christina and Della Vedova, Gianluca and Pirola, Yuri and Rizzi, Raffaella and Sir{\'e}n, Jouni},
  journal={Natural computing},
  volume={21},
  number={1},
  pages={81--108},
  year={2022},
  publisher={Springer}
}

@article{computational2018computational,
    author = {The Computational Pan-Genomics Consortium },
    title = {Computational pan-genomics: status, promises and challenges},
    journal = {Briefings in Bioinformatics},
    volume = {19},
    number = {1},
    pages = {118-135},
    year = {2016},
    month = {10},
    issn = {1477-4054},
}

@article{fan2024fulgor,
  title={Fulgor: a fast and compact k-mer index for large-scale matching and color queries},
  author={Fan, Jason and Khan, Jamshed and Singh, Noor Pratap and Pibiri, Giulio Ermanno and Patro, Rob},
  journal={Algorithms for Molecular Biology},
  volume={19},
  number={1},
  pages={3},
  year={2024},
  publisher={Springer}
}

@inproceedings{golynski2006rank,
  author       = {Alexander Golynski and
                  J. Ian Munro and
                  S. Srinivasa Rao},
  title        = {Rank/select operations on large alphabets: a tool for text indexing},
  booktitle    = {Proceedings of the Seventeenth Annual {ACM-SIAM} Symposium on Discrete
                  Algorithms, {SODA} 2006, Miami, Florida, USA, January 22-26, 2006},
  pages        = {368--373},
  publisher    = {{ACM} Press},
  year         = {2006},

}

@article{cairo2022safety,
  title={Safety in s-t Paths, Trails and Walks},
  author={Cairo, Massimo and Khan, Shahbaz and Rizzi, Romeo and Schmidt, Sebastian and Tomescu, Alexandru I},
  journal={Algorithmica},
  volume={84},
  number={3},
  pages={719--741},
  year={2022},
  publisher={Springer}
}

@article{siren2020haplotype,
  title={Haplotype-aware graph indexes},
  author={Sir{\'e}n, Jouni and Garrison, Erik and Novak, Adam M and Paten, Benedict and Durbin, Richard},
  journal={Bioinformatics},
  volume={36},
  number={2},
  pages={400--407},
  year={2020},
  publisher={Oxford University Press}
}

@inproceedings{siren2017indexing,
  title={Indexing variation graphs},
  author={Sir{\'e}n, Jouni},
  booktitle={2017 Proceedings of the ninteenth workshop on algorithm engineering and experiments (ALENEX)},
  pages={13--27},
  year={2017},
  organization={SIAM}
}

@article{liao2023draft,
  title={A draft human pangenome reference},
  author={Liao, Wen-Wei and Asri, Mobin and Ebler, Jana and Doerr, Daniel and Haukness, Marina and Hickey, Glenn and Lu, Shuangjia and Lucas, Julian K and Monlong, Jean and Abel, Haley J and others},
  journal={Nature},
  volume={617},
  number={7960},
  pages={312--324},
  year={2023},
  publisher={Nature Publishing Group UK London}
}

@inproceedings{onodera2013detecting,
  title={Detecting superbubbles in assembly graphs},
  author={Onodera, Taku and Sadakane, Kunihiko and Shibuya, Tetsuo},
  booktitle={International workshop on algorithms in bioinformatics},
  pages={338--348},
  year={2013},
  organization={Springer}
}

\end{document}